\newcommand{\MM}{\mathsf{M}}
\newcommand{\pp}{\mathsf{P}}
\newcommand{\e}{\mathsf{E}}
\newcommand{\rank} {\operatorname{rank}}
\newtheorem{theorem}{Theorem}
\newtheorem{lemma}[theorem]{Lemma} % Number lemma with theorems
\newtheorem{corollary}[theorem]{Corollary} % Number corollary with theorems
\theoremstyle{definition} % Definition style: upright text
\newtheorem{proposition}[theorem]{Proposition} % Propositions share numbering with theorems
\newtheorem{definition}{Definition}
\titlespacing*{\subsection}{0pt}{\baselineskip}{\baselineskip}
\begin{document}
\title{Complexity of Contextuality}
\author{Theodoros Yianni}
\affiliation{Department of Computer Science, Royal Holloway, University of London, United Kingdom}
\author{Farid Shahandeh}
\affiliation{Department of Computer Science, Royal Holloway, University of London, United Kingdom}

\begin{abstract}
Generalized contextuality is a 
hallmark of nonclassical theories like quantum mechanics.
Yet, three fundamental computational problems concerning its decidability and complexity remain open.
First, determining the complexity of deciding if a theory admits a noncontextual ontological model;
Second, determining the complexity of deciding if such a model is possible for a specific dimension $k$;
Third, efficiently computing the smallest such model when it exists, given that finding the smallest ontological model is NP-hard.
We address the second problem by presenting an algorithm derived from a geometric formulation and its reduction to the intermediate simplex problem in computational geometry.
We find that the complexity of deciding the existence of a noncontextual ontological model of dimension $k$ is at least exponential in the dimension of the theory and at most exponential in $k$.
This, in turn, implies that computing the smallest noncontextual ontological model is inefficient in general.
Finally, we demonstrate the fundamental difference between finding the smallest noncontextual ontological model and the smallest ontological model using an explicit example wherein the respective minimum ontic sizes are five and four. 
\end{abstract}

\maketitle
\section{Introduction}
Quantum contextuality is the property that a quantum experiment cannot be modelled by assigning fixed hidden variable representations to its states and outcome effects \cite{Koc,Mermin1993,Spekkens_2005}.
This peculiar property is considered to be an important resource for quantum advantages in many tasks such as quantum computation \cite{computation,shahandeh2021advantage,Schmid2022Stabilizer}, communication \cite{communication}, optimal state discrimination \cite{Schmidt2018QMESD, ourpaper}, and quantum cloning \cite{ourpaper,lostaglio2020contextual}.

Determining the size of the ontological (hidden-variable) models for quantum theory, and more broadly, for any generalized probabilistic theory (GPT), is an active field of research \cite{Korzekwa_2021, Havl_ek_2020, Montina_2008, HARDY2004267}.
Indeed, the principle of parsimony suggests that when searching for an ontological explanation of the data, the simplest model with the fewest ontic states is preferred.
In particular, a large number of ontic states necessary for a model compared to the quantum dimensionality indicates quantum advantage, an example of which can be identified in the context of classical versus quantum communication~\cite{Shitov2019}.
Notably, when a noncontextual model exists, an upper bound on its size is the square of the GPT dimension~\cite{Gitton_2022}.
The value of such bounds is that they can be used to explain nonclassical advantages of protocols under mild assumptions~\cite{Galvao2003,Montina_2008,ourpaper}.
They also enable proofs of contextuality in cases where the minimum required ontic size exceeds the minimum size of noncontextual models.
While it is possible to find the smallest noncontextual model, the currently available method to achieve this is to optimize over all solutions of a linear program \cite{Selby_2024}. 
However, this is challenging because there may be an infinite number of such solutions, rendering the optimization highly inefficient.

With the introduction of generalized contextuality \cite{Spekkens_2005}, there has been significant progress in theoretical and experimental methods to detect it~\cite{linearprogram,Schmid_2024,Gitton_2022}.
These methods generally rely on linear programs, such as the one presented in Ref.~\cite{Selby_2024}, which detects contextuality in arbitrary GPTs.
Alternatively, by considering nonnegative matrix factorizations (NMFs) of the matrix of conditional outcome probability evaluations (COPE), one arrives at a criterion for noncontextuality, as introduced in Ref.~\cite{ourpaper}. More specifically, noncontextuality implies that the ranks of the COPE decomposition factors must be equal, the violation of which is called the \textit{rank separation}. This allows us to detect contextuality by studying bounds on the ranks of the NMF factors.
Unlike other methods, the rank separation approach does not require explicit analysis of operational equivalences and operational identities.

In this work, we present a method for finding the smallest noncontextual model of the prepare-measure experiments, along with its time complexity.
We achieve this by leveraging the rank separation criterion and providing a new geometric interpretation of contextuality.
This geometric perspective is used to reduce the problem of finding the smallest noncontextual model to an NMF problem. The latter has a known complexity and can be solved with existing algorithms, such as Moitra's almost optimal algorithm~\cite{moitra}. In particular, we show that for COPE matrices of fixed rank, the existence of a noncontextual model can be decided in polynomial time. 
However, for $m \times n$ matrices of arbitrary rank $r$, the complexity is at least $(nm)^{O(r)}$ and at most $\operatorname{poly}(b,m,n)^{O(k^2)}$ wherein $b$ is the bit length of the matrix entries and $k$ is the size of the ontological model.

It is already known that the smallest (not necessarily noncontextual) ontological model of a prepare-measure scenario is given by the NMF of its COPE matrix~\cite{ourpaper},
the inner dimension $k$ of which determines the cardinality of the ontic space. 
It is also known that performing an NMF of the smallest inner dimension is NP-hard~\cite{vavasis}, i.e., computing the smallest ontological model for a prepare-measure scenario is NP-hard in the worst case.
An interesting question is thus whether the smallest ontological model and the smallest noncontextual ontological model coincide or a size gap between the two is possible.
We present an explicit example of a COPE matrix where such a gap exists. 
This means that, in general, the two classes of ontological models are different in size,
which opens up the possibility that different ontic sizes may be responsible for different types of quantum advantages.

This paper is organized as follows. In Sec.~\ref{section 2} we give an overview of the preliminary concepts and definitions, including COPE matrix formalism and its underlying models. In Sec.~\ref{section 3}, we reformulate the problem of finding the smallest noncontextual model as an NMF problem and give an upper bound for its complexity.
In Sec.~\ref{sec:comparison}, we clarify the relationship between our approach and the existing embeddability results.
Finally, in Sec.~\ref{Ontic gap} we demonstrate an example where there is a size gap between the smallest ontological and noncontextual ontological models.
Discussions and conclusions are presented in Sec.~\ref{section 5}.

In what follows, the following notation is adopted.
We use sans-serif capital Latin letters ($\mathsf{A}$) to denote experimental procedures and outcomes. The generic real matrices are denoted by capital Latin letters ($A$), while bold capital Latin letters ($\mathbf{A}$) are used to denote nonnegative matrices. 
The symbol $\mathds{1}$ represents the identity matrix.
Vectors are denoted by bold italic lowercase Latin letters ($\bm{a}$); vector spaces are denoted by Fraktur capital Latin letters ($\mathfrak{A}$); $\mathbf{1}$ is used for the vector of ones whose length is determined by the context if not explicitly mentioned.
In working with geometrical objects, calligraphic capital Latin letters ($\mathcal{A}$) are used to represent polytopes.

\section{Preliminaries} \label{section 2}

\subsection{Conditional Outcome Probability Evaluations Matrix}

The primitive elements of any physical experiment are the propositions describing each operational procedure.
Those associated with preparing the system are called \textit{preparations} and denoted by $\pp$.
For example, ``Turn the laser on AND adjust its intensity as such AND align its beam as such'' is a preparation of a photonic system.
Similarly, propositions such as ``Turn the detector on AND align it as such AND open its aperture'' associated with measurements of the system are called \textit{measurements}, and outcomes of measurements are \textit{events}.
We denote these by $\MM$ and $\e$, respectively.
For all practical purposes, we can assume that the sets of preparations $\{\pp_i\}$, measurements $\{\MM_j\}$, and events $\{\e_k\}$ are finite.
In the most general scenario, we can associate conditional probabilities to events as $p(\e_k|\pp_i,\MM_j)$ indicating the probability of event $\e_k$ in the measurement $\MM_j$ following the preparation $\pp_i$.
This data can be put into a matrix where each row represents a particular event, and each column represents a preparation,

\begin{equation}  \label{eq: C^a}
\begin{split}
    \mathbf{C}^a :&= 
\begin{pmatrix}
\begin{array}{c}
   \begin{matrix}
p(\e_{1}|\pp_1,\MM_1) & \cdots & p(\e_{1}|\pp_n,\MM_1) \\
        \vdots &   & \vdots  \\
        p(\e_{m_1}|\pp_1,\MM_1) & \cdots & p(\e_{m_1}|\pp_n,\MM_1)
   \end{matrix}\\ 
   \hline
    \vdots \\
   \hline
   \begin{matrix}
        p(\e_{1}|\pp_1,\MM_l) & \cdots & p(\e_{1}|\pp_n,\MM_l)\\
        \vdots & & \vdots \\
        p(\e_{m_l}|\pp_1,\MM_l) & \cdots & p(\e_{m_l}|\pp_n,\MM_l)
   \end{matrix} \\
\end{array}    
\end{pmatrix}\\
&=\begin{pmatrix}
    \mathbf{C}^{a1}\\
    \vdots \\
    \mathbf{C}^{a,l}
\end{pmatrix},
\end{split}
\end{equation}
where we have assumed $l$ measurements with $m_j$ outcomes each, and $n$ preparations. Each submatrix $\mathbf{C}^{ai}$ corresponds to the measurement $\MM_i$. We call the matrix of \textit{conditional outcome probability evaluations} (COPE)~\cite{ourpaper}.
We can also represent the same information about the system with another COPE as follows. Construct the measurement procedure $\mathsf{M}$ by choosing a measurement from $\{\MM_i\}$ uniformly at random. The set of events for $\MM$ is the union of the sets of events of the previous measurements.  
Thus $\mathsf{M}$ has $m=\sum_{j=1}^l m_j$ possible events, each with probability $p(\mathsf{E}_k|\pp_i)$, where we have dropped the dependence on the measurement for brevity. The COPE matrix $\mathbf{C}^{b}$ obtained from using only the measurement $\MM$ and the same set of preparations is given by,
\begin{gather}\label{eq:COPE_def}
\mathbf{C}^{b} := 
\begin{pmatrix}
        p(\e_1|\pp_1) & \cdots & p(\e_1|\pp_n) \\
        \vdots &   & \vdots  \\
        p(\e_m|\pp_1) & \cdots & p(\e_m|\pp_n)
\end{pmatrix}.
\end{gather}
This matrix is a rescaling of $\mathbf{C}^{a}$ making it column-stochastic as shown in Ref.~\cite{ourpaper}. 

\subsection{Ontological Model} \label{subsec:OM}

One may want to construct an ontological model that explains the statistics of an experiment on a system.
Such a model consists of a set of ontic states indexed by $\lambda$, that represent the different mutually exclusive definite states the system can be in. 
The set of all ontic states is called the ontic space, which is conventionally denoted by $\Lambda$ and its cardinality is $|\Lambda|$.

In general, a preparation $\pp$ is modelled in an ontological model as some epistemic state $\bm{\mu}^{(\pp)}$.
An epistemic state is a stochastic vector $(\mu_\lambda)_\lambda$, where $\mu_\lambda$ is the probability of the system being in the ontic state $\lambda$.
A measurement, on the other hand, is given by a collection of response functions $\{\bm{\xi}^{i}\}$ where each $\bm{\xi}^{i}$ corresponds to a different event.
The response functions for a measurement are vectors with entries in $[0,1]$. The response functions sum to the unit response function $\mathbf{1}$.

The probability of the event $\e_k$ corresponding to the response function $\bm{\xi}^{k}$ of $\MM$ given that the preparation $\pp$ gives $\lambda$ as a complete ontic description of the state of the system is,
\begin{equation}
    p(\e_k|\pp)=\langle \bm{\xi}^{k},\bm{\delta}^{\lambda}\rangle.
\end{equation}
Here, $\bm{\delta}^{\lambda}=(\delta^{\lambda}_{\lambda'})_{\lambda'}$ with $\delta^i_j$ being the Kronecker delta, and $\langle\cdot,\cdot\rangle$ is the inner product over $\mathbb{R}^{|\Lambda|}$.
If, instead, our knowledge of the true state of the system is limited such that we can only describe it as an epistemic state $\bm{\mu}^{(\mathrm{P})}$, the probability of the same event $\e_k$ conditioned on this preparation of the system is given by,
\begin{equation}
    p(\e_k|\pp)=\langle \bm{\xi}^{k}, \bm{\mu}^{(\mathrm{P})}\rangle.
\end{equation}
The stochastic property of the epistemic state and the unit response function guarantee that $\langle \bm{\mu}^{(\mathrm{P})},\mathbf{1}\rangle=1$.

An ontological model of a COPE $\mathbf{C}^a$ composed of $l$ column-stochastic submatrices as in Eq.~\eqref{eq: C^a} is given by an NMF of the form $\smash{\mathbf{C}^a=\mathbf{RE}}$, where the factors $\mathbf{R}$ and $\mathbf{E}$ are entry-wise nonnegative.
If $\mathbf{R}$ and $\mathbf{E}$ are column sum-$l$ and column-stochastic, respectively, the columns of $\mathbf{E}$ would have the form of epistemic states, and the rows of $\mathbf{R}$ sum to $l\mathbf{1}$.
It can always be assumed that these conditions are met because, if not, they can be achieved using a diagonal scaling as
\begin{equation}
            \mathbf{C}^{a}=\mathbf{R}'\mathbf{E}'
    =(\mathbf{R}'\mathbf{D}^{-1}) (\mathbf{D}\mathbf{E}')
    =\mathbf{R}\mathbf{E},
\end{equation}
where $\mathbf{D}^{-1}$ is a diagonal matrix that rescales the columns of $\mathbf{R}'$, and we assume $\mathbf{C}^{b}$ has no zero rows or columns.
Since $\mathbf{C}^{a}$ and $\mathbf{R}$ are column sum-$l$, $\mathbf{E}$ is column-stochastic.
The nonnegative rank (NNR) of $\mathbf{C}^{a}$, i.e., the smallest inner dimension $k$ at which an NMF is possible, determines the size of the smallest ontological model of $\mathbf{C}^{a}$.
Notice that by simply rescaling $\mathbf{R}$, we also get an ontological model of $\mathbf{C}^{a}$ with the same ontic size, which is why both forms of the COPE matrix can be used to describe the same system.

\subsection{Noncontextual Ontological Model}
The statistics of any physical system can be reproduced by an ontological model. However, a natural criterion for a system to be considered \textit{classically} explainable is that such a model is \textit{noncontextual}. In this work, we adopt the concept of generalized \textit{noncontextuality}, as introduced in Ref.~\cite{Spekkens_2005}.
This framework relies on the operational equivalence of experimental procedures to define contextuality.

The operational equivalence $(\simeq)$ of preparations (measurement events) means that there are no measurement events (preparations) in the operational theory that can distinguish them~\cite{Spekkens_2005}, i.e.,
\begin{equation}
\begin{split}
& \pp_i \simeq \pp_j \; \text{iff} \; p(\e_k| \pp_i, \MM) = p(\e_k|\pp_j, \MM) \; \forall \e_k, \MM, \\
& \e_k \simeq \e_l \; \text{iff} \; p(\e_k| \pp, \MM) = p(\e_l|\pp, \MM) \; \forall \pp.
\end{split}
\end{equation}
Motivated by Leibniz's principle of the (ontological) identity of empirical indiscernibles~\cite{spekkens2019ontologicalidentityempiricalindiscernibles}, we may require any model of the data to satisfy the \textit{broad noncontextuality hypothesis}~\cite{Shahandeh2021}.
The latter suggests that our models of physical phenomena should rely on equivalence classes instead of individual contexts.
Then, we can define the preparation (measurement) contextuality of an ontological model as the property that there are at least two different preparations (measurement events) that are operationally equivalent but have different ontic representations.
Conversely, in a noncontextual model, operationally equivalent procedures are represented by the same function over the ontic space, i.e.,
\begin{equation}\label{eq:ctxt_def}
\begin{split}
    & \pp_1\cong\pp_2 \Leftrightarrow \bm{\mu}^{(\pp_1)} = \bm{\mu}^{(\pp_2)},\\
    & \e_1 \cong \e_2 \Leftrightarrow \bm{\xi}^{(\e_1)} = \bm{\xi}^{(\e_2)},    
\end{split}
\end{equation}
where $\bm{\mu}^{(\pp_i)}$ is the epistemic state of the system prepared by $\pp_i$, and $\bm{\xi}^{(\e_i)}$ is the response function corresponding to the event $\e_i$ ($i=1,2$).

Let us show how these definitions translate to the COPE picture via an example.
Consider the COPE matrix of a GPT known as the \textit{box world},
\begin{equation}
    \begin{split}
    \mathbf{C}&=\begin{pmatrix}
        1 & 1 & 0 & 0\\
        0 & 0 & 1 & 1\\
        1 & 0 & 1 & 0\\
        0 & 1 & 0 & 1
    \end{pmatrix}.
    \end{split}
\end{equation}
It can easily be checked that an NMF of this matrix with inner dimension three does not exist.
Hence, it does not admit an ontological model of dimension three or less.
The smallest ontological model possible is thus given as
\begin{equation}
\begin{split}
    \mathbf{C}=
    \begin{pmatrix}
        1 & 1 & 0 & 0\\
        0 & 0 & 1 & 1\\
        1 & 0 & 1 & 0\\
        0 & 1 & 0 & 1
    \end{pmatrix}
    \begin{pmatrix}
        1 & 0 & 0 & 0\\
        0 & 1 & 0 & 0\\
        0 & 0 & 1 & 0\\
        0 & 0 & 0 & 1
    \end{pmatrix}
    =\mathbf{RE},
\end{split}
\end{equation}
which shows a trivial NMF as $\mathbf{E}=\mathds{1}$.
However, this model is contextual model because there are two operationally indistinguishable preparations, which are represented with two different vectors in the ontological model. Specifically, the average of the first and last column of the COPE matrix and the average of its second and third column are equal.
However, the average of the first and last column of the epistemic state matrix $\mathbf{E}$ is not equal to the average of its second and third column.
Therefore, at least two epistemic states exist in the model that the response functions cannot distinguish.
The latter is formally reflected in the rank separation of the matrices, that is, $\rank\mathbf{R}\neq\rank\mathbf{E}$.

As shown in the example above, the representations of the response functions (epistemic states) in a model might not span the full space of ontic states.
Hence, in a contextual model, there can be two epistemic states (response functions) whose difference is orthogonal to the span of the response functions (epistemic states), so that they cannot be distinguished.
This leads us to the following lemma and corollary from Ref.~\cite{ourpaper}.
\begin{lemma}
    An ontological model is preparation (measurement) noncontextual if and only if its set of response functions (epistemic states) separates the points in the linear span of its epistemic states (response functions).
\end{lemma}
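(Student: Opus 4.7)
The strategy is to rewrite both conditions of the lemma as statements in the inner-product space $\mathbb{R}^{|\Lambda|}$ via the dictionary $p(\e_k|\pp) = \langle \bm{\xi}^{k}, \bm{\mu}^{(\pp)}\rangle$, and then recognise that ``operationally equivalent preparations share an epistemic state'' is literally the same as ``the response functions have trivial common annihilator on the linear span of epistemic states.'' I handle the preparation case; the measurement case follows by a dual argument.

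I would first dispatch the easy direction. Assume the response functions separate the points of $\mathfrak{V}:=\mathrm{span}\{\bm{\mu}^{(\pp_i)}\}$ and take any $\pp_1 \simeq \pp_2$. Translating this operational equivalence into the ontic model gives $\langle \bm{\xi}^{k}, \bm{\mu}^{(\pp_1)} - \bm{\mu}^{(\pp_2)}\rangle = 0$ for every response function $\bm{\xi}^k$ in the model. Since both $\bm{\mu}^{(\pp_1)}$ and $\bm{\mu}^{(\pp_2)}$ lie in $\mathfrak{V}$, the separation hypothesis forces $\bm{\mu}^{(\pp_1)} = \bm{\mu}^{(\pp_2)}$, which is exactly condition~\eqref{eq:ctxt_def} of preparation noncontextuality.

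For the converse I would proceed contrapositively. Suppose the response functions fail to separate some pair of distinct points $\bm{v}_1,\bm{v}_2 \in \mathfrak{V}$ and exhibit an explicit witness to contextuality. Writing $\bm{u}:=\bm{v}_1-\bm{v}_2=\sum_i \alpha_i\,\bm{\mu}^{(\pp_i)} \neq 0$, I split the sum into positive- and negative-coefficient parts, $\bm{u}=\bm{A}-\bm{B}$ with $\bm{A}=\sum_{\alpha_i>0}\alpha_i\,\bm{\mu}^{(\pp_i)}$ and $\bm{B}=\sum_{\alpha_i<0}|\alpha_i|\,\bm{\mu}^{(\pp_i)}$. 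The key observation is that, for every measurement in the model, the sum of its event response functions equals $\mathbf{1}$, so $\mathbf{1}$ lies in the span of response functions annihilating $\bm{u}$, giving $\langle\mathbf{1},\bm{u}\rangle=0$. Because each $\bm{\mu}^{(\pp_i)}$ is stochastic, this forces $\mathbf{1}^{\mathsf{T}}\bm{A}=\mathbf{1}^{\mathsf{T}}\bm{B}=:s>0$. Normalising, $\bm{A}/s$ and $\bm{B}/s$ are convex combinations of $\{\bm{\mu}^{(\pp_i)}\}$, and therefore epistemic states of two preparations $\pp_+,\pp_-$ realised as convex mixtures of the $\pp_i$'s. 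By construction $\langle \bm{\xi}^k, \bm{A}/s\rangle=\langle \bm{\xi}^k, \bm{B}/s\rangle$ for every $k$, so $\pp_+\simeq\pp_-$, yet $\bm{A}/s\neq\bm{B}/s$ since $\bm{u}\neq 0$; this pair violates preparation noncontextuality.

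The measurement case proceeds by swapping the roles of $\bm{\xi}^k$ and $\bm{\mu}^{(\pp_i)}$ throughout, using the bilinearity of the pairing. The main obstacle I anticipate is the bookkeeping of the positive/negative split: ensuring that the constructed objects remain legitimate procedures of the operational theory. For preparations, this is handled by the $\mathbf{1}$-in-span observation above, which makes the two halves carry equal total weight and hence normalise to genuine epistemic states of convex mixtures $\pp_\pm$. For measurements, the analogous split of response functions may exceed $[0,1]^{|\Lambda|}$, but rescaling both halves by the same small positive constant $t\le 1$ yields legitimate coarse-grained effects without affecting the equivalence $\langle \bm{\mu}^{(\pp)},\bm{w}^+\rangle=\langle \bm{\mu}^{(\pp)},\bm{w}^-\rangle$, so the same contradiction with measurement noncontextuality is obtained.
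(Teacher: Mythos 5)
Your proof is correct, and it fills in the details behind the one-paragraph heuristic the paper gives just before this lemma (the observation that a contextual model corresponds to two epistemic states whose difference is orthogonal to every response function). The paper itself does not prove the lemma---it imports it from Ref.~\cite{ourpaper}---so there is no in-text proof to compare against, but your argument is exactly the natural filling-out of the paper's sketch: the easy direction is the orthogonality observation read forward, and the contrapositive direction constructs an explicit operational-equivalence witness.

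Two remarks worth flagging explicitly. First, the converse direction silently invokes two standard but nontrivial assumptions of the Spekkens framework: that the operational theory is closed under probabilistic mixtures of preparations (and probabilistic post-processings of measurements), and that the ontological model is convex-linear, i.e.\ represents a mixture $\sum_i p_i \pp_i$ by the mixture $\sum_i p_i \bm{\mu}^{(\pp_i)}$. Without these the annihilated vector $\bm{u}$ in the span need not produce a witness pair drawn from procedures the theory actually contains, and the equivalence would be false; you use convex-linearity when you identify $\bm{A}/s$, $\bm{B}/s$ as the epistemic states of the mixtures $\pp_\pm$, so it deserves to be stated. Second, your use of $\langle\mathbf{1},\bm{u}\rangle=0$ to guarantee the two halves normalise with equal weight $s$ is the crux of the preparation case, and it is worth spelling out that $s>0$: if $s=0$ then both nonnegative combinations vanish and $\bm{u}=0$, contradicting the assumed failure of separation. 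For the measurement side, your rescaling argument is fine, but the legitimate-effect check should be made quantitative: choosing $t \le \min\{(\sum_{\beta_i>0}\beta_i)^{-1},(\sum_{\beta_i<0}|\beta_i|)^{-1}\}$ ensures both $t\bm{w}^+$ and $t\bm{w}^-$ are realisable as post-processed binary measurements, and using the \emph{same} $t$ for both preserves the equality $\langle\bm{\mu}^{(\pp)},t\bm{w}^+\rangle=\langle\bm{\mu}^{(\pp)},t\bm{w}^-\rangle$.
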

\begin{corollary}
    A noncontextual operational theory with a COPE $\mathbf{C}$ admits an NMF $\mathbf{C}=\mathbf{RE}$ such that
\begin{equation}\label{eq:nctxt_rank}
   \rank{\mathbf{C}}=\rank{\mathbf{R}}=\rank{\mathbf{E}}.
\end{equation}
\end{corollary}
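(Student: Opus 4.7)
The plan is to read off the ranks directly from the factorization $\mathbf{C}=\mathbf{R}\mathbf{E}$ associated to a noncontextual ontological model, using the characterisation of noncontextuality as a separation property supplied by the preceding lemma. The key observation is that for any factorization $\mathbf{C}=\mathbf{R}\mathbf{E}$ one automatically has $\rank\mathbf{C}\le\min(\rank\mathbf{R},\rank\mathbf{E})$, so the real content is to establish the reverse inequalities $\rank\mathbf{C}\ge\rank\mathbf{E}$ and $\rank\mathbf{C}\ge\rank\mathbf{R}$ from preparation and measurement noncontextuality, respectively.

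First I would fix a noncontextual ontological model of $\mathbf{C}$, with columns of $\mathbf{E}$ being the epistemic states $\bm{\mu}^{(\pp_i)}$ and rows of $\mathbf{R}$ being the response functions $\bm{\xi}^{(\e_k)}$. Interpret $\mathbf{R}$ as the linear map $\mathbb{R}^{|\Lambda|}\to\mathbb{R}^m$ whose $k$-th coordinate is $\langle\bm{\xi}^{(\e_k)},\cdot\rangle$. The columns of $\mathbf{C}$ are precisely the images under $\mathbf{R}$ of the columns of $\mathbf{E}$, so
\begin{equation}
    \rank\mathbf{C}=\dim\mathbf{R}\bigl(\operatorname{col}\mathbf{E}\bigr).
\end{equation}
Preparation noncontextuality, via the lemma, says that the response functions separate points in $\operatorname{col}\mathbf{E}$, i.e.\ $\mathbf{R}$ restricted to $\operatorname{col}\mathbf{E}$ has trivial kernel. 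Hence this restriction is a linear isomorphism onto its image, yielding $\rank\mathbf{C}=\dim\operatorname{col}\mathbf{E}=\rank\mathbf{E}$.

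The measurement-noncontextual half is the transpose of this argument. Writing $\mathbf{C}^{T}=\mathbf{E}^{T}\mathbf{R}^{T}$ and reading rows of $\mathbf{R}$ as vectors in $\mathbb{R}^{|\Lambda|}$, the rows of $\mathbf{C}$ are the images of these response functions under the map $\mathbf{E}^{T}\colon\mathbb{R}^{|\Lambda|}\to\mathbb{R}^n$ defined by inner products with the epistemic states. Measurement noncontextuality says that the epistemic states separate points in the row span of $\mathbf{R}$, so $\mathbf{E}^T$ restricted to $\operatorname{row}\mathbf{R}$ is injective, giving $\rank\mathbf{C}=\rank\mathbf{R}$. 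Combining the two equalities yields \eqref{eq:nctxt_rank}.

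I do not expect any serious obstacle here: the content of the corollary is essentially a rephrasing of the lemma in matrix-algebraic terms, and the only care needed is to keep straight which linear map acts on which subspace (the restriction of $\mathbf{R}$ to $\operatorname{col}\mathbf{E}$, and of $\mathbf{E}^T$ to $\operatorname{row}\mathbf{R}$) so that the separation hypothesis gives injectivity rather than the weaker statement that the image has full rank in the ambient space. A minor bookkeeping point is ensuring that the NMF one starts with is in the column-stochastic normal form discussed in Sec.~\ref{subsec:OM}, so that its columns and rows are literally epistemic states and response functions; the diagonal rescaling used there does not affect any of the three ranks, so the conclusion transfers back to an arbitrary NMF associated with the noncontextual model.
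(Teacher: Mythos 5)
Your argument is correct and is essentially the only natural route from the preceding lemma to the corollary: preparation noncontextuality makes the restriction of $\mathbf{R}$ to $\operatorname{col}\mathbf{E}$ injective (hence $\rank\mathbf{C}=\rank\mathbf{E}$), and measurement noncontextuality gives the transposed statement (hence $\rank\mathbf{C}=\rank\mathbf{R}$). The paper itself does not reproduce a proof here (it cites the corollary and lemma from an earlier reference), but your argument matches the intended derivation, and your closing bookkeeping remark about the diagonal rescaling preserving all three ranks is exactly the right thing to check.
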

We call such a factorization
a \textit{equirank nonnegative matrix factorization} (ENMF).
Furthermore, we define the \textit{equirank nonnegative rank} (ENNR) of $\mathbf{C}$ as the smallest inner dimension where an ENMF is possible, which is equal to the smallest possible ontic size of a noncontextual model for $\mathbf{C}$.

The advantage of recasting noncontextuality in terms of matrix factorizations is that it does not require operational equivalences and operational identities to be explicitly stated.
It also allows us to marry computational problems concerning contextuality with problems and results from linear algebra and computational complexity.
For example, we obtain our first complexity result as follows.
\begin{corollary} \label{cor:lower bound}
    Assuming the Exponential Time Hypothesis, the complexity of computing the size of the smallest noncontextual model of an $n\times m$ COPE of rank $r$ is at least $(nm)^{O(r)}$.
\end{corollary}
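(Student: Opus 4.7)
The plan is to reduce the problem of computing the nonnegative rank of a rank-$r$ matrix to the problem of computing the ENNR of a rank-$r$ matrix, and then invoke the known ETH-based hardness of the former. The main observation is that, for matrices whose rank equals their nonnegative rank, being an NMF and being an ENMF coincide, so the hardness transfers immediately.

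First, I would cite the result of Arora, Ge, Kannan and Moitra stating that, under the Exponential Time Hypothesis, deciding whether the nonnegative rank of an $n\times m$ nonnegative matrix equals its (ordinary) rank $r$ requires time $(nm)^{\Omega(r)}$. Second, I would prove the following simple dichotomy for rank-$r$ matrices: if $\mathbf{C}=\mathbf{R}\mathbf{E}$ is any NMF with inner dimension exactly $r$, then $\mathbf{R}\in\mathbb{R}^{n\times r}_{\geq 0}$ and $\mathbf{E}\in\mathbb{R}^{r\times m}_{\geq 0}$, so $\rank\mathbf{R},\rank\mathbf{E}\leq r$; combined with $r=\rank\mathbf{C}=\rank(\mathbf{R}\mathbf{E})\leq\min(\rank\mathbf{R},\rank\mathbf{E})$, this forces $\rank\mathbf{R}=\rank\mathbf{E}=r$. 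Hence every inner-dimension-$r$ NMF of a rank-$r$ matrix is automatically an ENMF in the sense of Eq.~\eqref{eq:nctxt_rank}, so for such matrices the nonnegative rank equals $r$ if and only if the equirank nonnegative rank equals $r$.

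Third, I would conclude: any algorithm that outputs the ENNR of $\mathbf{C}$ in particular decides whether $\mathrm{ENNR}(\mathbf{C})=r$, which by the previous step is exactly the promise problem "is $\mathrm{NNR}(\mathbf{C})=\rank\mathbf{C}$?" on rank-$r$ inputs. Applying the cited lower bound immediately yields the claimed $(nm)^{O(r)}$ (i.e.\ $(nm)^{\Omega(r)}$) time lower bound under ETH.

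The main obstacle I anticipate is checking that the hardness result applies within the COPE class rather than for arbitrary nonnegative matrices: COPE matrices are column-stochastic, and an ontological model additionally requires that the corresponding NMF factors be scaled to column-stochastic form (Sec.~\ref{subsec:OM}). I would address this by noting that any nonnegative matrix used in the Arora--Ge--Kannan--Moitra reduction can be normalized to a column-stochastic matrix by the same diagonal rescaling argument given after Eq.~\eqref{eq:COPE_def}, which preserves both the rank and the nonnegative rank, so the hard instances lie within the COPE class and the reduction goes through with only polynomial overhead.
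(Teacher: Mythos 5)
Your proof follows essentially the same route as the paper: both rest on the observation that any NMF of inner dimension equal to $\operatorname{rank}\mathbf{C}=r$ is automatically an ENMF, and then invoke the ETH-based lower bound of Arora et al.\ for deciding whether the nonnegative rank equals the rank. You make explicit two points the paper leaves implicit --- the rank sandwich $r=\operatorname{rank}(\mathbf{R}\mathbf{E})\leq\min(\operatorname{rank}\mathbf{R},\operatorname{rank}\mathbf{E})\leq r$, and the check that the ETH-hard instances can be brought into the column-stochastic COPE class by diagonal rescaling without changing rank or nonnegative rank --- and both additions are correct and tighten the argument.
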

\begin{proof}
    By computing the size of the smallest noncontextual model, by extension, it is decided whether an ENMF of inner dimension $r$ is possible. This is equivalent to deciding if an NMF of inner dimension $r$ is possible, because any NMF of inner dimension $r$ is also an ENMF. 
    The complexity of the NMF decision is known to be at least $(nm)^{O(r)}$~\cite{arora2011computingnonnegativematrixfactorization}, assuming the Exponential Time Hypothesis~\cite{impagliazzo2001complexity}.
\end{proof}

In this paper, we are concerned with a geometric interpretation of ENMFs and the complexity of determining the ENNR of operational theories.

\subsection{Generalized Probabilistic Theories}
A GPT is an alternative model that can be used to explain the observed outcome statistics of experiments.
It consists of a set of states $\mathcal{S}:=\{\bm{s}^{i}\}_i$ which lie in a real vector space $\mathfrak{V}$ and a set of effects $\mathcal{E}:=\{\bm{e}^{i}, \bm{u}\}_i$, with $\bm{u}$ being the unit effect, lying in the dual vector space $\mathfrak{V}^*$.
Here, the dual of a polytope is defined in the following way. Given a polytope $\mathcal{K}$ in a real inner product space $(\mathfrak{V},\langle\cdot,\cdot\rangle_{\mathfrak{V}})$, its dual $\mathcal{K}^*$ is given by,
\begin{equation}
    \mathcal{K}^*:=\{\bm{x}\in \mathfrak{V}|\langle \bm{x},\bm{s}\rangle \in [0,1] \forall \bm{s}\in\mathcal{K}\}.
\end{equation}
In finite dimensions, thanks to the Riesz theorem, there is an isomorphism between $\mathfrak{V}^*$ and $\mathfrak{V}$, so we can assume that the set of effects is also in $\mathfrak{V}$.

Within the GPT framework, a state is a representation of a class of operationally equivalent preparation procedures, and an effect is a representation of a class of operationally equivalent events.
The effects in a measurement sum to the unit effect $\bm{u}$.

Let $\bm{e}^{j}$ be the effect corresponding to the class $[\e_j]$ of events operationally equivalent to the $j$th outcome of a particular measurement $\MM$. Then the probability of observing this event given that the system was prepared using $[\pp_i]$, i.e., any preparation procedure operationally equivalent to $\pp_i$, is given by,
\begin{equation}
    p([\e_j]|[\pp_i],[\MM])=\langle \bm{e}^{j},\bm{s}^{i}\rangle.
    %p(\e_j|\bm{s}^{i})=\langle \bm{s}^{i},\bm{e}^{j}\rangle.
\end{equation}
Here, $\langle\cdot , \cdot \rangle$ refers to the inner product over $\mathfrak{V}$.
Since the event probabilities of a particular measurement must sum to 1, the states should be normalized as,
\begin{equation}\label{unit}
    \langle \bm{s}^{i},\bm{u}\rangle=1.
\end{equation}

A GPT model of a COPE $\mathbf{C}^a$ can be found by simply finding a real factorization of the form,
\begin{align}\label{eq:C_mindec}
    \mathbf{C}^a=AB,
\end{align}
where $A \in \mathbb{R}^{m\times k}$, $B \in \mathbb{R}^{k\times n}$ is column stochastic, and they are such that $\rank\mathbf{C}^a=\rank A=\rank B$.
Note that, because $\mathbf{C}^a$ has no zero columns, we can rescale any decomposition using a diagonal matrix $D$ as $\mathbf{C}^a=A'B'=(A'D^{-1}) (DB')=AB$ so that $B$ is column-stochastic.
Also, note that since $\mathbf{C}^a$ represents an experiment with $l$ measurements, it is divided into submatrices $C^{ai}$, each corresponding to a particular measurement, as illustrated in Eq.~\eqref{eq: C^a}. 
These submatrices are each column stochastic, therefore, the sum of rows of the corresponding submatrices of $A$, $A^{i}$, are equal.

Now, we have a GPT for the COPE where each row of $A$ is an effect vector and the rows within a subset of rows corresponding to a particular measurement sum to the unit effect $\bm{u}$, and each column of $B$ is a normalized state vector. 
Such a factorization of $\mathbf{C}$ gives the simplest (lowest-dimensional) GPT that can reproduce it whenever the inner dimension $k$ of the decomposition equals the rank of $\mathbf{C}$.
The equality between ranks guarantees that every equivalence class is represented by a unique vector in the GPT as required.

\section{Geometric Interpretation of ENMF} \label{section 3}

It is easy to visualize any nonnegative $m \times n$ matrix geometrically: Simply interpret the columns as $n$ vectors in the positive orthant of an $m$-dimensional Euclidean space, corresponding to the vertices of some geometrical body. In particular, if the columns are convexly independent, the vertices will form the extreme points of a bounded convex polytope.
Building on this simple intuition, one can also devise different geometrical interpretations of the decompositions of such matrices, including NMF and ENMF.
In this section, we provide two such interpretations for ENMF.
This leads to an algorithm for deciding whether a COPE admits a noncontextual model, as well as an algorithm for finding its smallest noncontextual model.

To do this, we first consider how the matrix factors of the COPE can be represented as polyopes.
\begin{theorem} \label{theor:restrictedP1}
Let $\mathbf{C}$ be an $m\times n$ COPE matrix of rank $r$, which has $l$ measurements.
Let $\mathbf{C} =AB$ be a real matrix factorization of $\mathbf{C}$ with inner dimension $r$.
Let $\mathcal{A}$ be the polyhedron $\mathcal{A}:=\{\bm{x}\in\mathbb{R}^r|A\bm{x}\geq0, \mathbf{1}^\top A\bm{x}=l\}$, $\mathcal{B}$ the convex hull of the columns of $B$, and $\mathcal{B}^*:=\{\bm{x}\in\mathbb{R}^r|\bm{x}^\top B\leq 1\}$ the dual polytope of $\mathcal{B}$.
Then, $\mathcal{B}\subseteq\mathcal{A}\subseteq \mathcal{B}^*$.
Furthermore, $\mathcal{A}$ is bounded.

\end{theorem}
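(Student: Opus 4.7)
The plan is to verify the three conclusions of the theorem in sequence: the inner containment $\mathcal{B}\subseteq\mathcal{A}$, the boundedness of $\mathcal{A}$, and the outer containment $\mathcal{A}\subseteq\mathcal{B}^*$. The first two are quick polytope computations; the third is where the structure of the factorization does the real work.

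First I would show $\mathcal{B}\subseteq\mathcal{A}$ by checking the inclusion on vertices and invoking convexity. For each column $\bm{b}_j$ of $B$, the vector $A\bm{b}_j$ equals the $j$-th column $\bm{c}_j$ of $\mathbf{C}$: it is nonnegative because the entries of a COPE are probabilities, and its entries sum to $l$ because the $l$ blocks $\mathbf{C}^{ai}$ are each column-stochastic. Hence $\bm{b}_j\in\mathcal{A}$, and since $\mathcal{A}$ is cut out by linear (in)equalities it is convex, so it contains $\mathcal{B}=\mathrm{conv}\{\bm{b}_j\}$. For boundedness I would compute the recession cone: any $\bm{y}$ in it must satisfy $A\bm{y}\geq 0$ and $\mathbf{1}^\top A\bm{y}=0$, and a nonnegative vector with zero coordinate sum must vanish, so $A\bm{y}=\bm{0}$. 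Since the factorization has inner dimension $r=\rank\mathbf{C}=\rank A$, the matrix $A$ has full column rank and is injective, forcing $\bm{y}=\bm{0}$; a polyhedron whose recession cone is trivial is bounded.

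For $\mathcal{A}\subseteq\mathcal{B}^*$, the target inequality is $\bm{x}^\top \bm{b}_j\leq 1$ for every $\bm{x}\in\mathcal{A}$ and every column $\bm{b}_j$ of $B$. The approach I would take is a Farkas/LP-duality argument. For any $\bm{\alpha}\in\mathbb{R}^m$ with $A^\top\bm{\alpha}=\bm{x}$, the identity $\bm{x}^\top \bm{b}_j=\bm{\alpha}^\top A\bm{b}_j=\bm{\alpha}^\top \bm{c}_j$ holds. If one can exhibit such an $\bm{\alpha}$ with $\bm{\alpha}\geq 0$ and $\max_i\alpha_i\leq 1/l$, then using $\bm{c}_j\geq 0$ and $\mathbf{1}^\top\bm{c}_j=l$ we obtain
\begin{equation}
\bm{x}^\top \bm{b}_j \;=\; \bm{\alpha}^\top \bm{c}_j \;\leq\; \bigl(\max_i \alpha_i\bigr)\,\mathbf{1}^\top\bm{c}_j \;\leq\; \tfrac{1}{l}\cdot l \;=\; 1.
\end{equation}
Existence of the certificate $\bm{\alpha}$ should be established via Farkas' lemma applied to the constraints defining $\mathcal{A}$, where the column-stochasticity of each $\mathbf{C}^{ai}$ supplies the crucial identity $\mathbf{1}^\top A=l\,\mathbf{1}_r^\top$ that ties the normalization $\mathbf{1}^\top A\bm{x}=l$ on the primal side to the sup-norm bound on $\bm{\alpha}$ on the dual side.

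The hard part will be the construction of this nonnegative dual certificate: it is not a canonical expression in $A$ and $\bm{b}_j$ (for instance $\bm{\alpha}=\bm{c}_j/l$ gives $A^\top\bm{\alpha}=A^\top A\bm{b}_j/l\neq \bm{b}_j$ in general), so the existence has to be argued by the strong alternative in Farkas combined with the boundedness of $\mathcal{A}$ proved above, which guarantees the LP is bounded and hence strong duality applies. The other two conclusions, by contrast, are one-line polytope calculations once the COPE normalization is unpacked.
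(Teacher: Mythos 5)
Your proofs of the first two conclusions are correct. The inner containment $\mathcal{B}\subseteq\mathcal{A}$ is exactly the paper's argument: verify on the columns $\bm{b}_j$ of $B$ using $A\bm{b}_j=\bm{c}_j\geq 0$ and $\mathbf{1}^\top\bm{c}_j=l$, then invoke convexity. Your recession-cone argument for boundedness is clean and valid; the paper does not prove boundedness explicitly, but it follows from the bound $0\leq A\bm{x}\leq 1$ that the paper does establish (together with $A$ having full column rank, since $r=\rank\mathbf{C}$).

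For the third conclusion your proposal has a genuine gap, and it is worth seeing why the approach cannot work as written. You ask for $\bm{\alpha}\geq 0$ with $A^\top\bm{\alpha}=\bm{x}$ and $\|\bm{\alpha}\|_\infty\leq 1/l$, but such a certificate need not exist for an arbitrary $\bm{x}\in\mathcal{A}$: the set $\mathcal{A}$ is defined only by $A\bm{x}\geq 0$ and a single normalization constraint, and there is no reason the affine preimage $(A^\top)^{-1}(\bm{x})$ must meet the box $[0,1/l]^m$. Farkas' lemma would give you the alternative, but boundedness of $\mathcal{A}$ does not rule out that the alternative is realized, so strong duality alone does not rescue the argument. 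What the paper actually does is different in kind: it uses the identity $\mathbf{1}^\top A=l\bm{u}^\top$ (which you correctly single out as the crucial structural fact about COPE block stochasticity) to deduce $\bm{u}^\top\bm{x}=1$, and then, since the rows of each block $A^i$ are nonnegative on $\bm{x}$ and sum to $\bm{u}$, concludes $0\leq A\bm{x}\leq 1$ for every $\bm{x}\in\mathcal{A}$. The paper then asserts that this yields $\mathcal{A}\subseteq\mathcal{B}^*$. You should be aware that this final step is itself left implicit in the paper: the bound $0\leq A\bm{x}\leq 1$ concerns the rows of $A$, whereas $\mathcal{B}^*$ is cut out by the columns of $B$, and passing from one to the other requires a further argument (or a further hypothesis) that neither your proposal nor the paper's proof spells out. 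The right move for you is therefore to reproduce the paper's derivation of $0\leq A\bm{x}\leq 1$ rather than to pursue a dual certificate, and to flag explicitly that the inference from $0\leq A\bm{x}\leq 1$ to $\bm{x}^\top B\leq\mathbf{1}^\top$ is the remaining nontrivial step.
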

\begin{proof}
    Since $AB=\mathbf{C}\geq0$ and $\mathbf{C}$ is divided into $l$ column-stochastic subsets of rows, each column of $B$ satisfies $AB_{:i}\geq0$ and $\mathbf{1}^\top AB_{:i}=l$, therefore it is contained in $\mathcal{A}$. Furthermore, for any such subset $\mathbf{C}^i$ of rows of $\mathbf{C}$, the rows within corresponding subset of rows $A^i$ of $A$ sum to the unit effect $\bm{u}$. Now, we note that for any $\bm{x}\in\mathcal{A}$, $\mathbf{1}^\top A\bm{x}=l$ and $\mathbf{1}^\top A=l\bm{u}^\top$, since there are $l$ measurements and the sum of effects within each measurement is $\bm{u}$. It follows that $\bm{u}^\top \bm{x}=1$.
    Therefore, given that $A\bm{x}\geq0$, we have $0\leq A\bm{x}\leq1$.
    This implies that $\mathcal{A}$ is contained in $B^*$.
\end{proof}
We observe that $\mathbf{C}_{ij}$ is the slack of vertices of $\mathcal{B}$ with respect to the hyperplanes of $\mathcal{A}$. Also, $\mathcal{A}$ is the set of all logically possible normalized states given the effects.

Given the nonnegative matrix $\mathbf{C}$, the following two problems are also shown to be equivalent~\cite{vavasis}:
\begin{enumerate}
    \item Decide if the nonnegative rank and the rank of $\mathbf{C}$ are equal.
    \item Decide if a nested simplex $\mathcal{G}$ exists such that $\mathcal{B}\subseteq\mathcal{G}\subseteq\mathcal{A}$. Here, the polytopes $\mathcal{A}$ and $\mathcal{B}$ are defined via a decomposition of the form given in Theorem~\ref{theor:restrictedP1}.
\end{enumerate}
We generalize this result to map the problem of deciding whether an ENMF of inner dimension $k$ of $\mathbf{C}$ exists to a nested simplex problem.
\begin{lemma} \label{lemma:lemma4}
Let $\mathbf{C}$ be a nonnegative $m \times n$ matrix of rank $r$ whose columns each sum to $l$, and suppose that $\mathbf{C} = AB$ is a real matrix factorization of $\mathbf{C}$ with inner dimension $r$.
Define the polytopes $\mathcal{A},\mathcal{B}\subset\mathbb{R}^{r}$ from $B$ and $A$ as in Theorem~\ref{theor:restrictedP1} so that $\mathcal{B}\subseteq\mathcal{A}$.
There exists an ENMF of $\mathbf{C}$ of inner dimension $k$ if and only if there exists a polytope $\mathcal{B}\subseteq\mathcal{G}\subseteq\mathcal{A}$ with $k$ vertices, and a rank-$r$ nonnegative matrix $\mathbf{E}$ of convex coefficients that maps the matrix of vertices of $\mathcal{G}$ to the matrix of vertices of $\mathcal{B}$.
\end{lemma}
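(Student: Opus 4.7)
The plan is to mimic the nested-simplex correspondence used by Vavasis for the equality of nonnegative rank and rank, but upgraded to keep track of the equirank condition on both factors. The bridge in both directions is the observation that, under the hypotheses, the factor $A$ has $r$ linearly independent columns, because $\mathbf{C} = AB$ has inner dimension $r$ equal to $\rank\mathbf{C}$. Hence $A$ is a basis for the column space of $\mathbf{C}$, and any nonnegative factor of $\mathbf{C}$ of rank $r$ must share this column space and can be pulled back through $A$ by a unique linear map, which becomes the vertex matrix of $\mathcal{G}$.

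For the forward direction, I would start from an ENMF $\mathbf{C} = \mathbf{R}\mathbf{E}$ of inner dimension $k$, rescaled so that $\mathbf{R}$ is column sum-$l$ and $\mathbf{E}$ is column-stochastic, as permitted by the rescaling lemma recorded in Sec.~\ref{subsec:OM}. Using $\rank\mathbf{R} = r = \rank A$ together with the equality of column spaces, I write $\mathbf{R} = AT$ for a unique $r \times k$ matrix $T$, and propose $\mathcal{G}$ to be the convex hull of the columns of $T$. The containment $\mathcal{B}\subseteq\mathcal{G}$ follows because cancelling $A$ in $AT\mathbf{E} = AB$ gives $B = T\mathbf{E}$, so column-stochasticity of $\mathbf{E}$ expresses every column of $B$ as a convex combination of the columns of $T$. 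The containment $\mathcal{G}\subseteq\mathcal{A}$ follows because each vertex $\bm{t}_i$ of $\mathcal{G}$ obeys $A\bm{t}_i = \mathbf{R}_{:,i}\geq0$ and $\mathbf{1}^\top A\bm{t}_i = \mathbf{1}^\top \mathbf{R}_{:,i} = l$, so $\bm{t}_i\in\mathcal{A}$, after which convexity of $\mathcal{A}$ does the rest. The required $\mathbf{E}$ is then the $\mathbf{E}$ we started with, which is nonnegative, column-stochastic, rank $r$, and satisfies $T\mathbf{E} = B$ by construction.

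The reverse direction simply runs this dictionary backwards. Given $\mathcal{B}\subseteq\mathcal{G}\subseteq\mathcal{A}$ with $k$ vertices collected into $T\in\mathbb{R}^{r\times k}$, and a rank-$r$, column-stochastic nonnegative $\mathbf{E}$ with $T\mathbf{E} = B$, I set $\mathbf{R} := AT$ and verify the three ENMF requirements: $\mathbf{R}\mathbf{E} = AT\mathbf{E} = AB = \mathbf{C}$; nonnegativity of $\mathbf{R}$ comes from $\mathcal{G}\subseteq\mathcal{A}$, which forces $A\bm{t}_i\geq0$ for every vertex; and $\rank\mathbf{R} = r$ because $\mathbf{R} = AT$ cannot exceed $r$ while $\mathbf{C} = \mathbf{R}\mathbf{E}$ forces it to be at least $r$. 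Combined with $\rank\mathbf{E} = r$, this is an ENMF of inner dimension $k$.

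The main obstacle I anticipate is the careful handling of the normalization constraint $\mathbf{1}^\top A\bm{x} = l$ that defines the affine slice $\mathcal{A}$. In the forward direction this is what turns the mere nonnegativity of $\mathbf{R}$ into the condition $\bm{t}_i\in\mathcal{A}$, and it is not automatic from the ENMF definition: it has to be produced by the diagonal rescaling that makes $\mathbf{R}$ column sum-$l$ and $\mathbf{E}$ column-stochastic, which must be shown to preserve the rank profile. A secondary point of care is that the identity $T\mathbf{E} = B$ is only extractable from $AT\mathbf{E} = AB$ because $A$ has full column rank $r$, so the proof must flag the hypothesis that the initial decomposition $\mathbf{C} = AB$ has inner dimension equal to $\rank\mathbf{C}$. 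Once these two normalizations are settled, the rest of the argument is a straight translation between algebraic factorizations and nested convex bodies in $\mathbb{R}^r$.
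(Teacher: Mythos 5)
Your proposal takes essentially the same route as the paper's own proof: in the forward direction you pull $\mathbf{R}$ back through the full-column-rank factor $A$ to get the vertex matrix of $\mathcal{G}$ (the paper calls it $G$, you call it $T$), derive $B = T\mathbf{E}$ by left-cancelling $A$, and use the diagonal rescaling to land the columns of $T$ in the affine slice $\mathcal{A}$; in the reverse direction you set $\mathbf{R} = AT$ and read off the ENMF conditions. The only cosmetic differences are that you skip the paper's auxiliary factor $F$ with $\mathbf{E}=FB$ (you don't need it once you appeal directly to the ENMF rank hypothesis), and your reverse direction is actually spelled out more fully than the paper's, which stops after defining $F := \mathbf{E}B^{\dagger}$ without verifying the remaining conditions. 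The rescaling-preserves-rank point you flag is indeed needed but is immediate, since multiplying by an invertible diagonal matrix does not change rank.
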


\begin{proof}
We first show the only if direction. 
When an ENMF $\mathbf{C}=\mathbf{R}\mathbf{E}$ of inner dimension $k$ exists, there exist rank-$r$ matrices $G \in \mathbb{R}^{r \times k}$ and $F \in \mathbb{R}^{k \times r}$ such that $\mathbf{R}=AF$ and $\mathbf{E}=GB$.
Without loss of generality, we assume that the columns of $\mathbf{R}$ each sum to $l$, and that $\mathbf{E}$ is column-stochastic.
We can assume that $G$ and $F$ have no zero rows and columns, respectively.
This is because if the $j$th row of $F$ is zero, then
the $j$th row of $\mathbf{E}=FB$ will be zero.
The $j$th column of $G$ can also be set to zero, making the $j$th column of $\mathbf{R}=AG$ zero without changing the product $AGFB$.
This means, in turn, that we can remove the $j$th row of $\mathbf{E}$ and the $j$th column of $\mathbf{R}$ to obtain an ENMF in $k-1$ dimensions.

Since $\mathbf{R}=AG$ is nonnegative and has a column sum of $l$, the columns of $G$ are in $\mathcal{A}$, therefore, we construct the polytope $\mathcal{G}$ as the convex hull of the columns of $G$. Given that $A$ and $B$ are full-rank, $AGFB=AB$ implies that $GFB=G\mathbf{E}=B$. Since $\mathbf{E}$ is column-stochastic, the columns of $B$ are given by convex combinations of the columns of $G$. Equivalently, $\mathcal{B}\subseteq\mathcal{G}$.

Furthermore, since $\mathbf{R}$ and $\mathbf{E}$ have factors of $A$ and $B$, respectively, $\rank{(\mathbf{R})},\rank{(\mathbf{E})}\leq r$.
Then, $\mathbf{R}\mathbf{E} = C$ implies that $\operatorname{rank}(\mathbf{R})\operatorname{rank}(\mathbf{E}) \geq \operatorname{rank}(\mathbf{C}) = r$,
therefore $\operatorname{rank}(\mathbf{R}) =\operatorname{rank}(\mathbf{E})= r$. 

For the reverse direction, suppose there exists a nested polytope $\mathcal{G}$ satisfying $\mathcal{B}\subseteq \mathcal{G} \subseteq \mathcal{A}$, of $k$ vertices, and a rank-$r$ column-stochastic matrix $\mathbf{E}$ that maps the vertices of $\mathcal{G}$ to the vertices of $\mathcal{B}$. 
Given $\mathcal{G}$ and $\mathcal{B}$ have associated matrices $G$ and $B$, respectively, we have $G\mathbf{E} = B$ (up to a permutation of the columns of $B$, which can be arbitrarily changed by permuting the rows of $\mathbf{E}$). Defining $B^{\dagger}$ as the pseudoinverse of $B$, we can define $F := \mathbf{E}B^{\dagger}$.
\end{proof}

Lemma~\ref{lemma:lemma4} boils the complexity of ENMF problem down to two parts. First, the complexity of finding the polytope $\mathcal{G}$.
Second, the complexity of deciding whether a convex coefficient matrix $\mathbf{R}$ of rank $r$ exists and its computation.
Note that, while a nonnegative transformation from $G$ to $A$ can be found using methods such as nonnegative least squares, such generic approaches do not provide rank guarantees.
We thus begin by showing that our desired transformation can be decided and found in polynomial time.

\begin{lemma} \label{lemma:algorithm}
Let $G\in\mathbb{R}^{k\times r}$ and $B \in \mathbb{R}^{m \times r}$ be rank-$r$ matrices corresponding to polytopes $\mathcal{G}$ and $\mathcal{B}$, respectively, such that $\mathcal{B}\subseteq\mathcal{G}$.
For a bit length $b$ for specifying the entries of $G$ and $B$,
it can be decided in poly($r,k,m,b$) time whether there exists 
a rank-$r$ nonnegative matrix $\mathbf{E}$ such that $G\mathbf{E}=B$.
When yes, $\mathbf{E}$ can be found within an element-wise $\delta$ error in poly($r,k,m,l,\log(\delta^{-1})$) time.
\end{lemma}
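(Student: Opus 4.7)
The plan is to linearize the rank constraint so that the existence question reduces to a linear-programming feasibility problem. First I would observe that because $G$ and $B$ both have rank $r$, any $\mathbf{E}$ with $G\mathbf{E}=B$ automatically satisfies $\rank\mathbf{E}\geq r$; hence the only substantive part of the rank-$r$ condition is the upper bound $\rank\mathbf{E}\leq r$, and the challenge is to encode this upper bound linearly on some auxiliary variables.

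The central observation is that rank-$r$ solutions of $G\mathbf{E}=B$ are in one-to-one correspondence with right inverses $\sigma\in\mathbb{R}^{k\times r}$ of $G$, via $\mathbf{E}=\sigma B$. Indeed, if $\rank\mathbf{E}=r$, the column span $V\subseteq\mathbb{R}^{k}$ of $\mathbf{E}$ is $r$-dimensional and $G$ restricts to an isomorphism from $V$ onto the column span of $B$; taking $\sigma$ to be the unique right inverse of $G$ whose image equals $V$ (namely the inverse of $G|_V$ followed by the inclusion $V\hookrightarrow\mathbb{R}^{k}$), one checks $G\sigma=I_r$ and $\sigma B=\mathbf{E}$. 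Conversely, any $\sigma$ with $G\sigma=I_r$ yields $\mathbf{E}:=\sigma B$ solving $G\mathbf{E}=B$ with $\rank\mathbf{E}\leq r$, so the lower bound pins $\rank\mathbf{E}=r$. Since $B$ has full row rank $r$, the map $\sigma\mapsto \sigma B$ is injective, so this is a genuine bijection. Hence the existence question is equivalent to the linear feasibility problem
\begin{equation*}
G\sigma=I_r,\qquad \sigma B\geq 0,
\end{equation*}
in the $kr$ entries of $\sigma$. Column-stochasticity of $\mathbf{E}$, implicit from Lemma~\ref{lemma:lemma4}, is automatic: the vertices of $\mathcal{G}$ and $\mathcal{B}$ lie in the hyperplane $\bm{u}^\top\bm{x}=1$, so $\bm{u}^\top G=\mathbf{1}^\top$ and $\bm{u}^\top B=\mathbf{1}^\top$; applying $\bm{u}^\top$ to $G\mathbf{E}=B$ forces $\mathbf{1}^\top\mathbf{E}=\mathbf{1}^\top$.

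From here, polynomial time follows from classical LP complexity. The LP above has $kr$ variables, $r^2$ equalities, and $O(km)$ inequalities, with entries of bit length $O(b)$. The ellipsoid method decides feasibility in poly$(r,k,m,b)$ time; when feasible, an interior-point method returns $\sigma$ within element-wise precision $\delta/\|B\|_{\infty}$ in poly$(r,k,m,\log\delta^{-1})$ time, and $\mathbf{E}=\sigma B$ is then obtained to element-wise precision $\delta$ by a single matrix multiplication.

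The main obstacle, or rather the only nontrivial step, is the linearization itself. Without the $\sigma$-parameterization, $\rank\mathbf{E}\leq r$ is a determinantal (polynomial) constraint and there is no obvious polynomial-time algorithm. The full-rank hypotheses on $G$ and $B$ are precisely what make the bijection above work and collapse the nonlinear constraint to a linear one; after that, the rest is a routine appeal to LP complexity and to the boundedness of $\|B\|_\infty$ by $2^{O(b)}$.
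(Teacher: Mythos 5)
Your argument is correct, but it parameterizes the solution set differently from the paper, and the comparison is instructive. You parameterize the rank-$r$ solutions of $G\mathbf{E}=B$ directly by right inverses of $G$: every such solution is $\sigma B$ for a unique $\sigma$ with $G\sigma = \mathds{1}_r$, the rank condition being automatic because $\rank(\sigma B)\leq\rank B=r$ while $G\mathbf{E}=B$ forces $\rank\mathbf{E}\geq r$. Nonnegativity becomes $\sigma B\geq 0$, a linear constraint, and the whole question is an LP feasibility problem of polynomial size. The paper instead fixes the particular solution $\bar{E}=G^{\dagger}B$ and parameterizes the remaining freedom by shear matrices $L=\mathds{1}+\sum_{i,j} D_{ij}\bm{b}^{j}\bm{a}^{i\top}$ satisfying $GL=G$, then minimizes the total magnitude of negative entries of $L\bar{E}$ via a linear program. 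The two parameterizations describe the same affine set: $L\bar{E}=(LG^{\dagger})B$ with $G(LG^{\dagger})=GG^{\dagger}=\mathds{1}_r$, and conversely every right inverse equals $LG^{\dagger}$ for a suitable shear, so both have $r(k-r)$ degrees of freedom. Your route makes the completeness of the parameterization and the linearization of the rank constraint more transparent, and avoids computing orthonormal bases for the row space and right kernel of $G$; the paper's minimization form additionally yields a quantitative infeasibility certificate, which it later exploits via LP duality in Proposition~\ref{k=4} to certify that $\mathbf{C}^1$ admits no ENMF of inner dimension four. Either route proves the lemma. (Note that the lemma's stated shapes $G\in\mathbb{R}^{k\times r}$, $B\in\mathbb{R}^{m\times r}$ are a transposition slip given the equation $G\mathbf{E}=B$; one should read $G\in\mathbb{R}^{r\times k}$, $B\in\mathbb{R}^{r\times m}$, $\mathbf{E}\in\mathbb{R}^{k\times m}$, which both you and the paper's proof silently use.)
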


\begin{proof}
Let $G^{\dagger}$ be the pseudoinverse of $G$. Let $\bar{E}:=G^{\dagger}B$. If $\bar{E}$ is nonnegative then $\mathbf{E}=\bar{E}$. If not, we search for a shear transformation $L$ of $\bar{E}$ such that $L\bar{E}\geq0$ and $G\bar{E}=GL\bar{E}$. 
The shear matrix $L$  must be of the form,
\begin{align} \label{shear}
    L=\mathds{1}+\sum_{i=1}^r\sum_{j=1}^{k-r}D_{ij} \bm{b}^{j}\bm{a}^{i\top},
\end{align}
wherein $\{\bm{a}^{1\top},...,\bm{a}^{r\top}\}$ is an orthonormal basis for the rows of $G$, $\{\bm{b}^{1},...,\bm{b}^{(k-r)}\}$ is an orthonormal basis for the right kernel of $G$, and
$D$ is a real matrix to be found via optimization. 

Define the objective function $f(D)$ as the modulus of the sum of negative entries in $L\bar{E}$.
Minimizing $f(D)$ leads to a solution when $f(D)=0$.
In Appendix~\ref{app:L_optim}, we give the details of a linear program for optimizing $D$.
We show that the number of constraints and search space size are polynomial in $m$, $r$ and $k$.
The respective complexities of deciding if a solution to a linear program exists, and computing it when it does, are well-known to be poly($r,k,m,l$) and poly($r,k,m,l,\log(\delta^{-1})$), respecively~\cite{ASPVALL19801, Renegar1988}.
\end{proof}

\begin{corollary} \label{col:detecting}
Given $m\times n$ nonnegative matrices $\mathbf{C}$ of a fixed rank $r$,
it can be decided in poly($m,n,l$) time whether an ENMF of unspecified inner dimension exists.
Here, $b$ is the bit length for specifying the entries of $\mathbf{C}$.
\end{corollary}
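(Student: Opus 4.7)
The plan is to reduce the existence question to a single call of Lemma~\ref{lemma:algorithm} by taking the intermediate polytope $\mathcal{G}$ to be $\mathcal{A}$ itself, thereby avoiding any search over candidate intermediate polytopes. First I would compute a rank-$r$ real factorization $\mathbf{C} = AB$ in polynomial time via a rank-revealing decomposition, and construct the polytopes $\mathcal{A}$ and $\mathcal{B}$ as in Theorem~\ref{theor:restrictedP1}. Because $r$ is fixed and $\mathcal{A}\subset\mathbb{R}^r$ is defined by $m$ halfspaces together with the normalization hyperplane, standard vertex-enumeration algorithms list its $O(m^{\lfloor r/2\rfloor})$ vertices in time polynomial in $m$. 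Let $G$ be the matrix whose columns are these vertices.

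Next, since $\mathcal{B}\subseteq\mathcal{A}$ by Theorem~\ref{theor:restrictedP1}, the hypotheses of Lemma~\ref{lemma:algorithm} are met, and a single application of it decides in poly$(m,n,l)$ time whether a rank-$r$ nonnegative matrix $\mathbf{E}$ with $G\mathbf{E}=B$ exists. By Lemma~\ref{lemma:lemma4}, this is precisely the question of whether an ENMF whose nested polytope is $\mathcal{A}$ exists.

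The substantive part of the argument is to show that the choice $\mathcal{G}=\mathcal{A}$ is without loss of generality. If some smaller nested polytope $\mathcal{G}'\subseteq\mathcal{A}$ admits a rank-$r$ column-stochastic matrix $\mathbf{E}'$ with $G'\mathbf{E}'=B$, then each vertex of $\mathcal{G}'$ lies in $\mathcal{A}$ and is therefore a convex combination of the vertices of $\mathcal{A}$, yielding a column-stochastic matrix $\mathbf{H}$ with $G\mathbf{H}=G'$. Setting $\mathbf{E}:=\mathbf{H}\mathbf{E}'$ gives a nonnegative column-stochastic matrix satisfying $G\mathbf{E}=B$, and its rank is exactly $r$: the upper bound $\rank(\mathbf{E})\leq\rank(\mathbf{E}')=r$ is immediate, while $\rank(\mathbf{E})\geq\rank(G\mathbf{E})=\rank(B)=r$ follows from $G$ and $B$ both having rank $r$. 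The converse direction is trivial, since $\mathcal{B}\subseteq\mathcal{A}\subseteq\mathcal{A}$ always holds.

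I expect the main obstacle to be not the correctness argument, which is a clean composition, but rather controlling the size of the vertex enumeration: this step relies essentially on $r$ being fixed so that $|V(\mathcal{A})|$ stays polynomial in $m$. Once that is secured, the total runtime is the sum of a polynomial-time factorization, a polynomial-time vertex enumeration, and a polynomial-time call to Lemma~\ref{lemma:algorithm}, yielding the claimed poly$(m,n,l)$ bound.
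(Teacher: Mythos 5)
Your proposal is correct and follows essentially the same route as the paper: rank-factorize $\mathbf{C}$, enumerate the vertices of $\mathcal{A}$ (polynomial in $m$ for fixed $r$), take $\mathcal{G}=\mathcal{A}$, and invoke Lemma~\ref{lemma:algorithm}. Your explicit justification of why $\mathcal{G}=\mathcal{A}$ is without loss of generality (factoring $\mathbf{E}=\mathbf{H}\mathbf{E}'$ through the convex-coefficient map $\mathbf{H}$ onto $\mathcal{A}$ and checking the rank is preserved) spells out a step the paper leaves as a one-line assertion, but is the same underlying idea.
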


\begin{proof}
Firstly, find a rank factorization of $\mathbf{C}$ and the corresponding inner and outer polytopes $\mathcal{B}$ and $\mathcal{A}$ as in Lemma~\ref{lemma:lemma4}.
The number of vertices of $\mathcal{A}$, which has $m$ facets, as well as the complexity of finding these vertices from the facets, is polynomial in $m$.
This polynomial is of degree $r-1$ as shown in Sec.~5 of Ref.~\cite{vertexenumeration}.
Note that, $\mathcal{B}\subseteq\mathcal{G}\subseteq\mathcal{A}$.
Hence, $\mathcal{G}=\mathcal{A}$ is a possible nested polytope. 
Then, using Lemma~\ref{lemma:algorithm}, where we choose $k$ to be the number of vertices of $\mathcal{A}$, which is polynomial in $n$ for fixed $r$,
a rank-$r$ nonnegative matrix $\mathbf{E}$ that maps the vertices of $\mathcal{A}$ to the vertices of $\mathcal{B}$ can be decided and computed in polynomial time.
Since $\mathcal{A}$ is the largest possible nested polytope, such a matrix exists if and only if an ENMF exists.    
\end{proof}

Corollary~\ref{col:detecting} might tempt us to think that deciding whether an ENMF, and thus a noncontextual ontological model, of a given COPE matrix (without specifying the inner dimension $k$) exists is efficient.
For one thing, according to the proof, we can choose the trivial nested polytope $\mathcal{G}=\mathcal{A}$ and check the existence of $\mathbf{E}$ in polynomial time.
However, we emphasize that this observation is only valid if the rank of the input COPE matrices is constant.
We believe that the complexity of this simplest check grows exponentially with the rank.
The reason is that checking the existence of $\mathbf{E}$ requires specifying \textit{vertices} of $\mathcal{A}$ (known as the \textit{V-representation}), while the description of $\mathcal{A}$ is in terms of its hyperplanes (known as the \textit{H-representation}).
The problem of finding the V-representation of a polytope from its H-representation is called the \textit{vertex enumeration problem}.
While this is known to be polynomial in special cases, e.g. for fixed dimension or if the polytope is simple or simplicial, the problem is believed to be exponential on general instances because the number of vertices grows exponentially with the number of defining hyperplanes. 
In particular, the problem is known to be NP-hard for unbounded polydera~\cite{Khachiyan2008}.

The problem of finding a polytope $\mathcal{G}$ with $k$ vertices nested between two polytopes $\mathcal{B}$ and $\mathcal{A}$ is well-known in computational geometry as the \textit{nested polytope problem} (NPP).
A natural approach thus would be to use algorithms solving the general NPP to find $\mathcal{G}$.
Unfortunately, this is not possible for two reasons. 
First, it is known that NPP is NP-hard, and in fact, $\exists\mathbb{R}$-complete, in dimensions greater than two~\cite{Das1990,dobbins2019}.
Second, when a solution for $\mathcal{G}$ with a specified $k$ exists, it is not guaranteed that the solution's vertices can be mapped to $\mathcal{B}$ by a rank-$r$ nonnegative matrix $\mathbf{E}$. This extra restriction calls for devising a task-specific algorithm.
In what follows, we will provide such an algorithm through a geometric interpretation.

\begin{lemma} \label{lemma:cone}
Let $\mathbf{C}$ be a nonnegative $m \times n$ matrix of rank $r$ whose columns each sum to $l$ and $\mathbf{C}=AB$ a real matrix factorization of $\mathbf{C}$ with inner dimension $r$.
For a given integer $k$, construct $\bar{A} \in \mathbb{R}^{m\times k}$
and $\bar{B} \in \mathbb{R}^{k\times n}$ by appending the $m\times (k-r)$ and $(k-r)\times n$ zero matrices to the left of $A$ and top of $B$, respectively. 
Define the unbounded polyhedral cone $\bar{\mathcal{A}}$ as the set $\{\bm{x}\in\mathbb{R}^k|A\bm{x}\geq0, \mathbf{1}^\top \bar{A}\bm{x}=l\}$.
$\bar{\mathcal{B}}$ is the corresponding polytope of $\bar{B}$.
There is a solution to ENMF of $\mathbf{C}$ of inner dimension $k$ if and only if there is a ($k-1$)-simplex $\mathcal{\bar{G}}$ such that $\bar{\mathcal{B}}\subseteq \mathcal{\bar{G}}\subseteq \bar{\mathcal{A}}$.
\end{lemma}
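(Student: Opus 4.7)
The strategy is to promote Lemma~\ref{lemma:lemma4} from a generic $k$-vertex polytope in $\mathbb{R}^r$ to a genuine $(k-1)$-simplex in $\mathbb{R}^k$ by exploiting the $k-r$ extra coordinates introduced by the zero-padding. Inside $\bar{\mathcal{A}}=\mathbb{R}^{k-r}\times\mathcal{A}$ these extra coordinates are unconstrained, whereas inside $\bar{\mathcal{B}}=\{0\}\times\mathcal{B}$ they are pinned to zero. This asymmetry is precisely the affine slack needed to lift the $k$ vertices of $\mathcal{G}$ into affinely independent positions without disturbing the convex-coefficient relation to $\bar{B}$.

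For the forward direction, apply Lemma~\ref{lemma:lemma4} to obtain a rank-$r$ vertex matrix $G\in\mathbb{R}^{r\times k}$ of $\mathcal{G}$ together with a rank-$r$ column-stochastic matrix $\mathbf{E}\in\mathbb{R}^{k\times n}$ satisfying $G\mathbf{E}=B$. Since $\rank(\mathbf{E})=r$, its left null space has dimension $k-r$; pick $H\in\mathbb{R}^{(k-r)\times k}$ whose rows form a basis for this null space, and set $\bar{G}:=\begin{pmatrix}H\\ G\end{pmatrix}$. Then $\bar{G}\mathbf{E}=\bar{B}$ by construction (which proves $\bar{\mathcal{B}}\subseteq\bar{\mathcal{G}}$), and every column of $\bar{G}$ has the form $(h_j,g_j)$ with $g_j\in\mathcal{A}$ (which proves $\bar{\mathcal{G}}\subseteq\bar{\mathcal{A}}$). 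The remaining step is to verify that $\bar{G}$ has full rank $k$, so that its columns are linearly—and hence affinely—independent and therefore form a $(k-1)$-simplex. This reduces to showing that the row space of $G$ and the left null space of $\mathbf{E}$ intersect trivially: any $u^\top G$ in the intersection satisfies $u^\top B=u^\top G\mathbf{E}=0$, forcing $u=0$ by the full row rank of $B$.

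For the reverse direction, let $\bar{G}\in\mathbb{R}^{k\times k}$ be the vertex matrix of a $(k-1)$-simplex $\bar{\mathcal{G}}$ nested between $\bar{\mathcal{B}}$ and $\bar{\mathcal{A}}$. Because $\bar{\mathcal{A}}$ lies on the affine hyperplane $\mathbf{1}^\top\bar{A}\bm{x}=l$, which misses the origin, affine independence of $\bar{G}$'s columns upgrades to linear independence, so $\bar{G}$ is invertible. Splitting $\bar{G}=\begin{pmatrix}H\\ G\end{pmatrix}$ and reading off the convex representation $\bar{G}\mathbf{E}=\bar{B}$ yields a column-stochastic $\mathbf{E}$ with $G\mathbf{E}=B$ and $H\mathbf{E}=0$. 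Setting $\mathbf{R}:=AG$ produces a nonnegative matrix (because $G$'s columns lie in $\mathcal{A}$) with $\mathbf{R}\mathbf{E}=AB=\mathbf{C}$. Invertibility of $\bar{G}$ forces $\rank(G)=r$ and $\rank(H)=k-r$; combined with $\mathbf{E}=\bar{G}^{-1}\bar{B}$ and $\rank(\bar{B})=r$, this gives $\rank(\mathbf{R})=\rank(\mathbf{E})=r$, i.e.\ an ENMF of inner dimension $k$.

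The main obstacle is the rank accounting in the lift to $\bar{G}$: ensuring the padded object is genuinely full rank rather than degenerate hinges on the full row rank of $B$, which is exactly what prevents the row space of $G$ from hiding inside the left null space of $\mathbf{E}$. A mild counterpart in the reverse direction is the observation that affine independence on a non-origin affine hyperplane entails linear independence, which is what allows $\bar{G}$ to be inverted and hence cleanly decomposed into its $H$ and $G$ blocks.
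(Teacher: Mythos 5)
Your proof is correct and, in the forward direction, takes a cleaner route than the paper's. The paper starts from the ENMF factors $\mathbf{C}=(AG)(FB)$ supplied by Lemma~\ref{lemma:lemma4}, pads both $G$ and $F$ with random blocks to make them square, and then corrects the resulting mismatch between $\bar{G}\bar{F}\bar{B}$ and $\bar{B}$ with an explicit shear matrix $L$ acting on the first $k-r$ rows, finally taking $\bar{G}'=L^{-1}\bar{G}$. Your construction avoids the shear entirely: lifting $G$ by a basis $H$ of the left null space of $\mathbf{E}$ makes $\bar{G}\mathbf{E}=\bar{B}$ hold automatically since $H\mathbf{E}=0$, and the full row rank of $B$ is precisely what forces the row space of $G$ to meet that null space only at the origin, so $\bar{G}$ is invertible with no post hoc correction. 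This buys the same $(k-1)$-simplex with strictly less machinery, and it never needs the factor $F$ at all. Your reverse direction matches the paper's in substance, but you supply a step the paper only asserts: you derive invertibility of the vertex matrix $\bar{G}$ from the fact that $\bar{\mathcal{A}}$ lies on the hyperplane $\mathbf{1}^\top\bar{A}\bm{x}=l$ with $l>0$, so that affine independence of the simplex vertices upgrades to linear independence, whereas the paper simply posits a full-rank $G$.
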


\begin{proof}
Let $\mathbf{C}=(AG)(FB)$ be a solution to the ENMF of $\mathbf{C}$ of inner dimension $k$ as in Lemma~\ref{lemma:lemma4}, where $\mathbf{R}:=AG$ and $\mathbf{E}=FB$ are column sum-$l$ and column-stochastic, respectively.
Note that $G$ is not square in general; hence, it does not represent a simplex, which is a full-dimensional bounded convex body. 
Thus, we aim to show that there exists a full-rank square matrix $\bar{G}'$ such that $\mathbf{C}=\bar{A}\bar{G}' \bar{G}'^{-1}\bar{B}$.
Then, since $\bar{G}'^{-1}$ is full-rank ($k$) and $\bar{B}$ is rank-$r$, we have $\mathbf{E}=\bar{G}'^{-1}\bar{B}$ which is rank-$r$ and nonnegative.

We begin by observing that due to the zeros in $\bar{A}$ and $\bar{B}$, we can append two $(k-r)\times k$ and $k\times(k-r)$ random real matrices to the top of $G$ and left of $F$, respectively, so that $\bar{A}\bar{G}\bar{F}\bar{B}=\mathbf{C}$.
Clearly, we also have $\bar{A}\bar{G}\geq0$ and $\bar{F}\bar{B}\geq0$.

Furthermore,
$\bar{G}\bar{F}\bar{B}$ can only differ from $\bar{B}$ in its first $(k-r)$ rows because
$\bar{A}\bar{G}\bar{F}\bar{B}=\bar{A}\bar{B}$, the first $(k-r)$ rows of $\bar{B}$ are all zeroes, and the last $r$ columns and rows of $\bar{A}$ and $\bar{B}$, respectively, give a rank factorization of $\mathbf{C}$.
Therefore, there exists a full-rank shear matrix $L$ of the form,
\begin{equation}
\begin{split}
    L &= \mathds{1} + \sum_{i=1}^{k-r} \sum_{j=k-r+1}^{k} (D_{ij} \bm{a}^{i} (\bm{a}^{j})^\top),
\end{split}
\end{equation}
where $\bm{a}^{i}$s are the columns of $\mathds{1}_{k\times k}$ and $D$ is a real matrix, such that $\bar{G}\bar{F}\bar{B}=L\bar{B}$. The inverse of $L$ is given by,
\begin{equation}
        L^{-1} = \mathds{1} - \sum_{i=1}^{k-r} \sum_{j=k-r+1}^{k} (D_{ij} \bm{a}^{i} (\bm{a}^{j})^\top) \label{sinverse}.
\end{equation}
We have,
\begin{equation}
\begin{split}
    \mathbf{C} &= \bar{A}\bar{G}\bar{G}^{-1}\bar{G}\bar{F}\bar{B},\\
    &= \bar{A}\bar{G}\bar{G}^{-1}L\bar{B}.
\end{split}
\end{equation}
Using Eq.~\eqref{sinverse}, $\bar{A}\bar{G}=\bar{A}L^{-1}\bar{G}$ because
$L^{-1}\bar{G}$ only differs from $\bar{G}$ in the first $(k-r)$ rows which leads to,
\begin{equation}
\begin{split}
    \mathbf{C}&=\bar{A}L^{-1}\bar{G}\bar{G}^{-1}L\bar{B}\\
    &=(A\bar{G}^{\prime})(\bar{G}^{\prime-1} \bar{B}),\\
    \mathbf{R}&=\bar{A}\bar{G}=A\bar{G}^{\prime},\\
    \mathbf{E}&=\bar{F}\bar{B}=\bar{G}^{\prime-1}\bar{B}.
\end{split}
\end{equation}
Now, $\bar{G}^{\prime}$ is square and full-rank.
Hence, its corresponding polytope $\bar{\mathcal{G}}^{\prime}$ is a $(k-1)$-simplex.
Furthermore, $\mathbf{R}=\bar{A}\bar{G}^{\prime}$ and $\mathbf{E}=\bar{G}^{\prime-1}\bar{B}$ being nonnegative, and $\mathbf{E}$ being also column-stochastic imply that $\bar{\mathcal{G}}^{\prime}\subseteq \bar{\mathcal{A}}$ and $\mathcal{\bar{B}}\subseteq\mathcal{G}'$, respectively. Hence the claim.

For the reverse direction, assume there exists a full-dimensional nested $(k-1)$-simplex $\mathcal{G}$ with a corresponding full-rank matrix $G$.
Then $\mathcal{G}\subseteq\bar{\mathcal{A}}$ implies that $R:=\bar{A}G$ is nonnegative and has a column sum of $l$.
$\bar{\mathcal{B}}\subseteq \mathcal{G}$ implies that $\mathbf{E}:=\bar{A}G^{-1}$ is a matrix of convex coefficients mapping $\mathcal{G}$ to $\mathcal{A}$, so it is nonnegative and column-stochastic.
Finally, $\mathbf{E}$ is unique and $\rank\mathbf{E}=\rank\bar{B}=r$ because $G$ and $G^{-1}$ are full-rank.
\end{proof}
The advantage of the intermediate simplex approach of Lemma~\ref{lemma:cone} is that, once found, it is guaranteed that the rank of $\mathbf{E}$ is $r$.
As a result, following finding the intermediate simplex, one can use other efficient algorithms such as nonnegative least squares to compute $\mathbf{E}$.
The main question is thus, what is the complexity of the \textit{intermediate simplex problem}?
This was shown by Vavasis~\cite{vavasis} to be NP-hard, assuming that the inner polytope is full-dimensional and the outer polytope is bounded.
In our construction in Lemma~\ref{lemma:cone}, however, neither of these conditions are met.
Our next task is thus to transform this construction into one wherein $\bar{\mathcal{B}}$ is full-dimensional and $\bar{\mathcal{A}}$ is bounded.
However, we first show that the intermediate simplex $\mathcal{G}$ of Lemma~\ref{lemma:cone} can always be assumed to have a specific orientation in the subspace orthogonal to the affine span of the inner polytope $\bar{\mathcal{B}}$.

\begin{lemma} \label{orthogonal transformation}
    For a rank-$r$ nonnegative matrix $\mathbf{C}$,
    let  $\bar{\mathcal{B}}\subseteq\bar{\mathcal{G}}\subseteq\bar{\mathcal{A}}$ be nested polytopes as in Lemma~\ref{lemma:cone}.
    Then, transforming $\bar{\mathcal{G}}$ by any orthogonal transformation acting on the subspace orthogonal to ${\rm span}(\bar{\mathcal{B}})$ also results in a nested simplex.
\end{lemma}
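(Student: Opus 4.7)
The plan is to exploit the block structure built into $\bar A$ and $\bar B$ in the construction of Lemma~\ref{lemma:cone}. Recall that $\bar B$ has zero first $k-r$ rows, so every column of $\bar B$ lies in the $r$-dimensional coordinate subspace $V:=\{0\}^{k-r}\times\mathbb{R}^r$; since $B$ has rank $r$, the span of these columns is exactly $V$, so ${\rm span}(\bar{\mathcal B})=V$ and its orthogonal complement is $V^\perp=\mathbb{R}^{k-r}\times\{0\}^r$. Symmetrically, $\bar A$ has zero first $k-r$ columns, so for every $\bm{x}\in\mathbb{R}^k$ the vector $\bar A\bm{x}$ depends only on the projection of $\bm{x}$ onto $V$. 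Hence both defining conditions of $\bar{\mathcal A}$, namely $\bar A\bm{x}\geq0$ and $\mathbf{1}^\top \bar A\bm{x}=l$, are insensitive to the $V^\perp$-component of $\bm{x}$.

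The first step is to translate these two structural facts into invariance statements. Let $Q$ be an orthogonal transformation of $\mathbb{R}^k$ that acts as the identity on $V$ and as an orthogonal map on $V^\perp$. Since every point of $\bar{\mathcal B}$ lies in $V$, we get $Q\bar{\mathcal B}=\bar{\mathcal B}$ directly. For $\bar{\mathcal A}$, the observation above gives $\bar A(Q\bm{x})=\bar A\bm{x}$ and $\mathbf{1}^\top \bar A(Q\bm{x})=\mathbf{1}^\top \bar A\bm{x}$ for every $\bm{x}$, so $\bm{x}\in\bar{\mathcal A}\Leftrightarrow Q\bm{x}\in\bar{\mathcal A}$, i.e.\ $Q\bar{\mathcal A}=\bar{\mathcal A}$.

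The second step is to combine these with the hypothesis $\bar{\mathcal B}\subseteq\bar{\mathcal G}\subseteq\bar{\mathcal A}$. Applying $Q$ (which preserves inclusions because it is a bijection) yields
\begin{equation}
\bar{\mathcal B}=Q\bar{\mathcal B}\subseteq Q\bar{\mathcal G}\subseteq Q\bar{\mathcal A}=\bar{\mathcal A}.
\end{equation}
It only remains to note that $Q\bar{\mathcal G}$ is still a $(k-1)$-simplex: as an invertible affine map, $Q$ sends the $k$ affinely independent vertices of $\bar{\mathcal G}$ to $k$ affinely independent points, and the image of a convex hull is the convex hull of the images.

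There is no genuine obstacle here; the content of the lemma is really the observation, encoded in the zero blocks of $\bar A$ and $\bar B$, that the constraints defining $\bar{\mathcal A}$ and $\bar{\mathcal B}$ decouple completely between $V$ and $V^\perp$. The only thing worth being careful about is to make sure the statement about $Q$ is interpreted as ``$Q$ restricted to $V$ is the identity and $Q$ restricted to $V^\perp$ is orthogonal on that subspace,'' which is the natural reading of ``orthogonal transformation acting on the subspace orthogonal to ${\rm span}(\bar{\mathcal B})$'' and is precisely what makes both invariances $Q\bar{\mathcal B}=\bar{\mathcal B}$ and $Q\bar{\mathcal A}=\bar{\mathcal A}$ hold simultaneously.
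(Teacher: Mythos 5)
Your proof is correct and follows essentially the same route as the paper: both arguments rest on the zero-block structure of $\bar{A}$ and $\bar{B}$ built into Lemma~\ref{lemma:cone}, which forces the orthogonal map to fix both $\bar{\mathcal{A}}$ and $\bar{\mathcal{B}}$. If anything, your write-up is the more complete of the two: the paper's proof records only that the inner containment $\bar{\mathcal{B}}\subseteq Q\bar{\mathcal{G}}$ survives (via the zero rows of $\bar{B}$), whereas you also make explicit the outer invariance $Q\bar{\mathcal{A}}=\bar{\mathcal{A}}$ (via the zero columns of $\bar{A}$, so that $\bar{A}Q\bm{x}=\bar{A}\bm{x}$) and the preservation of simplexhood under the invertible affine map $Q$.
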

\begin{proof}
    Such a transformation corresponds to multiplying the matrix $G$ on the right by a matrix of the form,
    \begin{equation}
        O=\begin{pmatrix}
        O_{(k-r)\times(k-r)} & \mathbf{0}\\
        \mathbf{0} & \mathds{1}
    \end{pmatrix},
    \end{equation}
    where $O_{(k-r)\times(k-r)}$ is an orthogonal matrix.
    Note that, for any matrix $\mathbf{Q}$ such that the top $k-r$ rows of $G\mathbf{Q}$ are zeros, the top $k-r$ rows of $GO\mathbf{Q}$ will also be zeros.
    Then, since $O$ acts as the identity on the subspace $\operatorname{span}(\bar{\mathcal{B}})$, $\mathbf{Q}G=\bar{\mathcal{B}}$ would imply $\mathbf{Q}OG=\bar{\mathcal{B}}$.
\end{proof}

\begin{theorem} \label{thm:reduction}
For a rank-$r$ nonnegative matrix $\mathbf{C}$ with column sum $l$, let $\bar{A}$ and $\bar{B}$ be matrices as in Lemma~\ref{lemma:cone}.
There exists an $(m+k-r)\times(n+2k-2r)$ rank-$k$ nonnegative matrix $\bar{\mathbf{C}}$ for $\smash{k>r}$ such that a
ENMF of $\mathbf{C}$ of inner dimension $k$ exists if and only if the NNR of $\bar{\mathbf{C}}$ equals $k$.
\end{theorem}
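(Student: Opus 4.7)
My plan is to construct the augmented matrix $\bar{\mathbf{C}}$ explicitly so that the geometric reformulation from Lemma~\ref{lemma:cone} becomes a bona fide intermediate-simplex problem in which the inner polytope is full-dimensional and the outer polytope is bounded. The augmentation must serve two purposes simultaneously: the $k-r$ added rows play the role of ``bounding'' effects that compactify $\bar{\mathcal{A}}$, and the $2(k-r)$ added columns extend $\bar{\mathcal{B}}$ to be full-dimensional in $\mathbb{R}^{k}$ while supplying the column-sum normalization demanded by the new effects. A natural choice is a block form such as
\begin{equation*}
\bar{\mathbf{C}}=\begin{pmatrix} \mathbf{C} & \bm{0}_{m\times(k-r)} & V_{1} \\ \bm{0}_{(k-r)\times n} & I_{k-r} & V_{2} \end{pmatrix},
\end{equation*}
with $V_{1},V_{2}\geq 0$ chosen so that $\bar{\mathbf{C}}$ has a uniform column sum and rank exactly $k$; the $I_{k-r}$ block adds $k-r$ linearly independent directions on top of the rank-$r$ contribution of $\mathbf{C}$.

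For the forward direction, I would start from an ENMF $\mathbf{C}=\mathbf{R}\mathbf{E}$ of inner dimension $k$ and the associated intermediate simplex $\bar{\mathcal{G}}^{\prime}$ produced by Lemma~\ref{lemma:cone}, with simplex matrix $\bar{G}^{\prime}$. The idea is to assemble nonnegative $\tilde{\mathbf{R}}$ of size $(m+k-r)\times k$ and $\tilde{\mathbf{E}}$ of size $k\times(n+2k-2r)$ block by block: the upper-left blocks reproduce $\mathbf{R}$ and $\mathbf{E}$, while the remaining blocks are built from $\bar{G}^{\prime}$, its inverse, and the auxiliary data in $V_{1},V_{2}$. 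Nonnegativity of all blocks is guaranteed by the nesting $\bar{\mathcal{B}}\subseteq\bar{\mathcal{G}}^{\prime}\subseteq\bar{\mathcal{A}}$, exactly as in the proof of Lemma~\ref{lemma:cone}. This yields an NMF of $\bar{\mathbf{C}}$ of inner dimension $k$, and since $\rank\bar{\mathbf{C}}=k$ by construction, it achieves $\operatorname{NNR}(\bar{\mathbf{C}})=k$.

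The reverse direction is, I expect, the main obstacle. Assuming $\operatorname{NNR}(\bar{\mathbf{C}})=k$, an NMF $\bar{\mathbf{C}}=\tilde{\mathbf{R}}\tilde{\mathbf{E}}$ of inner dimension $k$ necessarily satisfies $\rank\tilde{\mathbf{R}}=\rank\tilde{\mathbf{E}}=k$. The key lemma to establish is that the zero blocks of $\bar{\mathbf{C}}$ force a clean partition of the $k$ columns of $\tilde{\mathbf{R}}$ (and correspondingly the $k$ rows of $\tilde{\mathbf{E}}$) into a group of $r$ indices supporting the $\mathbf{C}$-block and a complementary group of $k-r$ indices supporting the $I_{k-r}$-block. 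The standard nonnegativity no-cancellation argument delivers this: a column of $\tilde{\mathbf{R}}$ with a positive entry in the top $m$ rows, when paired with a positive entry of $\tilde{\mathbf{E}}$ in a column of $\bar{\mathbf{C}}$ whose top block is zero, produces a contradiction unless that column lies in the ``lower'' group, and symmetrically for the bottom rows. Once the partition is established, restricting $\tilde{\mathbf{R}}$ to its first $m$ rows and $\tilde{\mathbf{E}}$ to its first $n$ columns yields nonnegative $\mathbf{R}$ and $\mathbf{E}$ with $\mathbf{R}\mathbf{E}=\mathbf{C}$; the rank split $r+(k-r)=k$ across the two groups, combined with $\rank\mathbf{C}=r$, forces $\rank\mathbf{R}=\rank\mathbf{E}=r$, which is exactly the ENMF condition at inner dimension $k$. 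The most delicate technical step will be specifying $V_{1}$ and $V_{2}$ precisely enough that this partition is rigid under every valid NMF and that the column-stochastic normalization transfers cleanly to the induced $\mathbf{E}$; it is essentially through this choice that the reduction is made tight in both directions.
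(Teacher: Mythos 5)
Your proposal takes a genuinely different route from the paper, but it has a gap that I do not think can be repaired while keeping the block structure you describe. The paper does \emph{not} construct $\bar{\mathbf{C}}$ as a block matrix with zero off-diagonal blocks. Instead it builds $\bar{\mathbf{C}}$ geometrically, as the slack matrix of a full-dimensional inner polytope $\bar{\mathcal{B}}_b$ (obtained by appending $k-r$ carefully chosen vertices of the form $\bm{c}^i+\bar{h}_i\bm{h}^i$ to $\bar{\mathcal{B}}$) against a bounded outer polytope $\bar{\mathcal{A}}_b$ (obtained by appending $2(k-r)$ bounding facets $\pm\bm{h}^i/2+\bm{u}/2$ to $\bar{\mathcal{A}}$). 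In that construction the analogue of your lower-left block is generically nonzero: each new facet $-\bm{h}^i/2+\bm{u}/2$ evaluated on an old vertex gives $1/2$, not $0$, since old vertices have no $\bm{h}^i$ component. All the technical work of the paper (the rotation reduction of Lemma~\ref{orthogonal transformation}, the iterated barycenter construction, the proportionality bound giving $\bar{h}_i \geq d_{1i}/d_{2i}$) goes into proving that the new vertices can always be placed \emph{inside} an arbitrary intermediate simplex, so that the two nested-simplex problems are mutually reducible.

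The gap in your version is created precisely by the zero blocks you posit. Consider the forward direction: you start from an ENMF $\mathbf{C}=\mathbf{RE}$ of inner dimension $k$, where $\mathbf{R}$ and $\mathbf{E}$ each have rank $r<k$ and in general have \emph{no} zero rows or columns. You then want to extend to $\tilde{\mathbf{R}},\tilde{\mathbf{E}}$ with $\tilde{\mathbf{R}}\tilde{\mathbf{E}}=\bar{\mathbf{C}}$. The bottom-left zero block forces (bottom $k-r$ rows of $\tilde{\mathbf{R}}$) $\times$ (first $n$ columns of $\tilde{\mathbf{E}}$) $= \mathbf{0}$. By the no-cancellation property of nonnegative matrices this means that, for each inner index $p$, either $\tilde{\mathbf{R}}$'s bottom part vanishes in column $p$ or $\tilde{\mathbf{E}}$'s first $n$ columns vanish in row $p$. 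If the first $n$ columns of $\tilde{\mathbf{E}}$ are to reproduce $\mathbf{E}$ (which has no zero rows) on all $k$ rows, the bottom $(k-r)$ rows of $\tilde{\mathbf{R}}$ must vanish entirely — and then the central $I_{k-r}$ block cannot be produced. The only escape is to force $k-r$ of the rows of $\tilde{\mathbf{E}}$ to be zero on the first $n$ columns, but then only $r$ of the $k$ ontic states participate in the factorization of $\mathbf{C}$. That is an NMF of $\mathbf{C}$ of inner dimension $r$, i.e.\ it asserts $\operatorname{NNR}(\mathbf{C})=r$, which is a strictly stronger and generally false condition; a COPE can admit an ENMF at inner dimension $k>r$ even when $\operatorname{NNR}(\mathbf{C})>r$, and indeed this is the regime the theorem is designed to capture. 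So the zero-block reduction collapses the ENMF-at-$k$ question to the NNR-equals-rank question, losing exactly the freedom the geometric construction preserves by letting the added vertices have components both in $\operatorname{span}(\bar{\mathcal{B}})$ and in the orthogonal directions $\bm{h}^i$, which in matrix language means the off-diagonal blocks must be allowed to be nonzero.
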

\begin{proof}
Recall that $\bar{\mathcal{A}}$ and $\bar{\mathcal{B}}$ are the trivial embeddings of the $(r-1)$-dimensional polytopes $\mathcal{A}$ and $\mathcal{B}$ in $k-1$ dimensions. 
Our first goal is to bound $\bar{\mathcal{A}}$ by adding extra hyperplanes. 
This can be done as follows:
For each extra dimension added to $\mathcal{A}$, extending it to an unbounded prism, we need two hyperplanes to bound the convex body in that direction.
To do this, define $\bm{h}^i$ as $\bm{h}^i_j=\delta_{ij}$ for $1\leq j \leq k$. Then, add $2(k-r)$ rows of length $k$ to the bottom of $\bar{A}$, of the form $-\bm{h}^i/2+\bm{u}/2$ for $1\leq i \leq k-r$ and $\bm{h}^i/2+\bm{u}/2$ for $k-r+1\leq i \leq 2(k-r)$. This corresponds to adding $k-r$ extra dichotomic measurements, meaning $l\rightarrow l+k-r$, but $\bm{u}$ does not change.
Note that each $\bm{h}^i$ is orthogonal to the span of $\bar{\mathcal{A}}$.
Let us denote by $\bar{A}_b$ and $\bar{\mathcal{A}}_b$ the resulting $k \times (n+2k-2r)$ matrix and its corresponding $(k-1)$-dimensional bounded polytope, respectively.

Assume that there exists a $(k-1)$-simplex $\mathcal{G}$ nested between $\bar{\mathcal{B}}$ and $\bar{\mathcal{A}}$.
We want the description of $\mathcal{G}$ to be as simple as possible. Therefore, using Lemma.~\ref{orthogonal transformation}, we assume that the vertex of $\mathcal{G}$ furthest from the span of $\bar{\mathcal{B}}$ has no $\bm{h}^j$ component for $j>1$. Further, we assume that the vertex of $\mathcal{G}$ furthest from $\operatorname{span}(\bar{\mathcal{B}}\cup_{j=1}^i\bm{h}^j)$ has no $\bm{h}^p$ component for $p>i+1$ and $i=1,...,r-k$.

To obtain a nested simplex between $\bar{\mathcal{B}}$ and $\bar{\mathcal{A}}_b$, the vertices of $\mathcal{G}$ can be rescaled along the $\bm{h}^i$ directions to obtain the $(k-1)$-simplex $\bar{\mathcal{G}}$ such that $k-r$ of its vertices lie on the hyperplanes defined by $\bm{h}^i$s and $\bar{\mathcal{B}}\subseteq\bar{\mathcal{G}}\subseteq\bar{\mathcal{A}_b}$.

We now aim to extend $\bar{\mathcal{B}}$ to a full-dimensional polytope by adding $k-r$ vertices $\{\bm{v}_i\in \bar{\mathcal{G}}\}_{i=1}^{k-r}$ to it.
The first vertex can be constructed by adding a $\bm{h}^1$ component to a point in $\bar{\mathcal{B}}$.
Let this point be $\bm{c}^1$, the barycenter of $\bar{\mathcal{B}}$ computed from its vertices, so that $\bm{v}^1 = \bm{c}^1 + \bar{h}_1 \bm{h}^1$.
We thus need to estimate the maximum $\bar{h}_1$.

Suppose $\bar{\mathcal{B}}_{\bar{\mathcal{A}}}:=\operatorname{span}(\bar{\mathcal{B}})\cap\bar{\mathcal{A}}$ is the polytope formed by the intersection of the span of $\bar{\mathcal{B}}$ and $\bar{\mathcal{A}}$.
Similarly, define $\bar{\mathcal{B}}_{\bar{\mathcal{G}}}:=\operatorname{span}(\bar{\mathcal{B}})\cap\bar{\mathcal{G}}$ as the polytope formed by the intersection of the span of $\bar{\mathcal{B}}$ and $\bar{\mathcal{G}}$.
Note that $\bar{\mathcal{B}}\subseteq\bar{\mathcal{B}}_{\bar{\mathcal{G}}}\subseteq \bar{\mathcal{B}}_{\bar{\mathcal{A}}}$.

Consider the vertex $\bm{g}$ of $\bar{\mathcal{G}}$ on the hyperplane $\bm{x}^{\top}\bm{h}^1=1$ and denote its projection along $\bm{h}^1$ onto $\bar{\mathcal{B}}_{\bar{\mathcal{A}}}$ as $\bm{y}^1$.
The line segment $\ell$ connecting $\bm{y}^1$ to $\bm{c}^1$ intersects with $\bar{\mathcal{B}}_{\bar{\mathcal{G}}}$ at a point $\bm{z}$ opposite to $\bm{y}^1$.
The three points $\bm{g}$, $\bm{y}^1$, and $\bm{z}^1$ form a right triangle with its base $\overline{\bm{g}\bm{y}^1}$ having unit length.
Using the basic proportionality theorem of triangles, we have
\begin{equation}
    h_1=\frac{h_1}{\overline{\bm{g}\bm{y}^1}}=\frac{\overline{\bm{z}^1\bm{c}^1}}{\overline{\bm{z}^1\bm{y}^1}}.
\end{equation}
This is illustrated in Fig.~\ref{fig:height}.
\begin{figure}
    \centering
    \includegraphics[width=0.5\linewidth]{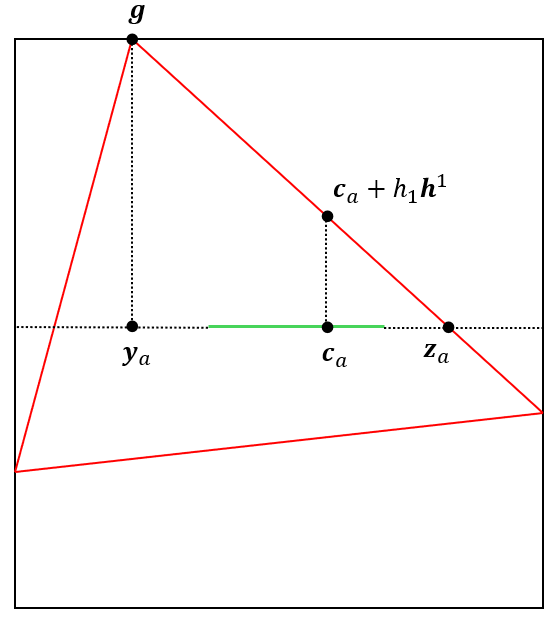}
    \caption{2D cross-section of the nested polytopes $\bar{\mathcal{A}}_b$ (solid black), $\bar{\mathcal{A}}$ (solid green), and $\bar{\mathcal{G}}$ (solid red).}
    \label{fig:height}
\end{figure}
Let $\mathcal{S}_{11}$ be a simplex whose vertices are in $\bar{\mathcal{B}}$ and contains $\bm{c}^1$. Since $\bm{c}^1$ is the barycenter of $\bar{\mathcal{B}}$, it must be the case that $\overline{\bm{z}^1\bm{c}^1}$ is greater than or equal to the closest point of $\partial\mathcal{S}_{11}$ to $\bm{c}^1$.
Let us call this distance $d_1$.
Furthermore, note that without loss of generality, we can assume $\bar{\mathcal{A}}$ contains the origin. Let $\mathcal{S}_{21}$ be a simplex that contains $\bar{\mathcal{B}}_{\bar{\mathcal{A}}}$. This can be given by the polar dual of a simplex that is contained in the polar dual of $\bar{\mathcal{A}}$ and contains the origin.
Since $\bm{y}^1, \bm{z}^1\in \bar{\mathcal{B}}_{\bar{\mathcal{A}}}$, $\overline{\bm{z}^1\bm{y}^1}$ is upper bounded by the maximum distance between any two vertices of $\mathcal{S}_{21}$.
We call this distance $d_2$.
It follows that,
\begin{equation}\label{eq: htilde}
    h_1=\frac{\overline{\bm{z}^1\bm{c}^1}}{\overline{\bm{z}^1\bm{y}^1}} \geq \frac{d_1}{d_2}:=\bar{h}_1.
\end{equation}
We iterate the above for $i=1,...,k-r$ using the following procedure. 
After the $(i-1)$th iteration, we so far have values for $\bar{h}_1,...,\bar{h}_{i-1}$. 
For the $i$th iteration, define $\bar{\mathcal{B}}_i=\operatorname{conv}(\bar{\mathcal{A}}\cup_{j=1}^{i-1}\bm{v}^j)$ where $\bm{v}^j = \bm{c}^j + \bar{h}_j\bm{h}^j$,
$\bar{\mathcal{B}}_{\bar{\mathcal{A}}i}=\operatorname{span}(\bar{\mathcal{B}}_i)\cap\bar{\mathcal{A}}$ and $\bar{\mathcal{B}}_{\bar{\mathcal{G}}i}=\operatorname{span}(\bar{\mathcal{B}}_i)\cap\bar{\mathcal{G}}$. 
Similarly to before, let $\bm{c}^i$ be the barycentre of $\bar{\mathcal{B}}_i$, let $\mathcal{S}_{1i}$ be a simplex contained in $\bar{\mathcal{B}}_i$ containing $\bm{c}^i$, and let $\mathcal{S}_{2i}$ be a simplex containing $\bar{\mathcal{B}}_{\bar{\mathcal{A}}i}$. Then, let $d_{1i}$ be the minimum distance from $\bm{c}^i$ to $\partial\mathcal{S}_{1i}$ and $d_{2i}$ be the maximum distance between two vertices of $\mathcal{S}_{2i}$. 
Then, using a similar line of reasoning as above, $h_i$ can be bounded as,
\begin{equation}\label{eq:h_i}
    h_i\geq \bar{h}_i= \frac{d_{1i}}{d_{2i}}.
\end{equation}
The iteration procedure above ensures that the distance between the boundary of $\bar{\mathcal{G}}$ and $c_i$ along the $\bm{h}^i$ direction is not infinitesimally small.

It follows that $\bar{\mathcal{B}}_i=\operatorname{conv}(\bar{\mathcal{A}}\cup_{k=1}^{k-r}\bm{v}^k)$ and
$\bar{\mathcal{B}}_b:={\rm conv}\{\bar{\mathcal{B}}\cup_{i=1}^{k-r}\bar{h}_i \bm{h}^i\}$ 
is a full dimensional bounded polytope such that $\bar{\mathcal{B}}_b\subseteq \bar{\mathcal{G}} \subseteq \bar{\mathcal{A}}_b$.
This polytope corresponds to an $(m+k-r)\times k$ rank-$k$ matrix $\bar{B}_b$, which has $k-r$ extra columns added to the right of those of $\bar{A}$.

Now $\bar{\mathcal{B}}_b$ and $\bar{\mathcal{A}}_b$ are both $(k-1)$-dimensional
polytopes and we define $\bar{\mathbf{C}}=\bar{A}_b\bar{B}_b$ as the $(m+2k-2r) \times (n+k-r)$ slack of vertices of $\bar{\mathcal{B}}_b$ with respect to the facets of $\bar{\mathcal{A}}_b$.
We note that $\bar{\mathbf{C}}$ contains $\mathbf{C}$ as a submatrix in its top-left corner.

Using Ref.~\cite{vavasis}, the existence of the intermediate simplex $\bar{\mathcal{G}}$ is equivalent to the existence of an NMF of inner dimension $k$ of $\bar{\mathbf{C}}$.
Since $\rank \bar{\mathbf{C}}=k$, it follows also that the NNR of $\bar{\mathbf{C}}$ is $k$.
\end{proof}

\begin{figure}
    \centering
    \begin{subfigure}[b]{0.45\columnwidth}
        \centering
        \includegraphics[width=0.8\linewidth]{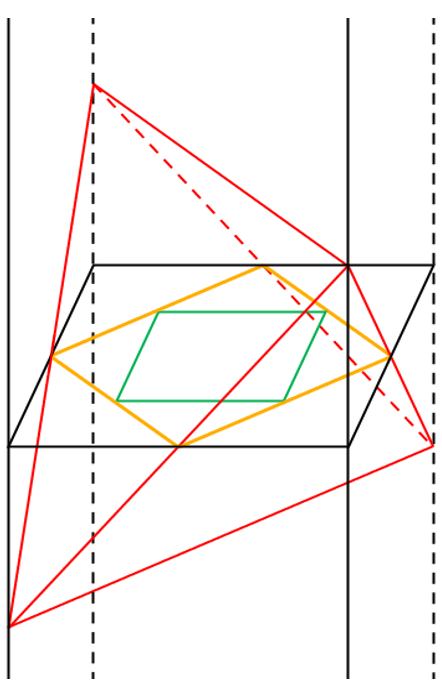}
        \caption{}
    \end{subfigure}
    \begin{subfigure}[b]{0.45\columnwidth}
        \centering
        \includegraphics[width=0.8\linewidth]{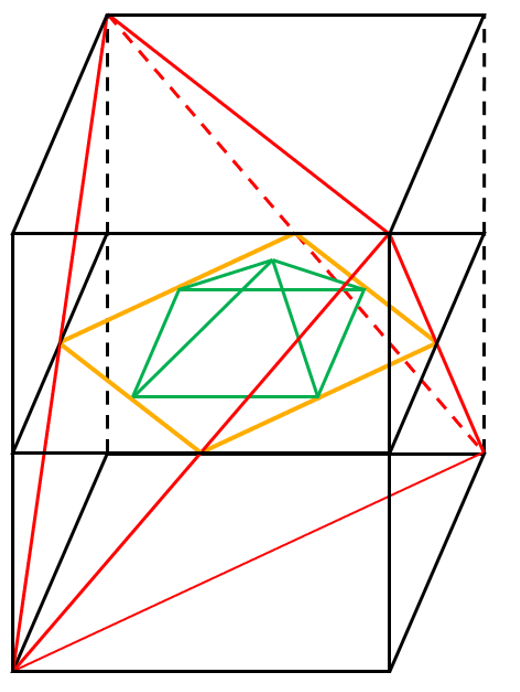}
        \caption{}
    \end{subfigure}
    \caption{Illustration of the intermediate simplex problem in Theorem~\ref{thm:reduction}. (a) The inner polytope (green) is not full-dimensional, and the outer polytope (black) is unbounded. (b) The inner polytope (green) is full-dimensional, and the outer polytope (black) is bounded.
    The intermediate simplex and its intersection with the affine span of the inner polytope are represented in red and orange, respectively.
    }
    \label{fig1}
\end{figure}

The proof of existence of the matrix $\bar{\mathbf{C}}$ in Theorem~\ref{thm:reduction} could have been done using the topological properties of $\mathbb{R}^k$.
More precisely, we could extend $\bar{\mathcal{B}}$ by simply choosing $k-r$ random vertices from a sufficiently small ball centred at its barycentre.
However, this could have implied unbounded bit lenghts for describing the added vertices.
Our selection of bounding hyperplanes for $\bar{\mathcal{A}}_b$ and the specific construction of $\bar{\mathcal{B}}_b$ ensure that the extra vertices can be specified using a polynomial number of bits, as shown in the following lemma.

\begin{lemma} \label{lemma:bit length}
    Let $\mathbf{C}$ be a rank-$r$ $m\times n$ nonnegative matrix. The bit length of entries in $\bar{\mathbf{C}}$ as defined in Theorem~\ref{thm:reduction} is poly$(b,m,n)$, where $b$ is the bit length of entries in $\mathbf{C}$.
\end{lemma}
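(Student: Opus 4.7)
The plan is to walk through the construction of $\bar{\mathbf{C}} = \bar{A}_b \bar{B}_b$ in Theorem~\ref{thm:reduction} stage by stage, bounding the bit length introduced at each stage, so that the overall bit blow-up is polynomial in $b$, $m$, $n$. First, a rank-$r$ real factorization $\mathbf{C} = AB$ can be obtained by fraction-free Gaussian elimination, and the Hadamard bound on subdeterminants of $\mathbf{C}$ yields $A$ and $B$ with entries of bit length $O(r(b + \log m + \log n)) = \mathrm{poly}(b,m,n)$. The padded matrices $\bar{A}$ and $\bar{B}$ inherit this bound. The $2(k-r)$ additional rows of $\bar{A}_b$ are of the form $\pm\bm{h}^i/2 + \bm{u}/2$, with entries in $\{0,\pm 1/2\}$, contributing only constant bit length, so every entry of $\bar{A}_b$ has bit length $\mathrm{poly}(b,m,n)$.

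The crux is to control the bit length of the $k-r$ added columns $\bm{v}^i = \bm{c}^i + \bar{h}_i \bm{h}^i$ of $\bar{B}_b$, which I would handle by induction on $i$. The barycentre $\bm{c}^i$ of $\bar{\mathcal{B}}_i$ is the mean of at most $n + i - 1$ vertices (columns of $\bar{B}$ together with the previously constructed $\bm{v}^j$), so by the inductive hypothesis it has bit length $\mathrm{poly}(b,m,n)$. For the height $\bar{h}_i = d_{1i}/d_{2i}$ of Eq.~\eqref{eq:h_i}, I would make the inscribed and circumscribing simplices fully explicit. For $\mathcal{S}_{1i}$, Carath\'eodory's theorem gives $r + i$ affinely independent vertices of $\bar{\mathcal{B}}_i$ whose convex hull contains $\bm{c}^i$, obtained as the solution to a linear system of polynomial bit complexity; the distance $d_{1i}$ from $\bm{c}^i$ to the nearest facet of $\mathcal{S}_{1i}$ is then a point-to-hyperplane expression in polynomial-bit entries, so $d_{1i} \geq 2^{-\mathrm{poly}(b,m,n)}$. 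For $\mathcal{S}_{2i}$, I would use the polar-dual construction indicated in the theorem: any $r + i$ affinely independent rows of $\bar{A}$ (polynomial bit length) determine a simplex inscribed in the polar dual of $\bar{\mathcal{A}}$, whose dual circumscribes $\bar{\mathcal{B}}_{\bar{\mathcal{A}}i}$, with vertices given as solutions of linear systems in polynomial-bit coefficients via Cramer's rule; hence $d_{2i} \leq 2^{\mathrm{poly}(b,m,n)}$. Rounding $d_{1i}/d_{2i}$ down to a nearby rational if necessary yields $\bar{h}_i$ of polynomial bit length, and the induction closes.

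Since $\bar{\mathbf{C}} = \bar{A}_b \bar{B}_b$ is a product of matrices of dimensions polynomial in $m,n,k$ whose entries have bit length $\mathrm{poly}(b,m,n)$, and a matrix product involves polynomially many such multiplications and additions, each entry of $\bar{\mathbf{C}}$ has bit length $\mathrm{poly}(b,m,n)$, as required. The main obstacle I anticipate is the polar-dual construction of $\mathcal{S}_{2i}$: a naive numerical dualisation could blow up bit lengths, so I would perform it symbolically by reading off the circumscribing simplex directly from $r+i$ pre-selected facets of $\bar{\mathcal{A}}$ and bounding the resulting vertex coordinates through Cramer's rule, keeping all quantities within the polynomial-bit regime and making the induction uniform in $i$.
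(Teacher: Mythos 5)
Your proposal follows the same high-level strategy as the paper---bound the bit length stage by stage, exploiting that rational arithmetic with polynomial-bit inputs produces polynomial-bit outputs, and track how the $\bar{h}_i$ degrade across iterations---but it differs in the constructions used for the inscribed and circumscribing simplices, and those differences introduce gaps.

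For $\mathcal{S}_{1i}$, you invoke Carath\'eodory's theorem to find $r+i$ affinely independent vertices of $\bar{\mathcal{B}}_i$ whose convex hull contains $\bm{c}^i$. But Carath\'eodory only guarantees containment, not \emph{interior} containment: the returned simplex can have $\bm{c}^i$ on one of its facets, in which case $d_{1i}=0$ and the whole bound on $\bar{h}_i$ collapses. The paper sidesteps this by constructing $\mathcal{S}_{1i}$ with a ray-shooting procedure: fix $r$ (and later $r+i$) approximately evenly spaced unit directions from $\bm{c}^i$, push out along each until you hit $\partial\bar{\mathcal{B}}_i$, and take those hit points as the vertices. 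Because each vertex is reached by moving a strictly positive distance away from $\bm{c}^i$ (which is in the relative interior, being the barycentre), that construction guarantees $d_{1i}>0$ automatically, with bit lengths controlled by a binary-search linear program. You would need an analogous refinement of your Carath\'eodory step.

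For $\mathcal{S}_{2i}$, taking ``any $r+i$ affinely independent rows of $\bar A$'' does not in general give a simplex that is bounded, let alone one that contains $\bar{\mathcal{B}}_{\bar{\mathcal{A}}i}$: affine independence of the normals does not prevent them from all pointing into the same half-space, in which case the intersection of those half-spaces is an unbounded cone. The paper's route through the polar dual is not merely cosmetic---it fixes an origin inside $\bar{\mathcal{A}}$, inscribes a simplex in the polar dual of $\bar{\mathcal{A}}$ that \emph{contains that origin}, and then dualizes back; the origin-containment condition is exactly what makes the final simplex bounded and circumscribing. Your Cramer's-rule bit-length analysis of the resulting vertices is fine once the right subset is selected, but the selection itself is the missing step. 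Finally, your induction relies on rounding $\bar h_i$ at each step to keep bit lengths uniform, which is the right idea, but it is worth noting that the paper makes this quantitative: each iteration degrades the lower bound on $\bar h_i$ by at most a factor $1/(m+i)$, i.e.\ by $O(\log(m+i))$ bits, and since there are at most $k-r\le r^2\le mn$ iterations the total is still $\operatorname{poly}(b,m,n)$. Stating that decay rate explicitly is what turns ``the induction closes'' into an actual bound.
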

\begin{proof}

    Since for each $i$, $\bar{h}_i$ only has to be a positive lower bound of $h_i$, it can be rounded down to a positive number when constructing $\bar{\mathcal{B}}_b$. Therefore, only the orders of magnitude of $\bar{h}_i$s determine the number of extra bits needed for defining $\bar{\mathbf{C}}$.
    
    Assuming that the entries of $\mathbf{C}$ are rational and defined with $b$ bits, the entries of the rank factors $A$ and $B$ are, without loss of generality, rational and defined with poly$(b,m,n)$ bits. Therefore, the vertices of $\bar{\mathcal{B}}$, the point $\bm{c}^1$ and the facets of $\bar{\mathcal{A}}$, as in Theorem~\ref{thm:reduction}, are defined with poly$(b,m,n)$ bits.
    Furthermore, it follows that $\mathcal{S}_{11}$ and $\mathcal{S}_{21}$, defined in the proof of Theorem~\ref{thm:reduction}, can be efficiently chosen such that the same bound on bit length applies to the vertices of the former and facets of the latter.
    For example, $\mathcal{S}_{11}$ can be found by choosing $r$ approximately evenly spaced unit vectors $\{\bm{f}^i\}$ that have poly$(b,m,n)$ bit representations. Then, for each $\bm{f^i}$, a vertex of $\mathcal{S}_{11}$ is given by $\bm{c}^1+f_i\bm{f}^i$, where $f_i$ is chosen such that $\bm{c}^1+f_i\bm{f}^i\in \bar{\mathcal{B}}$. The inclusion of $\bm{c}^1+f_i\bm{f}^i$ in $\bar{\mathcal{B}}$ can be decided by solving a linear program which has a number of constraints and parameters linear in the number of vertices of $\bar{\mathcal{B}}$. Therefore, since the input data to this linear program has poly$(b,m,n)$ bit length, a solution for $f_i$ with poly$(b,m,n)$ bit length exists and can be found using a binary search.

    We will show that each stage of the computation of $\bar{h}_i$s requires only poly$(b,m,n)$ bits. 
    The distance $d_1$ from $\bm{c}^1$ to $\partial\mathcal{S}_{11}$ is given by the distance to the closest facet of $\partial\mathcal{S}_{11}$.
    Given that $\mathcal{S}_{11}$ is $(r-1)$-dimensional, only $r-1$ vertices define each of its bounding hyperplanes. 
    Let $\{\bm{w}^1,...,\bm{w}^{r-1}\}$ be the subset of vertices of $\mathcal{S}_{11}$ in one of the bounding hyperplanes. 
    Then, to find the normal vector, define the matrix $W$ as $W_{i,:}=\bm{w}^1-\bm{w}^{i+1}$ for $i=1,...,r-2$. 
    Gaussian elimination can be used to find an unnormalized vector $\bm{n}$ in the subspace of $\operatorname{span}(\bar{\mathcal{B}})$. 
    This yields a rational vector and only requires poly$(b,m,n)$ bits.

    The distance from $\bm{c}^1$ to the hyperplane is given by $(\bm{c}^1-\bm{w}^1)^{\top}\bm{n}/|\bm{n}|$. The inner product of two rational vectors each defined with poly$(b,m,n)$ bits will have a bit length of poly$(b,m,n)$. Therefore, the minimum and maximum possible orders of magnitude of such an inner product are $2^{-\operatorname{poly}(b,n,m)}$ and $2^{\operatorname{poly}(b,n,m)}$, respectively.

    Now we will consider the maximum distance $d_2$ between two vertices of $\mathcal{S}_{21}$. 
    The intersection of only $r-1$ hyperplanes of $\bar{\mathcal{B}}_{\bar{\mathcal{A}}}$ define a vertex.
    This point can be found by solving the system of linear equations $H\bm{x} = \bm{o}$, where $H$ is a rational matrix of bit length poly$(b,m,n)$, and $\bm{o}$ is a rational vector whose entries are the offsets of the hyperplanes, also of bit length poly$(b,m,n)$. 
    As shown in Ref.~\cite{Storjohann2015}, inverting a rational matrix increases the bit length only polynomially. 
    It follows that the solution $|\bm{x}^*|$ has a maximum order of magnitude of $2^{\operatorname{poly}(b,m,n)}$.
    Furthermore, we can apply the same reasoning to the polytopes $\mathcal{S}_{2i}$ as defined in the proof of Theorem~\ref{thm:reduction}.

    Putting this together and using Eq.~\eqref{eq: htilde}, the minimum order of magnitude of $\bar{h}_1$ is $2^{-\operatorname{poly}(b,n,m)}$.
    However, we need to also take into consideration the iterations in the proof of Theorem~\ref{thm:reduction}. 
    Starting from $\bar{\mathcal{B}}$, each iteration $\bar{\mathcal{B}}_i$ is constructed by adding a vertex of the form $\bm{c}_i+\bar{h}_{i-1}\bm{h}^{i-1}$ to $\bar{\mathcal{B}}_{i-1}$ and taking the convex hull.

    Consider the second iteration, in which we find the distance of the barycentre $\bm{c}^2$ of $\bar{\mathcal{B}}_2$ to $\partial\mathcal{S}_{1,2}$, where $\bar{\mathcal{B}}_2=\operatorname{conv}(\bar{\mathcal{B}}\cup \bm{c}^1+\bar{h}_1\bm{h}^1)$, and $\mathcal{S}_{1,2}$ is a simplex contained in $\bar{\mathcal{B}}_2$ containing $\bm{c}^2$. This can simply be chosen as $\mathcal{S}_{1,2}=\operatorname{conv}(\mathcal{S}_{11}\cup \bm{c}^1+\bar{h}_1\bm{h}^1)$
    Given that $\bar{\mathcal{B}}$ has $m$ vertices, and assuming $\bar{h}_1$ is sufficiently small, this distance is given by $\bar{h}_1/m$, so its order of magnitude is at least $2^{-\operatorname{poly}(b,n,m)}$. 
    Then, by the same reasoning, the minimum distance given in each iteration is also of this form, where division by $m+i$ decreases the exponent by O(log$(m+i)$). $i$ has a maximum value of $k-r$, so the minimum order of magnitude of $\bar{h}_i$s is $2^{-\operatorname{poly}(k,l,n,m)}$.

    It follows that $\bar{h}_i$s can be rounded down such that their bit lengths are at most $\operatorname{poly}(b,k,n,m)$. Therefore, the bit lengths of the entries of $\bar{\mathbf{C}}$ are also $\operatorname{poly}(b,k,n,m)$. Since $k\leq r^2$, which is shown in Refs.~\cite{Selby_2024, Gitton_2022, schmid_2021}, and $r\leq m,n$, the bit lengths are of the form $\operatorname{poly}(b,n,m)$.
\end{proof}

\begin{corollary} \label{cor: polynomial}
    The reduction in Theorem~\ref{thm:reduction} can be computed in $\operatorname{poly}(b,n,m)$ time.
\end{corollary}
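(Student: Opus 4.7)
The plan is to walk through the construction in Theorem~\ref{thm:reduction} step by step and argue that each operation runs in $\operatorname{poly}(b,n,m)$ time, using the polynomial bit bound from Lemma~\ref{lemma:bit length} to ensure arithmetic on intermediate rationals stays efficient. First I would note that the rank factorization $\mathbf{C}=AB$ can be produced in polynomial time by standard Gaussian elimination (or reduced row-echelon form), yielding rational $A,B$ of $\operatorname{poly}(b,n,m)$ bit length. Zero-padding to form $\bar{A}$ and $\bar{B}$, and appending the $2(k-r)$ rows $\pm\bm{h}^i/2+\bm{u}/2$ to obtain $\bar{A}_b$, are trivial, with $k-r\leq k\leq r^2\leq \min(m,n)^2$ controlling the number of appended rows.

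The bulk of the argument would address the iterative construction of the $k-r$ additional vertices $\bm{v}^i=\bm{c}^i+\bar{h}_i\bm{h}^i$ of $\bar{\mathcal{B}}_b$. At each iteration $i$, I would argue: (a) the barycenter $\bm{c}^i$ is a simple average of known vertices and thus polynomial; (b) the inner simplex $\mathcal{S}_{1i}\subseteq \bar{\mathcal{B}}_i$ containing $\bm{c}^i$ can be built exactly as in the proof of Lemma~\ref{lemma:bit length}, by picking $r-1+i$ approximately evenly spaced unit vectors with polynomial bit length and binary-searching the scaling factors via a linear program of polynomial size; (c) the outer simplex $\mathcal{S}_{2i}\supseteq \bar{\mathcal{B}}_{\bar{\mathcal{A}}i}$ can be obtained via polar duality, since the polar dual of $\bar{\mathcal{A}}$ has explicit vertices coming from the rows of $A$, so finding a simplex in that dual containing the origin reduces to another polynomial-size linear program, and then dualizing recovers $\mathcal{S}_{2i}$; (d) $d_{1i}$ (shortest distance from $\bm{c}^i$ to $\partial\mathcal{S}_{1i}$) and $d_{2i}$ (maximum pairwise vertex distance in $\mathcal{S}_{2i}$) require only normal-vector computations via Gaussian elimination and a few rational inner products, all polynomial. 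By Lemma~\ref{lemma:bit length}, the quotient $\bar{h}_i=d_{1i}/d_{2i}$ (rounded down) has $\operatorname{poly}(b,n,m)$ bit length, so constructing $\bm{v}^i$ and passing it to iteration $i+1$ preserves the polynomial bound on all data.

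Once all $\bar{h}_i$s are computed, forming the matrix $\bar{B}_b$ of extended vertices and multiplying $\bar{\mathbf{C}}=\bar{A}_b\bar{B}_b$ is a standard matrix product on rationals of polynomial bit length, hence polynomial time. Since $k\leq r^2\leq \min(m,n)^2$, the number of iterations is polynomial and the total running time telescopes to $\operatorname{poly}(b,n,m)$.

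The main obstacle — already discharged by Lemma~\ref{lemma:bit length} — is controlling the bit length of the $\bar{h}_i$s, which could in principle shrink exponentially across iterations because each new inner simplex is built from the previous one plus a new vertex at height $\bar{h}_{i-1}$. The key observation I would rely on is that rounding $\bar{h}_i$ down only weakens the lower bound in Eq.~\eqref{eq:h_i}, so we can safely truncate to $\operatorname{poly}(b,n,m)$ bits at every stage without destroying the nested-simplex property, and this is exactly what keeps the overall computation polynomial.
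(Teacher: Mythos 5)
Your proof is correct and takes essentially the same approach as the paper's: it walks through the construction from Theorem~\ref{thm:reduction} step by step, argues each stage (rank factorization, building $\mathcal{S}_{1i}$ and $\mathcal{S}_{2i}$, computing $d_{1i}$, $d_{2i}$ via H/V-representation conversions, and the final matrix product) is polynomial, and invokes Lemma~\ref{lemma:bit length} to bound the bit lengths of the $\bar{h}_i$s so that the arithmetic stays polynomial across the $k-r\leq r^2$ iterations. The only difference is that you supply more detail than the paper's terse write-up, which simply asserts that the simplexes can be constructed in polynomial time and that H/V conversions of simplexes are polynomial.
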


\begin{proof}

    Firstly, the simplexes $\mathcal{S}_{1i}$ and $\mathcal{S}_{2i}$ for each iteration can be constructed in polynomial time. Then, we note that the H-representaion of a simplex can be computed in polynomial time from its V-representation and vice versa. Therefore, the distances $d_{1i}$ and $d_{2i}$ can be computed in runtimes that are polynomial in the bit length required for the desired precision of these distances.

    Since this bit length is bounded by $\operatorname{poly}(b,n,m)$, the complexity of the whole procedure given in the proof of Theorem~\ref{thm:reduction} has a similar bound.
\end{proof}

In his seminal work, Moitra~\cite{moitra} showed through an explicit algorithm that
the complexity of deciding whether the NNR of an $m\times n$ nonnegative matrix is at most $k$ is poly$(b,m,n)^{O(k^2)}$.
Here, as before, $b$ determines the bit length of the entries of the input.
Therefore, using Corollary~\ref{cor: polynomial}, and given that $r\leq m,n$ and $k\leq r^2$, we have the following result.
\begin{corollary}\label{Corollary: complexity}
    The complexity of deciding if a rank-$r$ COPE $\mathbf{C}$ admits a noncontextual ontological model of inner dimension $k$ is at most \smash{poly$(b,m,n)^{O(k^2)}$}, where the entries of $\mathbf{C}$ are given with a bit length $b$.
\end{corollary}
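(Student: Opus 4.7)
The plan is to chain the reduction of Theorem~\ref{thm:reduction} with Moitra's NMF algorithm, using Lemma~\ref{lemma:bit length} and Corollary~\ref{cor: polynomial} to control the blowup in instance size and bit length.

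First, I would dispatch the boundary cases. Any ENMF of $\mathbf{C}$ must have inner dimension at least $r=\rank\mathbf{C}$, so for $k<r$ the answer is trivially no. For $k=r$ the decision reduces to the one handled in Corollary~\ref{col:detecting} and is polynomial. The remaining and interesting case is $k>r$, where Theorem~\ref{thm:reduction} applies.

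Next, given the input rank-$r$ COPE $\mathbf{C}$ of size $m\times n$ with entries of bit length $b$, I would construct the matrix $\bar{\mathbf{C}}$ of Theorem~\ref{thm:reduction}. By Corollary~\ref{cor: polynomial}, this construction runs in $\operatorname{poly}(b,m,n)$ time, and by Lemma~\ref{lemma:bit length} the entries of $\bar{\mathbf{C}}$ have bit length $\bar{b}=\operatorname{poly}(b,m,n)$. By the statement of the theorem, $\bar{\mathbf{C}}$ has dimensions $\bar{m}\times\bar{n}$ with $\bar{m}=m+k-r$ and $\bar{n}=n+2k-2r$, is of rank exactly $k$, and an ENMF of inner dimension $k$ of $\mathbf{C}$ exists if and only if the NNR of $\bar{\mathbf{C}}$ equals $k$. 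Because $\rank\bar{\mathbf{C}}=k$ already implies $\operatorname{NNR}(\bar{\mathbf{C}})\geq k$, deciding the equality is the same as deciding whether $\operatorname{NNR}(\bar{\mathbf{C}})\leq k$.

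Finally, I would feed $\bar{\mathbf{C}}$ into Moitra's algorithm~\cite{moitra}, which decides whether the NNR of an $M\times N$ nonnegative matrix with bit length $B$ is at most $k$ in time $\operatorname{poly}(B,M,N)^{O(k^2)}$. Substituting $M=\bar{m}$, $N=\bar{n}$, $B=\bar{b}$, and using the known bound $k\leq r^2\leq \min(m,n)^2$ invoked in Lemma~\ref{lemma:bit length}, all of $\bar{m},\bar{n},\bar{b}$ are $\operatorname{poly}(b,m,n)$, yielding a total runtime of $\operatorname{poly}(b,m,n)^{O(k^2)}$; the additive $\operatorname{poly}(b,m,n)$ cost of the reduction itself is absorbed into this bound. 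The only step that requires any care is the bookkeeping that shows the $k$-dependence appears only in the exponent (from Moitra) and not in the base of the polynomial (from the reduction), which is guaranteed because the size increases $k-r$ and $2k-2r$ are themselves bounded by $r^2\leq\min(m,n)^2$; there is no genuine obstacle beyond what has been established in the preceding results.
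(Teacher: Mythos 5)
Your proposal follows the paper's own proof closely: reduce via Theorem~\ref{thm:reduction}, control instance size and bit length via Lemma~\ref{lemma:bit length} and Corollary~\ref{cor: polynomial}, then invoke Moitra's $\operatorname{poly}(B,M,N)^{O(k^2)}$ algorithm for the NNR decision, and observe that the rank of $\bar{\mathbf{C}}$ being exactly $k$ makes the ``$\operatorname{NNR}\leq k$'' test equivalent to ``$\operatorname{NNR}= k$.'' That is precisely the paper's chain of reasoning.

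The one place you slip is in dispatching the $k=r$ case. You claim it ``reduces to the one handled in Corollary~\ref{col:detecting} and is polynomial,'' but that corollary decides whether an ENMF of \emph{some} inner dimension exists, which is a weaker question than whether one of inner dimension exactly $r$ exists (the former can be yes while the latter is no). Moreover, Corollary~\ref{col:detecting} holds only for \emph{fixed} rank, so it does not deliver a polynomial bound when $r$ scales with the input. Neither slip is fatal here: for $k=r$, simply running Moitra's algorithm directly on $\mathbf{C}$ decides $\operatorname{NNR}\leq r$ in $\operatorname{poly}(b,m,n)^{O(r^2)}=\operatorname{poly}(b,m,n)^{O(k^2)}$ time, which is within the claimed bound, and this is the fix that keeps your argument aligned with what the corollary actually asserts. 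The rest — the bookkeeping that $\bar{m},\bar{n},\bar{b}$ are $\operatorname{poly}(b,m,n)$ using $k\leq r^2\leq\min(m,n)^2$, and the absorption of the reduction cost into the Moitra runtime — is correct and matches the paper.
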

Corollary~\ref{Corollary: complexity} provides an upper bound on the complexity of deciding if the ENNR of a matrix is at most $k$, and Corollary~\ref{cor:lower bound} provides a lower bound on the complexity of deciding if the ENNR is at least $r$. 
However, a lower bound in $k$ remains elusive.

The constructive proofs used in Theorem~\ref{thm:reduction} and Corollary~\ref{cor: polynomial} lead to an algorithm for finding the smallest noncontextual model as follows.

\begin{algorithm} [H]                    
\caption{Finding the ENNR of a COPE\\
\textbf{Input:} A rank-$r$ COPE matrix $\mathbf{C}$.\\
\textbf{Output:} The ENNR of $\mathbf{C}$.}          
\label{alg:ENNR}                          
\begin{algorithmic} [1]                   % enter the algorithmic environment
\STATE Compute a real rank factorization $\mathbf{C}=AB$.
\STATE Set $k=r$ and $ENER=0$.
\STATE Construct the matrices $\bar{A}$ and $\bar{B}$, where the rows of $\bar{A}$ and the columns of $\bar{B}$ are the trivial embeddings of the rows of $A$ and columns of $B$, respectively, into $\mathbb{R}^k$. 
\STATE Add $2(k-r)$ rows of length $k$ to the bottom of $\bar{A}$, of the form $-\bm{h}^i/2+\bm{u}/2$ for $1\leq i \leq k-r$ and $\bm{h}^i/2+\bm{u}/2$ for $k-r+1\leq i \leq 2(k-r)$ to construct $\bar{A}_b$, where $\bm{h}^i_j=\delta_{ij}$ for $1\leq j \leq k$.
\STATE Set $i=1$ and let $\mathcal{\bar{B}}_1=\mathcal{\bar{B}}$.
\STATE Compute $\bm{c}^i$, the mean of the vertices of $\mathcal{\bar{B}}_i$, and find a simplex $\mathcal{S}_{1i}$ that contains $\bm{c}^i$ and is contained in $\mathcal{\bar{B}}_i$. To find such a simplex efficiently, see Lemma~\ref{lemma:bit length}.
\STATE Find a simplex $\mathcal{S}_{2i}$ that contains $\bar{\mathcal{B}}_{\bar{\mathcal{A}}i}:=\operatorname{span}(\bar{\mathcal{B}}_i)\cap\bar{\mathcal{A}}$. This can be done by finding a simplex which is contained in the polar dual of $\bar{\mathcal{B}}_{\bar{\mathcal{A}}i}:=\operatorname{span}(\bar{\mathcal{B}}_i)\cap\bar{\mathcal{A}}$ and contains the origin, then taking the polar dual of that simplex.
\STATE Compute $d_{1i}$ as the distance from $\bm{c}^i$ to $\partial\mathcal{S}_{1i}$, and $d_{2i}$ as the diameter of $\mathcal{S}_{2i}$, both to only the precision of the leading respective bit. Round down $d_{1i}$ to the previous power of 2, and round up $d_{2i}$ to the next power of 2. Then, compute $\bar{h}_i=d_{1i}/d_{2i}$.
\STATE Construct $\bar{\mathcal{B}}_{i+1}=\operatorname{conv}(\bar{\mathcal{B}}_i\cup \bm{v}^i)$, where $\bm{v}^i = \bm{c}^i + \bar{h}_i\bm{h}^i$ and move onto the next iteration.
\STATE If $i < k-r$, set $i=i+1$ and go to step 6.
\STATE Set $\bar{\mathcal{B}}_b=\bar{\mathcal{B}}_{k-r}$ and compute $\mathbf{\bar{C}}:=\bar{A}_b\bar{B}_b$.
\STATE Decide if the NNR of $\mathbf{\bar{C}}$ equals $k$, which can be done using Moitra's algroithm. If yes, the ENNR equals $k$, set $ENNR=k$, and abort. Else if $k\leq r^2$, set $k=k+1$ and go to step 3.
\end{algorithmic}
\end{algorithm}

Indeed, using the algorithm above, if the final value of $ENNR$ is zero it means that an ENNR is not found, hence an ENMF does not exist.
Our algorithm uses a subroutine that computes NNR.
Moitra's algorithm is so far the best for this purpose, and an expert reader may notice that its first step is to guess the ranks of the nonnegative matrix factors. Therefore, one may ask if these ranks could be chosen to equal the rank of the input matrix to attempt to find an ENNR of a specified inner dimension $k$.
The limitation of Moitra's algorithm in this case is that it searches for NMFs with an additional restriction, as outlined in Appendix~\ref{section:Moitra}. This restriction does not rule out the simplest NMF, but, as shown in Appendix~\ref{section:Moitra}, could rule out the simplest ENMF.

\section{Comparison to Simplex Embeddability}\label{sec:comparison}

As we mentioned in Sec.~\ref{subsec:OM}, every NMF  $\mathbf{C}= \mathbf{R}\mathbf{E}$ of a COPE matrix, after an appropriate rescaling of its factors, corresponds to an ontological model. 
We have also seen that any ENMF of $\mathbf{C}$, which is simply an NMF in which all factors have equal ranks, can be written as $\mathbf{C}= \bar{A}\bar{G}\bar{G}^{-1}\bar{B}$ so that the factor matrices correspond to the nested polytope $\bar{A}\mapsto \bar{\mathcal{A}}$, $\bar{G}\mapsto \bar{\mathcal{G}}$, and $\bar{B}\mapsto \bar{\mathcal{B}}$ such that $\bar{\mathcal{G}}$ is a $(k-1)$-simplex and $\bar{\mathcal{B}} \subseteq \bar{\mathcal{G}} \subseteq \bar{\mathcal{A}}$, as shown in Lemma~\ref{lemma:cone}.

Alternatively, it was shown in Refs.~\cite{Shahandeh2021,simplexembeddability} that noncontextuality has a geometrical interpretation known as \textit{simplex embeddability}. 
Namely, a GPT admits a noncontextual ontological model if and only if its state and effect spaces can be embedded into a simplex and a hypercube, respectively.
As demonstrated in Section.~\ref{section 2}, a GPT model can be easily found from the COPE of a prepare-measure experiment. 
In this section, we characterize the relationship between our geometric interpretation of noncontextuality and simplex embeddability.

We denote the polytopes of states and effects by $\mathcal{Q}$ and $\mathcal{E}$, respectively, which live in an inner product space $(\mathfrak{V},\langle\cdot,\cdot\rangle_{\mathfrak{V}})$. Then, simplex embeddability is defined as follows.

\begin{definition}[Simplex Embeddability~\cite{simplexembeddability}.] \label{embeddabilitydefn}
    A GPT with state and effect spaces $\mathcal{S}$ and $\mathcal{E}$ over a vector space $\mathfrak{V}$ with inner product $\langle \cdot ,\cdot\rangle_{\mathfrak{v}}$
      is simplex-embeddable if and only if there exists a $(k-1)$-simplex $\mathcal{W}_k$ (whose affine span does not contain the origin) in a $k$-dimensional inner product space $\mathfrak{W}_k$
       and its dual hypercube
      $\mathcal{W}_{k}^*$ (including the zero and unit vectors), and
      a pair of linear maps $\iota,\kappa : \mathfrak{V}\to \mathfrak{W}_k$ such that $\iota(\mathcal{S}) \subseteq  \mathcal{W}_{k}$, $\kappa(\mathcal{E}) \subseteq \mathcal{W}_{k}^*$, and $\langle\kappa({\boldsymbol{e}}),\iota({\boldsymbol{s}})\rangle_\mathfrak{W} = \langle \boldsymbol{e},\boldsymbol{s}\rangle_\mathfrak{v}$ for all GPT states and effects.
    \end{definition}

This formulation, like ours, requires finding a single polytope, specifically a simplex.
Furthermore, Ref.~\cite{simplexembeddability} shows that after the simplex embedding is found, there exists a linear transformation $T$ such that its adjoint, $T^\dagger$ maps the vertices $\mathcal{W}_k$ to the positive unit vectors in $\mathbb{R}^k$, and $\langle T(\bm{e}),T^\dagger(\bm{s})\rangle=\langle\bm{e},\bm{s}\rangle$ for all pairs of states and effects.
We can identify our matrix $\bar{G}$ with the transformation $T\circ\kappa$, and $\bar{G}^{-1}$ with $T^\dagger\circ\iota$.
To see the connection between $\mathcal{\bar{G}}$ and $\mathcal{W}_k$, let $\iota^1$ be the trivial embedding into $\mathbb{R}^k$, and define $\kappa^2$ and $\iota^2$ such that $\kappa=\kappa^2\circ\iota^1$ and $\iota=\iota^2\circ\iota^1$. Then, $\iota^2(\mathcal{\bar{G}})=\mathcal{W}_k$.

Ref.~\cite{linearprogram} suggests using a linear program to decide if the simplex embeddability conditions can be satisfied. Here, each solution to the linear program is a matrix that represents the product of a pair of linear maps $\iota,\kappa$. Then, the dimensionality of the codomains of the associated maps $\iota,\kappa$ to a solution instance is given by its nonnegative rank.
The smallest noncontextual model is found by searching over the set of solutions to find the minimum nonnegative rank in that set. 
Since the nonnegative rank is a non-smooth function of matrices, the number of times a nonnegative rank of a matrix needs to be computed could be very large, perhaps even exponential in the number of input states and effects.
The key utility of our construction is that it allows us to compute the ontic size $d$ by only computing the nonnegative rank of $d+1$ matrices.
Using our construction, the size $d$ of the smallest noncontextual model is found by testing increasing ontic sizes starting at the minimum GPT dimension $r$. For each ontic size, the nonnegative rank of only one matrix needs to be computed, so the total number of nonnegative rank computations needed is $d-r$.

\section{Ontic Size Separation} \label{Ontic gap}

We have seen that the problem of finding the smallest noncontextual ontological model for a given COPE matrix is equivalent to the problem of finding the smallest (not necessarily noncontextual) ontological model of a different nonnegative matrix.
We may therefore ask whether a noncontextual ontological model, when exists, is the smallest ontological model.
In other words,
is there a gap between the ENNR and the NNR whenever an ENMF exists?

In this section, we present a COPE matrix that shows such a gap, indicating that the ENNR cannot be trivially obtained from the NNR and vice versa.

We show this for a $5\times5$ rank-3 COPE matrix whose first column is given by,
\begin{equation}
    \mathbf{C}^1_{:1}=\frac{1}{10}\left(
\begin{array}{ccccc}
 2 \left(\sqrt{5}-1\right) \\
 8-2 \sqrt{5}\\
 2 \left(\sqrt{5}-1\right)\\
 3-\sqrt{5}\\
 3-\sqrt{5}\\
\end{array}
\right),
\end{equation}
and each subsequent column is given by permuting the entries the previous column up by one. This COPE also admits the following real factorization,
\begin{equation} \label{C1}   \mathbf{C}^1:=A_{\mathcal{U}}B_{\mathcal{I}}, 
\end{equation}
which has associated inner and outer polygons (pentagons), shown in Fig.~\ref{fig:square}, $\mathcal{B_I}$ and $\mathcal{A_U}$, respectively, as in Lemma~\ref{lemma:lemma4}. These are defined by constructing a pentagon $\mathcal{I}$ and the smallest regular 4-simplex $\mathcal{U}$ that contains it.

We begin with the bounding inequalities of a regular pentagon, given by $K \bm{x} + \bm{p} \geq 0$ wherein,
\begin{equation}
\begin{split}
    K= \begin{pmatrix}
    \sin\left(\frac{2\pi}{5}\right) & \cos\left(\frac{2\pi}{5}\right) \\
    \sin\left(\frac{4\pi}{5}\right) & \cos\left(\frac{4\pi}{5}\right) \\
    \sin\left(\frac{6\pi}{5}\right) & \cos\left(\frac{6\pi}{5}\right) \\
    \sin\left(\frac{8\pi}{5}\right) & \cos\left(\frac{8\pi}{5}\right) \\
    \sin\left(\frac{10\pi}{5}\right) & \cos\left(\frac{10\pi}{5}\right)
    \end{pmatrix},
\end{split}
\end{equation}
and
\begin{equation}
    \bm{p}=
    \frac{\sqrt{5}-1}{10}\begin{pmatrix}
        1 & 1 & 1 & 1 & 1
    \end{pmatrix}^\top.
\end{equation}
The vertices of this are given by,
\begin{equation} \label{xvertices}
\begin{split}
    \bm{x}^{i} &=  \frac{\sqrt{5}-1}{10}\begin{pmatrix}
    \cos\left(\frac{2\pi i}{5}\right) \tan\left(\frac{2\pi}{10}\right) + \sin\left(\frac{2\pi i}{5}\right) \\
    \sin\left(\frac{2\pi i}{5}\right) \tan\left(\frac{2\pi}{10}\right) - \cos\left(\frac{2\pi i}{5}\right)
    \end{pmatrix},
\end{split}
\end{equation}
for $i \in \{1,2,...,5\}$.

Now, we can use this to construct $\mathcal{I}$ and $\mathcal{U}$. The points $\{\bm{y}^{i}\}$ given by $\bm{y}^{i}=(2/5)(K\bm{x}^{i}+\bm{p})$ form a regular pentagon embedded in $\mathbb{R}^{5}_{\geq0}$. 
Let $\mathcal{I}$ and $\mathcal{U}$ be the convex hull of $\{\bm{y}^{i}\}$ and the regular 4-simplex whose vertices are given by the columns of $([\sqrt{5}-1]/5)\mathds{1}_{5\times5}$, respectively.
Note that $\mathcal{U}$ can also be defined as,
\begin{align}
    \mathcal{U}:=\left\{\bm{u}\in\mathbb{R}^5|\bm{u}\geq0, \bm{u}^\top\left(\frac{10}{\sqrt{5}-1}\bm{p}\right)=\frac{\sqrt{5}-1}{5}\right\}.
\end{align}
Then, for any real vector $\begin{pmatrix}
    a & b
\end{pmatrix}^\top\in\mathbb{R}^2$,
\begin{equation}
\begin{split}
    & \begin{pmatrix}
    a & b
\end{pmatrix} K^\top
\left(\frac{10}{\sqrt{5}-1}\bm{p}\right) = \\
    &\sum_{i=1}^5 \left( a \cos\left(\frac{2\pi i}{5}\right) + b \sin\left(\frac{2\pi i}{5}\right) \right)= 0.
\end{split}
\end{equation}
Therefore, given,
\begin{equation}
    \frac{2}{5}\bm{p}^\top\left(\frac{10}{\sqrt{5}-1}\bm{p}\right)=\frac{\sqrt{5}-1}{5},
\end{equation}
it is clear that $\mathcal{I}\subset \mathcal{U}$.

Let $\mathcal{P}_{\mathcal{U}}$ be the projection of $\mathcal{U}$ onto the affine span of $\mathcal{I}$, so $\mathcal{P}_{\mathcal{U}}$ is another regular pentagon.

It can be shown that $\mathcal{P}_{\mathcal{U}}$ is a $\times(\sqrt{5}-1)/2$ scaling of $\mathcal{I}$ about its centre, and this scaling factor $(\sqrt{5}-1)/2\approx0.618$ is derived in Appendix~\ref{size comparison}.
Then, we choose $\mathcal{A}_{\mathcal{U}}$ and $\mathcal{B}_{\mathcal{I}}$ to be nested pentagons where $\mathcal{A}_{\mathcal{U}}$ is a scaling of $\mathcal{B}_{\mathcal{I}}$ by a scale factor $(\sqrt{5}-1)/2$ about its centre. Then, the corresponding matrices $A_{\mathcal{U}}$ and $B_{\mathcal{I}}$ are given by,
\begin{align}
    (A_{\mathcal{U}})_{i,:}=\begin{pmatrix}
        \frac{2}{5-\sqrt{5}} & K_{i,:}
    \end{pmatrix},\\
    (B_{\mathcal{I}})_{:i}=\begin{pmatrix}
        \frac{5-\sqrt{5}}{10} & \bm{x}^{i\top}
\end{pmatrix}^\top.
\end{align}
We have $(A_{\mathcal{U}}B_{\mathcal{I}})_{:i}=K\bm{x}^i+2/(\sqrt{5}-1)\bm{p}$, which has the form of the slack of a vertex of a regular pentagon with respect to the facets of a factor of $(\sqrt{5}-1)/2$ scaling of that pentagon.

Next, to see the existence of a noncontextual model, we construct the infinite cone $\bar{\mathcal{A}}_{\mathcal{U}}\subset\mathbb{R}^5$ from $\mathcal{A}_{\mathcal{U}}$, and trivially embed the polytope $\bar{\mathcal{B}}_{\mathcal{I}}$ into $\mathbb{R}^5$ as in Lemma \ref{lemma:cone}. It follows from the fact that $\mathcal{I}\subset\mathcal{U}$ that there exists a nested simplex $\mathcal{G}$ between $\bar{\mathcal{B}}_{\mathcal{I}}$ and $\bar{\mathcal{A}}_{\mathcal{U}}$. Therefore, Lemma~\ref{lemma:cone} can be used to show that there is a noncontextual model of ontic size five for $\mathbf{C}^1$, given by $\mathbf{C}^1=\mathbf{A}^1\mathbf{B}^1$, where,
\begin{equation}
    \begin{aligned}
        \mathbf{A}^1&=
\left(
\begin{array}{ccccc}
 \frac{5-\sqrt{5}}{10} & \frac{\sqrt{5}}{5} & \frac{5-\sqrt{5}}{10} & 0 & 0 \\
 0 & \frac{5-\sqrt{5}}{10} & \frac{\sqrt{5}}{5} & \frac{5-\sqrt{5}}{10} & 0 \\
 0 & 0 & \frac{5-\sqrt{5}}{10} & \frac{\sqrt{5}}{5} & \frac{5-\sqrt{5}}{10} \\
 \frac{5-\sqrt{5}}{10} & 0 & 0 & \frac{5-\sqrt{5}}{10} & \frac{\sqrt{5}}{5} \\
 \frac{\sqrt{5}}{5} & \frac{5-\sqrt{5}}{10} & 0 & 0 & \frac{5-\sqrt{5}}{10} \\
\end{array}
\right),\\
        \mathbf{B}^1&=\left(
\begin{array}{ccccc}
 0 & \frac{5-\sqrt{5}}{10} & \frac{\sqrt{5}}{5} & \frac{5-\sqrt{5}}{10} & 0 \\
 \frac{5-\sqrt{5}}{10} & \frac{\sqrt{5}}{5} & \frac{5-\sqrt{5}}{10} & 0 & 0 \\
 \frac{\sqrt{5}}{5} & \frac{5-\sqrt{5}}{10} & 0 & 0 & \frac{5-\sqrt{5}}{10} \\
 \frac{5-\sqrt{5}}{10} & 0 & 0 & \frac{5-\sqrt{5}}{10} & \frac{\sqrt{5}}{5} \\
 0 & 0 & \frac{5-\sqrt{5}}{10} & \frac{\sqrt{5}}{5} & \frac{5-\sqrt{5}}{10} \\
\end{array}
\right).
    \end{aligned}
\end{equation}

Before proving the ontic size separation, we present a geometric lemma, which we will use for the main proof.
\begin{lemma} \label{geometric lemma}
Let $\{\bm{v}^{i}\}_{i=0}^4$ be the set of vertices of a convex pentagon. 
For each $\bm{v}^{i}$, let the triangle $\mathcal{T}^{i}$ be $\operatorname{conv}\{\frac{\bm{v}^{i-1}+\bm{v}^{i}}{2},\bm{v}^{i},\frac{\bm{v}^{i+1}+\bm{v}^{i}}{2}\}$, where superscript additions are $\mod 5$.
\begin{figure}[h]
    \centering
    % Subfigure B
    \begin{subfigure}{0.22\textwidth}
        \centering
        \begin{tikzpicture}[scale=1]

    % Define Outer Pentagon Vertices
    \coordinate (O1) at (0.726543, -1.000000);
    \coordinate (O2) at (1.175571, 0.381966);
    \coordinate (O3) at (0.000000, 1.236068);
    \coordinate (O4) at (-1.175571, 0.381966);
    \coordinate (O5) at (-0.726543, -1.000000);

    % Define Inner Pentagon Vertices
    \coordinate (I1) at (0.449028, -0.618034);
    \coordinate (I2) at (0.726543, 0.236068);
    \coordinate (I3) at (0.000000, 0.763932);
    \coordinate (I4) at (-0.726543, 0.236068);
    \coordinate (I5) at (-0.449028, -0.618034);

    \coordinate (G1) at (0.951057, -0.309017);
    \coordinate (G2) at (0.5877855, 0.809017);
    
    \draw[thick, red] (O1) -- (G1) -- (G2) -- (O3) -- (O4) -- (O5) -- cycle;
    \draw[thick, black] (G1) -- (O2) -- (G2);

    \draw[thick, Green] (I1) -- (I2) -- (I3) -- (I4) -- (I5) -- cycle;
    \draw[thick, dashed, black] (G2) -- (O3) -- (O4) -- (O5) -- (O1) -- (G1);

    \end{tikzpicture}
    \caption{}
        \label{fig:square}
    \end{subfigure}
    \hfill
    % Subfigure A
    \begin{subfigure}{0.22\textwidth} 
        \centering
        \begin{tikzpicture}[scale=1]

    % Define Outer Pentagon Vertices
    \coordinate (O1) at (0.726543, -1.000000);
    \coordinate (O2) at (1.175571, 0.381966);
    \coordinate (O3) at (0.000000, 1.236068);
    \coordinate (O4) at (-1.175571, 0.381966);
    \coordinate (O5) at (-0.726543, -1.000000);

    % Define Inner Pentagon Vertices
    \coordinate (I1) at (0.449028, -0.618034);
    \coordinate (I2) at (0.726543, 0.236068);
    \coordinate (I3) at (0.000000, 0.763932);
    \coordinate (I4) at (-0.726543, 0.236068);
    \coordinate (I5) at (-0.449028, -0.618034);

    \coordinate (G1) at (0.85065 , -0.618035);
    \coordinate (G2) at (-0.85065 , -0.618035);
    \coordinate (G3) at (-0.64984,  0.76393);
    \coordinate (G4) at (0.64984,  0.76393);
    % Draw the outer pentagon blue
    \draw[thick, black] (O1) -- (O2) -- (O3) -- (O4) -- (O5) -- cycle;
    \draw[thick, red] (G1) -- (G2) -- (G3) -- (G4) -- cycle; % Nested quadrilateral

    % Draw the inner pentagon (red)
    \draw[thick, Green] (I1) -- (I2) -- (I3) -- (I4) -- (I5) -- cycle;

    \end{tikzpicture}
        \caption{}
        \label{fig:quadrilateral}
        \label{fig:pentagon}
    \end{subfigure}
    
    \caption{Inner (green) and outer (black) pentagons with nested polytopes (red) (a) $\mathcal{G}^2$ and (b) $\mathcal{G}^3$.}
    \label{fig:twoshapes}
\end{figure}
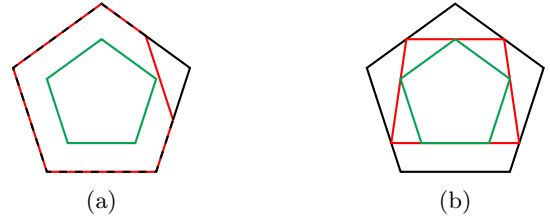
Then it is not possible to inscribe a quadrilateral in the pentagon such that it intersects the interiors of each $\mathcal{T}^{i}$.
\end{lemma}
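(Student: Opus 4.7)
My plan is to derive a contradiction by a counting (pigeonhole) argument. Suppose a convex quadrilateral $\mathcal{Q}$ is inscribed in the pentagon $\mathcal{P}$ (i.e.\ $\mathcal{Q} \subseteq \mathcal{P}$) and meets $\operatorname{int}(\mathcal{T}^i)$ for every $i \in \{0,1,2,3,4\}$. After a small generic perturbation, which preserves the open condition ``$\mathcal{Q}$ reaches $\operatorname{int}(\mathcal{T}^i)$'' for each $i$, I may assume the following genericity: no vertex of $\mathcal{Q}$ lies on $\partial\mathcal{P}$ or at any midpoint $m^j := (\bm{v}^{j}+\bm{v}^{j+1})/2$, and no edge of $\mathcal{Q}$ is collinear with a pentagon side or with a ``mouth'' segment $\overline{m^{i-1} m^{i}}$.

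The geometric backbone is the decomposition $\partial\mathcal{T}^i = \overline{\bm{v}^{i} m^{i-1}} \cup \overline{\bm{v}^{i} m^{i}} \cup \overline{m^{i-1} m^{i}}$, where the first two summands lie on $\partial\mathcal{P}$ and the third is the mouth of $\mathcal{T}^i$. Because $\mathcal{Q} \subseteq \mathcal{P}$, no edge of $\mathcal{Q}$ can cross $\partial\mathcal{P}$, and in general position such an edge does not touch $\partial\mathcal{P}$ except possibly at its own endpoints. Hence any crossing of $\partial\mathcal{T}^i$ by an edge of $\mathcal{Q}$ must occur through the mouth, and since two noncollinear line segments meet in at most one point, each edge of $\mathcal{Q}$ crosses the mouth of $\mathcal{T}^i$ at most once.

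Using this, I will establish the key claim: \emph{for every $i$, at least one vertex of $\mathcal{Q}$ lies in $\mathcal{T}^i$}. Indeed, if $\mathcal{Q} \cap \operatorname{int}(\mathcal{T}^i) \neq \emptyset$ and no vertex of $\mathcal{Q}$ is already in $\mathcal{T}^i$, pick an edge $e$ of $\mathcal{Q}$ with $e \cap \operatorname{int}(\mathcal{T}^i) \neq \emptyset$. By convexity of $\mathcal{T}^i$, the set $e \cap \mathcal{T}^i$ is a nondegenerate subsegment of $e$, and each of its two endpoints is either an endpoint of $e$ (hence a vertex of $\mathcal{Q}$ in $\mathcal{T}^i$) or a transverse crossing of $e$ with $\partial\mathcal{T}^i$. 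The previous paragraph forces at most one of these two endpoints to be of the crossing type, so at least one vertex of $\mathcal{Q}$ must lie in $\mathcal{T}^i$, proving the claim.

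The argument concludes by pigeonhole: in general position, $\mathcal{T}^i \cap \mathcal{T}^{i+1} = \{m^i\}$ and $\mathcal{T}^i \cap \mathcal{T}^j = \emptyset$ for non-adjacent indices, and we have ensured that no vertex of $\mathcal{Q}$ sits at any $m^j$. Thus each of the four vertices of $\mathcal{Q}$ belongs to at most one $\mathcal{T}^i$, so at most four of the five triangles can contain a vertex of $\mathcal{Q}$, contradicting the key claim that demands all five do. The step I expect to be the main obstacle is the perturbation reduction, where one must ensure simultaneously that (i) $\mathcal{Q}$ remains inside $\mathcal{P}$, (ii) the openness of each reach condition is preserved, and (iii) the vertices and edges of $\mathcal{Q}$ avoid the measure-zero set of degenerate alignments with $\partial\mathcal{P}$ and with the mouths; all three follow from genericity but deserve care.
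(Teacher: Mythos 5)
Your proof is correct and is in essence the same argument as the paper's: both reduce the lemma to the claim that each corner triangle $\mathcal{T}^i$ must receive its own distinct vertex of the inscribed quadrilateral $\mathcal{Q}$, which then fails by pigeonhole ($5 > 4$). The difference is in how this per-triangle claim is justified. The paper invokes the fact that a line can meet at most two of the triangle interiors (phrased in terms of ``convex independence'') and then walks around the edges $\overline{\bm{u}^0\bm{u}^1}, \overline{\bm{u}^1\bm{u}^2},\dots$ of $\mathcal{Q}$, forcing each consecutive vertex into the set $\mathcal{T}^i \setminus \operatorname{conv}\{m^{i-1},m^i\}$ so that no vertex is counted twice; it thereby avoids any perturbation. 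You instead observe directly that the only part of $\partial\mathcal{T}^i$ accessible from the interior of $\mathcal{P}$ is the mouth $\overline{m^{i-1}m^i}$, so a generic edge of $\mathcal{Q}$ crosses $\partial\mathcal{T}^i$ at most once, which cleanly forces an endpoint of that edge to sit inside $\mathcal{T}^i$; you then use a generic perturbation to exclude coincidences with $\partial\mathcal{P}$ and the midpoints $m^j$, where the paper instead subtracts the mouth segment. Your version is arguably more rigorous at the ``one vertex per triangle'' step (the paper's transition from the two-triangles-per-line fact to ``$\bm{u}^1$ must be in $\mathcal{T}^2$'' is left somewhat implicit), at the cost of the perturbation reduction you flag. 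That reduction does go through: first shrink $\mathcal{Q}$ toward an interior point to push it into $\operatorname{int}(\mathcal{P})$ (preserving each open condition $\mathcal{Q}\cap\operatorname{int}(\mathcal{T}^i)\neq\emptyset$), then perturb vertices generically within the now-open set of admissible positions; note also that $\mathcal{T}^i\cap\mathcal{T}^{i+1}=\{m^i\}$ holds for any convex pentagon, not merely generically.
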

\begin{proof}
Since all vertices of the pentagon are convexly independent, the interior of each triangle is convexly independent from the other triangles. Therefore, a line can only intersect the interiors of at most two triangles. Assume there is an inscribed quadrilateral with vertices $\{\bm{u}^i\}_{i=0}^3$ that intersects the interiors of all the triangles. Then, at least one vertex of the quadrilateral, $\bm{u}^{0}$ must be in a subset of one of the triangles, say $\mathcal{T}^{1}$, given by $\mathcal{T}^{1}\setminus \operatorname{conv}\{(\bm{v}^{0}+\bm{v}^{1})/2,(\bm{v}^{1}+\bm{v}^{2})/2\}$. If the quadrilateral intersects the interior of $\mathcal{T}^{2}$, then since $\operatorname{conv}\{\bm{u}^{0}, \bm{u}^{1}\}$ can only intersect up to two triangles, $\bm{u}^{1}$ must be in $\mathcal{T}^{2}\setminus \operatorname{conv}\{(\bm{v}^{2}+\bm{v}^{1})/2,(\bm{v}^{3}+\bm{v}^{2})/2\}$. Similarly, the quadrilateral would have to have another vertex, $\bm{u}^{2}$, in $\mathcal{T}^{3}\setminus \operatorname{conv}\{(\bm{v}^{2}+\bm{v}^{3})/2,(\bm{v}^{3}+\bm{v}^{4})/2\}$ to intersect the interior of $\mathcal{T}^{3}$, and so on, i.e., one quadrilateral vertex for each $\mathcal{T}^{i}$, which is not possible.
\end{proof}
We are now ready to prove our second claim.
\begin{proposition} \label{k=4}
There is no ENMF of $\mathbf{C}^1$ of inner dimension four.
\end{proposition}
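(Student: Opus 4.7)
The approach combines Lemma~\ref{lemma:lemma4} with Lemma~\ref{geometric lemma} together with the explicit golden-ratio geometry of the pentagons $\mathcal{B}_{\mathcal{I}}$ and $\mathcal{A}_{\mathcal{U}}$. By Lemma~\ref{lemma:lemma4} with $r=3$ and $k=4$, the existence of an ENMF of $\mathbf{C}^1$ of inner dimension four entails, in particular, the existence of a convex quadrilateral $\mathcal{G}$ with $\mathcal{B}_{\mathcal{I}}\subseteq\mathcal{G}\subseteq\mathcal{A}_{\mathcal{U}}$. My plan is to derive a contradiction with this geometric containment.

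Suppose such a $\mathcal{G}$ exists. First, I shrink $\mathcal{G}$ if needed so that each of its four sides is a supporting line of $\mathcal{B}_{\mathcal{I}}$; this preserves $\mathcal{G}\subseteq\mathcal{A}_{\mathcal{U}}$. Next, I apply Lemma~\ref{geometric lemma} to the outer pentagon $\mathcal{A}_{\mathcal{U}}$: since $\mathcal{G}$ is inscribed, the quadrilateral cannot intersect the interiors of all five corner triangles $\mathcal{T}^i$ of $\mathcal{A}_{\mathcal{U}}$, so there is an index $j$ with $\mathcal{G}\cap\operatorname{int}(\mathcal{T}^j)=\emptyset$. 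Geometrically, $\mathcal{G}$ is forced onto the center side of the chord connecting the midpoints of the two edges of $\mathcal{A}_{\mathcal{U}}$ incident to the $j$-th vertex.

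The heart of the proof is a case analysis using the scale ratio $R_B/R_A=(\sqrt{5}-1)/2=1/\phi$ together with the identities $\phi^2=\phi+1$ and $\cos(\pi/5)=\phi/2$. With four supporting sides covering five pentagon edges, at least one adjacent pair of $\mathcal{G}$-sides corresponds to two $\mathcal{B}_{\mathcal{I}}$-edges separated by a single skipped pentagon edge. When both sides of this pair are edge-tangents, they meet at the pentagram tip at distance $\phi^2 R_B=\phi R_A$ from the center, in the direction of the skipped edge's midpoint---strictly outside the circumradius of $\mathcal{A}_{\mathcal{U}}$. Substituting vertex-tangents at $\mathcal{B}_{\mathcal{I}}$-vertices shifts this tip along its pentagram ray but only trades the excursion outside $\mathcal{A}_{\mathcal{U}}$ among the five pentagram directions; the restriction from Lemma~\ref{geometric lemma} eliminates exactly one of these directions, one short of what is needed to keep all four $\mathcal{G}$-vertices inside $\mathcal{A}_{\mathcal{U}}$. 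Every allowable tangency pattern therefore leaves some $\mathcal{G}$-vertex outside $\mathcal{A}_{\mathcal{U}}$, contradicting $\mathcal{G}\subseteq\mathcal{A}_{\mathcal{U}}$.

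The main obstacle is executing the case analysis exhaustively. At the scale factor $1/\phi$, the containment is precisely at its threshold---strictly smaller ratios do admit a nested quadrilateral---so each combinatorial configuration of the four supporting lines (classified by how many are edge-tangents versus vertex-tangents and where their supports lie on $\mathcal{B}_{\mathcal{I}}$'s boundary) must be ruled out sharply using the golden-ratio identities. The enumeration is finite but requires careful bookkeeping of pentagram-tip distances and the interplay with the direction eliminated by Lemma~\ref{geometric lemma}.
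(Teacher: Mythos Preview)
Your approach has a fundamental gap: you attempt to derive a contradiction from the mere existence of a quadrilateral $\mathcal{G}$ with $\mathcal{B}_{\mathcal{I}}\subseteq\mathcal{G}\subseteq\mathcal{A}_{\mathcal{U}}$, but such a quadrilateral \emph{does} exist. Indeed, this is precisely the content of the lemma immediately following Proposition~\ref{k=4} (and the point of the whole section): the nested quadrilateral $\mathcal{G}^{3}$ drawn in Fig.~\ref{fig:quadrilateral} witnesses an NMF of $\mathbf{C}^1$ of inner dimension four, so the nonnegative rank of $\mathbf{C}^1$ is four while its ENNR is five. Any argument that purports to rule out all nested quadrilaterals must therefore be wrong; your case analysis cannot close because the ``threshold'' claim is false at the ratio $(\sqrt{5}-1)/2$.

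What you dropped from Lemma~\ref{lemma:lemma4} is the second half of the equivalence: besides $\mathcal{B}_{\mathcal{I}}\subseteq\mathcal{G}\subseteq\mathcal{A}_{\mathcal{U}}$, an ENMF requires a \emph{rank-$r$} nonnegative matrix $\mathbf{E}$ of convex coefficients mapping the vertices of $\mathcal{G}$ to those of $\mathcal{B}_{\mathcal{I}}$. For $k=4>r=3$ this rank constraint is nontrivial and is exactly what separates ENMF from NMF here. The paper's proof exploits this: Lemma~\ref{geometric lemma} is used not to exclude $\mathcal{G}$ geometrically, but to trap every candidate quadrilateral inside the hexagon $\mathcal{G}^{2}$ obtained by removing one corner triangle from $\mathcal{A}_{\mathcal{U}}$; then one reduces to showing that no rank-$3$ nonnegative $\mathbf{E}^{2}$ maps the vertices of $\mathcal{G}^{2}$ to those of $\mathcal{B}_{\mathcal{I}}$, which is certified by the linear program of Lemma~\ref{lemma:algorithm} (with a strictly positive dual optimum $\tfrac{1}{2}(7-3\sqrt{5})$). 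Your shrinking step and tangency bookkeeping never touch this rank condition, so they cannot distinguish the nested quadrilateral that exists (and gives an ordinary NMF) from one that would give an ENMF.
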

\begin{proof}
Assume that there is an ENMF of $\mathbf{C}^1$ of inner dimension four.
Then there is a nested quadrilateral $\mathcal{G}^{1}$ between $\mathcal{B}_{\mathcal{I}}$ and $\mathcal{A}_{\mathcal{U}}$ such that there is a rank-3 nonnegative matrix $\mathbf{E}^{1}$ that maps the vertices of $\mathcal{G}^{1}$ to the vertices of $\mathcal{B}_{\mathcal{I}}$.
Note that for any larger polytope $\mathcal{G}^{2}$ that contains $\mathcal{G}^{1}$, there is a nonnegative rank-3 matrix $\mathbf{E}^{2}=\mathbf{WE}^{1}$ that maps the vertices of $\mathcal{G}^{2}$ to $\mathcal{B}_{\mathcal{I}}$, where $\mathbf{W}$ is a nonnegative matrix that maps the vertices of $\mathcal{G}^{2}$ to $\mathcal{G}^{1}$.

Labelling the vertices of $\mathcal{A}_{\mathcal{U}}$ as $\bm{v}^{0},...,\bm{v}^{4}$ as in Lemma \ref{geometric lemma}, let $\mathcal{G}^{2}$ be $\operatorname{conv}(\{\bm{v}^{1},...,\bm{v}^{4}\}\cup\{(\bm{v}^{0}+\bm{v}^{1})/2,(\bm{v}^{4}+\bm{v}^{0})/2\})$, as shown in Fig.~\ref{fig:square}.
Note that this polytope is also given by removing one of the triangles $\mathcal{T}^i$ described in Lemma~\ref{geometric lemma}.
Then, by Lemma~\ref{geometric lemma} and the rotational symmetry of the pentagon, any nested quadrilateral $\mathcal{G}^{1}$ is contained in $\mathcal{G}^{2}$, up to a $2n\pi/5$ rotation, where $n$ is an integer.

Now, we use the algorithm in the proof of Lemma \ref{lemma:algorithm} to find such a matrix $\mathbf{E}^{2}$ by solving $G^{2}\mathbf{E}^{2}=\mathcal{A}_{\mathcal{I}}$.  Using the linear program described in the proof of Lemma \ref{lemma:algorithm}, we find that the minimum value of $f(D)$, as defined in Lemma \ref{lemma:algorithm}, is about $0.146>0$, indicating that there is no solution for $\mathbf{Q}^{2}$. We also find that the maximum of the dual linear program is the same. Since any value of the dual linear program lower bounds the minimum of the primal linear program, this proves that the minimum of the primal problem is positive. Thus, any matrix $E$ that satisfies $G^{2}E=\mathcal{A}_{\mathcal{I}}$ must have negative elements. We outline the calculation of this objective value in Appendix C.
\end{proof}
The final step is now to show that the COPE $\mathbf{C}^1$ admits an ontological model of dimension four. This model is necessarily contextual.
\begin{lemma}
There exists an ontological model of $\mathbf{C}^1$ of ontic size four.
\end{lemma}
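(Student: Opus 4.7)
The plan is to exhibit an explicit $4$-vertex polytope $\mathcal{G}^1$ nested strictly between the inner pentagon $\mathcal{B}_{\mathcal{I}}$ and the outer pentagon $\mathcal{A}_{\mathcal{U}}$, and then invoke Lemma~\ref{lemma:lemma4} (in its general NMF form, without the rank-$r$ constraint on $\mathbf{E}$ that is specific to ENMF) to produce an NMF of $\mathbf{C}^1$ of inner dimension four. The resulting factorization will necessarily be contextual by Proposition~\ref{k=4}, but that is consistent with the claim; we only need an ontological model, not a noncontextual one.

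First, I would use the coordinates already set up in Eq.~\eqref{xvertices} and in the construction of $A_{\mathcal{U}}, B_{\mathcal{I}}$ to write down the vertices of $\mathcal{B}_{\mathcal{I}}$ and $\mathcal{A}_{\mathcal{U}}$ in $\mathbb{R}^2$. Because $\mathcal{A}_{\mathcal{U}}$ is a $((\sqrt{5}+1)/2)$-scaling of $\mathcal{B}_{\mathcal{I}}$ about their common centre, the two pentagons share every axis of symmetry. I would then follow the illustration in Fig.~\ref{fig:pentagon}(b) and propose a quadrilateral $\mathcal{G}^1$ symmetric about one such axis: two vertices placed at adjacent vertices of $\mathcal{A}_{\mathcal{U}}$ on one side of the axis, and two further vertices placed symmetrically on $\partial\mathcal{A}_{\mathcal{U}}$ on the opposite side, pushed far enough outward that the two corresponding facets of $\mathcal{G}^1$ lie on the far side of $\mathcal{B}_{\mathcal{I}}$. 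The five-fold symmetry of the configuration reduces every verification to checking a single half-plane.

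Second, I would verify the two containments. For $\mathcal{G}^1 \subseteq \mathcal{A}_{\mathcal{U}}$, since each proposed vertex of $\mathcal{G}^1$ lies on $\partial\mathcal{A}_{\mathcal{U}}$ by construction, this reduces to checking a handful of the inequalities $K\bm{x}+\bm{p}\geq 0$. For $\mathcal{B}_{\mathcal{I}} \subseteq \mathcal{G}^1$, I would check that each of the five vertices of $\mathcal{B}_{\mathcal{I}}$ satisfies the four half-plane inequalities defining $\mathcal{G}^1$; again the axial symmetry and the explicit golden-ratio spacing make the algebra self-contained. Once both containments hold, a matrix $G$ of the vertices of $\mathcal{G}^1$ is defined and the convex coefficient matrix $\mathbf{E} \in \mathbb{R}_{\geq 0}^{4\times 5}$ expressing each column of $B_{\mathcal{I}}$ as a convex combination of the columns of $G$ exists.

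Third, I would extract the ontological model by setting $\mathbf{R} := A_{\mathcal{U}} G$, which is nonnegative because $\mathcal{G}^1 \subseteq \mathcal{A}_{\mathcal{U}}$, and taking the $\mathbf{E}$ from the previous step. Then $\mathbf{R}\mathbf{E} = A_{\mathcal{U}} G \mathbf{E} = A_{\mathcal{U}} B_{\mathcal{I}} = \mathbf{C}^1$ as required, with inner dimension $4$. The main obstacle is purely computational: producing concrete coordinates for the quadrilateral and carrying out the finite list of containment checks without sign errors. Because both polygons are regular pentagons related by the golden ratio, the calculation is forced and benign, but the proof is only complete once the explicit vertices (or, equivalently, a concrete pair of nonnegative matrices $\mathbf{R},\mathbf{E}$ with $\mathbf{R}\mathbf{E}=\mathbf{C}^1$) are exhibited.
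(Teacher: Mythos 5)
Your high-level strategy is exactly the paper's: exhibit a $4$-vertex polytope $\mathcal{G}$ nested between $\mathcal{B}_{\mathcal{I}}$ and $\mathcal{A}_{\mathcal{U}}$, set $\mathbf{R} := A_{\mathcal{U}}G$, take $\mathbf{E}$ as the convex-coefficient matrix for $\mathcal{B}_{\mathcal{I}} \subseteq \mathcal{G}$, and observe $\mathbf{R}\mathbf{E} = A_{\mathcal{U}}G\mathbf{E} = A_{\mathcal{U}}B_{\mathcal{I}} = \mathbf{C}^1$.

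However, your proposed vertex placement does not actually produce a nested quadrilateral. If two vertices of $\mathcal{G}^1$ sit at adjacent vertices of $\mathcal{A}_{\mathcal{U}}$, the facet between them coincides with an entire edge of the outer pentagon, and the golden-ratio scale factor $(\sqrt{5}+1)/2$ between $\mathcal{A}_{\mathcal{U}}$ and $\mathcal{B}_{\mathcal{I}}$ makes the configuration too tight for the other two vertices to do their job. Concretely (in the figure's units), covering the inner vertex of $\mathcal{B}_{\mathcal{I}}$ farthest from that facet forces the two remaining $\mathcal{G}^1$-vertices on $\partial\mathcal{A}_{\mathcal{U}}$ to have $|x| \lesssim 0.65$, but then the two lateral facets of $\mathcal{G}^1$ run inside the two side vertices of $\mathcal{B}_{\mathcal{I}}$, which sit at $|x| \approx 0.73$; those vertices fall outside $\mathcal{G}^1$. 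No position of the remaining two vertices on $\partial\mathcal{A}_{\mathcal{U}}$ fixes this. The quadrilateral that works — the $\mathcal{G}^3$ drawn in Fig.~\ref{fig:quadrilateral} — is an isosceles trapezoid whose four vertices lie strictly in the interiors of four edges of $\mathcal{A}_{\mathcal{U}}$ (its footprint avoids one edge of the outer pentagon entirely), and the fit is genuinely tight: all four facets of $\mathcal{G}^3$ touch $\partial\mathcal{B}_{\mathcal{I}}$. Replacing your $\mathcal{G}^1$ with this trapezoid (or any of its $2\pi n/5$ rotations) makes the rest of your argument go through and reproduces the paper's explicit factors $\mathbf{A}^2$, $\mathbf{B}^2$.
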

\begin{proof}
The existence of a nested quadrilateral $\mathcal{G}^{3}$ as shown in Fig.~\ref{fig:quadrilateral} implies an NMF of $\mathbf{C}^1$ of inner dimension four. Constructing $G^{3}$ from $\mathcal{G}^{3}$, $\mathcal{G}^{3}\subset \mathcal{A}_{\mathcal{U}}$ implies $A_{\mathcal{U}}G^{3}$ is nonnegative. $\mathcal{B}_{\mathcal{I}}\subset \mathcal{G}^{3}$ implies there is a nonnegative matrix that maps the columns of $G^{3}$ to the columns of $B_{\mathcal{I}}$. Such an ontological model $\mathbf{C}^1=\mathbf{A}^2\mathbf{B}^2$ is given explicitly as,
\begin{equation}
\begin{split}
&\mathbf{A}^2=\left(
\begin{array}{cccc}
 0 & 1-\frac{1}{\sqrt{5}} & 0 & \frac{\left(5-\sqrt{5}\right)}{10} \\
 \frac{1}{\sqrt{5}} & \frac{1}{\sqrt{5}} & 0 & 0 \\
 1-\frac{1}{\sqrt{5}} & 0 & \frac{\left(5-\sqrt{5}\right)}{10} & 0 \\
 0 & 0 & 1-\frac{1}{\sqrt{5}} & \frac{\left(3 \sqrt{5}-5\right)}{10} \\
 0 & 0 & \frac{\left(3 \sqrt{5}-5\right)}{10} & 1-\frac{1}{\sqrt{5}} \\
\end{array}
\right),\\
&\mathbf{B}^{2}=\left(
\begin{array}{ccccc}
 \frac{2}{\sqrt{5}}-\frac{1}{2} & \frac{5-\sqrt{5}}{20} & 0 & \frac{3 \sqrt{5}-5}{10} & \frac{3 \left(5-\sqrt{5}\right)}{20} \\
 \frac{2}{\sqrt{5}}-\frac{1}{2} & \frac{3 \left(5-\sqrt{5}\right)}{20} & \frac{3 \sqrt{5}-5}{10} & 0 & \frac{5-\sqrt{5}}{20} \\
 1-\frac{2}{\sqrt{5}} & 0 & \frac{5-\sqrt{5}}{10} & 1-\frac{1}{\sqrt{5}} & \frac{1}{\sqrt{5}} \\
 1-\frac{2}{\sqrt{5}} & \frac{1}{\sqrt{5}} & 1-\frac{1}{\sqrt{5}} & \frac{5-\sqrt{5}}{10} & 0 \\
\end{array}
\right).\\
\end{split}
\end{equation}

\end{proof}
We have now seen a COPE whose minimum ontic size is four and minimum noncontextual ontic size is five. This proves that the number of ontic states of the smallest ontological model and the smallest contextual model are different in general. This implies that, when studying how the latter depends on properties of the COPE, the former cannot be directly used.

\section{Discussion and Conclusions} \label{section 5}

We introduced the equirank nonnegative matrix factorization (ENMF) problem, which is to find a nonnegative matrix factorization (NMF) wherein all components have equal ranks.
We demonstrated that this is a linear-algebraic formulation of the problem of finding a noncontextual ontological model for explaining a set of observed statistics as characterized by the rank-separation criterion.
We proved that the complexity of deciding whether COPE matrices of fixed rank admit a noncontextual model of any size is polynomial in the size of the matrices.
We achieved this through a novel geometric reformulation of the problem:
We showed that deciding and  computing an ENMF is equivalent to deciding and finding an intermediate simplex sandwiched between the preparations and outcome events polytope and dual polytope, respectively.

Our construction also resulted in a reduction from ENMF to the nonnegative rank (NNR) problem, wherein one asks whether the NNR of a nonnegative matrix equals its rank.
This led to an upper bound on the complexity of deciding and computing noncontextual ontological models in given dimensions.
More specifically, we showed that the latter is at most exponential in the required noncontextual ontic dimension.

From a practical point of view, we would like to know whether a given experiment can be modelled classically, i.e., by a noncontextual ontological model.
Furthermore, we would like to determine the minimum number of classical states required for such a model, whenever it exists, in order to estimate the classical resources necessary for emulating the experiment classically.
The geometric interpretations, as well as the constructive nature of our proofs, put forward a method for answering these questions.
For example, obtaining tight upper bounds and lower bounds on ENNR and NNR, respectively, provides an effective method for detecting contextuality in the data.
Specifically, if the lower bound on the NNR of a COPE exceeds the upper bound on its ENNR, it implies the presence of contextuality.
However, our results also provide evidence that addressing these problems is computationally demanding.
We conjecture that, contrary to common belief, detecting contextuality is computationally hard.
This is further supported by the fact that all currently known algorithms for this task have runtimes that scale exponentially with the rank of the COPE matrix, or its equivalent, the GPT dimension.
However, a formal proof of this conjecture remains elusive.

Given that finding ontological models and noncontextual ontological models have geometrical interpretations which reduce to the intermediate simplex problem, it is natural to ask whether they are genuinely different problems.
We answered this question in the affirmative by constructing an explicit example of a COPE matrix which exhibits a size gap between its smallest ontological model and smallest noncontextual model.
The roles of these minimum model sizes in explaining the quantum advantage of information processing protocols remain unclear.
Of course, the latter is only relevant if a noncontextual model exists, which seems to greatly restrict the degree of quantum advantage possible.
This can be seen by considering the maximum possible gap between the smallest GPT dimensionality and the smallest noncontextual ontic dimensionality needed to explain a COPE.
Since the former is given by the rank $r$ and the latter is at most $O(r^2)$, the advantage of noncontextually explainable protocols would be polynomially bounded.

A better understanding of the features of the COPE matrix that dictate the behavior of NNR and ENNR could make it easier to identify tasks and protocols that exhibit a quantum advantage. 
For instance, the role of NNR in classical and quantum communication complexity has been studied in Refs.~\cite{lalonde2023,heinosaari2024simpleinformationprocessingtasks}.
However, to our knowledge, no prior work has explored its connection to time or space complexity of quantum circuits. 
Furthermore, analyzing communication and computation complexities through the lens of ENNR, hence contextuality, is an interesting direction for future research.

\acknowledgements
TY acknowledges the support through the Quantum Computing Studentship funded by Royal Holloway, University of London.
FS gratefully acknowledge the financial support from the Engineering and Physical Sciences Research Council (EPSRC) through the Hub in Quantum Computing and Simulation grant (EP/T001062/1).

%\vspace{10cm}
\bibliography{reffinal.bib}
\bibliographystyle{apsrev4-1}

\begin{widetext}

\appendix

\setcounter{equation}{0}
\renewcommand{\theequation}{\thesection.\arabic{equation}}
\section{Explanation of Linear Programs}\label{app:L_optim}

Our task is to find a matrix $D$ that minimizes the sum of the moduli of negative elements of $L\bar{E}$, where $L$ is defined in Eq.~\eqref{shear}. We take $\bar{E}$ to be a $k\times n$ matrix, and $D$ to be a $(k-r)\times r$ matrix. This can be formulated as a linear program, which in general, along with its dual linear program, have the following forms.

\textbf{Standard Primal Problem:}

\[
\begin{aligned}
& \text{Minimize} && \Tilde{\bm{c}}^\top \Tilde{\bm{x}} \\
& \text{Subject to} && \Tilde{A} \Tilde{\bm{x}} \geq \Tilde{\bm{b}} \\
& && \Tilde{\bm{x}} \geq \mathbf{0}
\end{aligned}
\]

\textbf{Standard Dual Problem:}

\[
\begin{aligned}
& \text{Maximize} && \tilde{\bm{b}}^\top \Tilde{\mathbf{y}} \\
& \text{Subject to} && \Tilde{A}^\top \Tilde{\mathbf{y}} \leq \Tilde{\bm{c}} \\
& && \Tilde{\mathbf{y}} \geq \mathbf{0}
\end{aligned}
\]

Now, we will formulate our minimization task as a linear program.

\textbf{Primal Problem:}

We introduce auxiliary variables $\mathbf{S}_{ps}\geq\max{[0,(-L\bar{E})_{ps}]}$ to represent the magnitude of negative deviations. Since $D$ can have negative entries, we introduce nonnegative variables \( \mathbf{D}^+_{ij} \) and \( \mathbf{D}^-_{ij} \) such that $D_{ij}=\mathbf{D}^+_{ij}-\mathbf{D}^-_{ij}$. The primal problem is as follows,
\[
\begin{aligned}
& \text{Minimize} && \sum_{p=1}^{m} \sum_{s=1}^{k} \mathbf{S}_{ps} \\    
& \text{Subject to} \\
& &&  \mathbf{S}_{ps}+\sum_{i=1}^{k - r} \sum_{j=1}^{r} \left(  a^{j\top} \bar{E}_{:s} b^{i}_p \mathbf{D}^-_{ij} -  a^{j\top} \bar{E}_{:s} b^{i}_p \mathbf{D}^+_{ij} \right) \geq -\bar{E}_{ps} \quad \forall p = 1, \dots, m; \; l = 1, \dots, k \\
& && \mathbf{S}_{ps} \geq 0 \quad \forall p = 1, \dots, k; \; s = 1, \dots, n \\
& && \mathbf{D}^+_{ij} \geq 0 \quad \forall i = 1, \dots, m - r; \; j = 1, \dots, r \\
& && \mathbf{D}^-_{ij} \geq 0 \quad \forall i = 1, \dots, m - r; \; j = 1, \dots, r
\end{aligned}
\]
The first constraint represents $\mathbf{S}_{ps}\geq(-L\bar{E})_{ps}$, using the definition of $L$ in Eq.~\eqref{shear}.

In this case, the variable $\Tilde{\bm{x}}$ is given by $\begin{pmatrix}
    \mathbf{S}_{1,:} 
& \dots 
& \mathbf{S}_{m,:} & \mathbf{D}^+_{1,:} 
& \dots & \mathbf{D}^+_{m-r,:} 
& \mathbf{D}^-_{1,:} 
& \dots 
& \mathbf{D}^-_{m-r,:}
\end{pmatrix}^\top$, which is a concatenation of three flattened matrices.
Alternatively, we can write $\Tilde{\bm{x}}=\begin{pmatrix}
    (\Tilde{\bm{x}}^1)^\top & (\Tilde{\bm{x}}^2)^\top & (\Tilde{\bm{x}}^3)^\top
\end{pmatrix}^\top$, where 
\[
\begin{split}
\Tilde{\bm{x}}^1_q &= \mathbf{S}_{ps}, \quad 
p = (q-1\mod k) + 1, \quad s = \left\lfloor \frac{q-1}{k} \right\rfloor + 1, \quad 
q \in \{1, \dots, kn\}, \\
\Tilde{\bm{x}}^2_q &= \mathbf{D}^+_{ij}, \quad 
i = \left\lfloor \frac{q-1}{r} \right\rfloor + 1, \quad 
j = (q-1\mod r) + 1, \quad 
q \in \{1, \dots, (k - r) r\}, \\
\Tilde{\bm{x}}^3_q &= \mathbf{D}^-_{ij}, \quad 
i = \left\lfloor \frac{q-1}{r} \right\rfloor + 1, \quad 
j = (q-1\mod r) + 1, \quad 
q \in \{1, \dots, (k - r) r\}.
\end{split}
\]

$\Tilde{\bm{c}}^\top$ is given by $kn$ ones followed by $2(k-r)r$ zeroes.
$\Tilde{A}$ can be written in block matrix form as follows:
\begin{equation}
    \begin{split}
\Tilde{A} &= 
\begin{pmatrix}
\mathds{1}_{kn\times kn} & -R & R
\end{pmatrix}, \\
R_{tu} &= a^{j\top} \bar{E}_{:s} b^i_p, \quad 
t \in \{1, \dots, kn\}, \quad 
u \in \{1, \dots, (k - r) r\}, \\
p &= (t-1\mod k) + 1, \quad 
s = \left\lfloor \frac{t-1}{k} \right\rfloor + 1, \\
i &= \left\lfloor \frac{u-1}{r} \right\rfloor + 1, \quad 
j = (u-1\mod r) + 1.
\end{split}
\end{equation}

The vector \(\Tilde{\bm{b}}\) has length \(kn\),
which is given by,
\[
\Tilde{\bm{b}}=-
\begin{pmatrix}
    \bar{E}_{:1}^\top & \bar{E}_{:2}^\top 
& \dots & \bar{E}_{:n}^\top
\end{pmatrix}^\top.
\]
\textbf{Dual Problem:}

\[
\begin{aligned}
& \text{Maximize} && -\sum_{p=1}^{k} \sum_{s=1}^{n} \bar{E}_{ps} \mathbf{Y}_{ps} \\
& \text{Subject to} \\
& && \sum_{p=1}^{k} \sum_{s=1}^{n} \left( - a^{j\top} \bar{E}_{:s} b^i_p \right) \mathbf{Y}_{ps} \leq 0 \quad \forall i = 1, \dots, k - r; \; j = 1, \dots, r \\
& && \sum_{p=1}^{k} \sum_{s=1}^{n} \left(  a^{j\top} \bar{E}_{:s} b^i_p \right) \mathbf{Y}_{ps} \leq 0 \quad \forall i = 1, \dots, k - r; \; j = 1, \dots, r \\
& && \mathbf{Y}_{ps} \leq 1 \quad \forall p = 1, \dots, k; \; l = 1, \dots, n \\
& && \mathbf{Y}_{ps} \geq 0 \quad \forall p, s,
\end{aligned}
\]

which we derive using the standard relationship between primal and dual linear programs, and $\Tilde{\mathbf{y}}=
\begin{pmatrix}
\mathbf{Y}_{:1}^\top 
& \mathbf{Y}_{:2}^\top & \dots & \mathbf{Y}_{:n}^\top
\end{pmatrix}^\top$.
Defining $H^{(ij)}$ as $H^{(ij)}_{ps}:= a^{j\top} \bar{E}_{:s} b^i_p$, the first two constraints of the dual are equivalent to $\sum_{p,l}H^{(ij)}_{ps}\mathbf{Y}_{ps}=0$.

\section{Limitations of Moitra's Algorithm} \label{section:Moitra}

Moitra's algorithm can only find NMFs of a particular form, referred to as \textit{stable}. Any NMF can be transformed to a stable NMF without changing the inner dimension, as shown in Lemma~2.6 of \cite{moitra}, but there's no guarantee that the ranks of the factors won't change. The following terminology is needed to define stable NMFs.

\begin{definition}[Admissible Set]
An \textbf{admissible set} of a vector $\bm{v}$ is a set of vectors whose conic hull contains $\bm{v}$.
\end{definition}

\begin{definition}[Lexicographic Ordering]
    Given two sets $\mathbb{S} = \{s_1, s_2, \dots, s_k\}$ and $\mathbb{T} = \{t_1, t_2, \dots, t_k\}$ of the same size such that 
    $s_1 \leq s_2 \leq \dots \leq s_k$ and $t_1 \leq t_2 \leq \dots \leq t_k$,
    we say that $\mathbb{S}$ is lexicographically before $\mathbb{T}$ if there is an $i \in [k]$ such that $s_j \leq t_j$ for all $j=1,2,\dots,i-1$ and $s_i < t_i$.
    Moreover, if $\mathbb{S}$ and $\mathbb{T}$ have different sizes, and $|\mathbb{S}| < |\mathbb{T}|$, then we define $\mathbb{S}$ to be lexicographically before $\mathbb{T}$.
\end{definition}
The important property of lexicographic ordering that we will use is that a smaller set of always lexicographically before a larger set. Now, stability can be defined.
\begin{definition}
A nonnegative matrix factorization \( \mathbf{C} = \mathbf{RE} \) is called \textbf{stable} if:
\begin{enumerate}
    \item For each column \( \mathbf{C}_{:i} \), the support of \( \mathbf{E}_{:i} \) is the lexicographically first admissible subset (of columns of \( \mathbf{R} \)) for \( \mathbf{C}_{:i} \).
    \item For each row \( \mathbf{C}_{j,:} \), the support of \( \mathbf{R}_{j,:} \) is the lexicographically first admissible subset (of rows of \( \mathbf{E} \)) for \( \mathbf{C}_{j,:} \).
\end{enumerate}
\end{definition}
Now, we will prove that restricting search of factorizations to stable NMFs prevents the algorithm from finding the ENNR in general.
\begin{lemma}
    Not every nonnegative matrix that admits an ENMF admits a stable ENMF.
\end{lemma}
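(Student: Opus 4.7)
The plan is to exhibit a concrete counterexample: a nonnegative matrix $\mathbf{C}$ of rank $r$ that admits an ENMF of some inner dimension $k$ but for which every stable NMF of inner dimension $k$ violates the equirank condition of Eq.~\eqref{eq:nctxt_rank}. Lemma~2.6 of Ref.~\cite{moitra} already guarantees that any NMF can be stabilized without changing its inner dimension, so the only way the lemma can fail is for this stabilization procedure to alter the ranks of the factors; this is exactly the obstruction I want to exploit.

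The first step is to build $\mathbf{C}$ from a small witness ENMF $\mathbf{C}=\mathbf{R}\mathbf{E}$ by arranging several columns of $\mathbf{C}$ to coincide, up to scalar, with specific columns of $\mathbf{R}$. Whenever $\mathbf{C}_{:i}$ is proportional to the $j$th column of $\mathbf{R}$, the singleton $\{j\}$ is an admissible subset and is necessarily lexicographically first, so stability forces $\mathbf{E}_{:i}=c_i\bm{e}_j$ in \emph{any} stable factorization. A symmetric manipulation on the rows will pin several rows of $\mathbf{R}$ to scaled standard basis vectors. By choosing which and how many singletons are forced, I can make $\mathbf{E}$ and $\mathbf{R}$ in every stable NMF inherit a rigid block pattern.

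Next I would verify, first via the geometric picture of Lemma~\ref{lemma:cone} and then by direct linear algebra, that this pinned block pattern confines the columns of $\mathbf{E}$ (equivalently, the vertices of the intermediate simplex) to a subspace of dimension strictly less than $r$. Once the lex-first admissible sets are enumerated, this reduces to a finite-dimensional rank computation, analogous in structure to the linear-programming check in the proof of Proposition~\ref{k=4}. A natural starting point is the $5\times5$ rank-$3$ COPE matrix $\mathbf{C}^1$ of Sec.~\ref{Ontic gap}, augmented with a few extra columns proportional to the columns of $\mathbf{A}^1$; the same geometric obstructions that force $\mathrm{ENNR}(\mathbf{C}^1)=5$ should, after the augmentation, force any stable NMF at inner dimension $5$ to collapse into a rank-deficient factor.

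The main obstacle is to make the collapse genuinely unavoidable rather than a by-product of a particular labelling. Concretely, I must rule out the possibility that by permuting or relabelling the columns of $\mathbf{R}$ a different lex-first admissible set could be realised which then supports an equirank stable factorization. Handling this will require verifying, after the residual permutation freedom has been quotiented out, that every surviving admissible set still forces the same rank collapse. For small $\mathbf{C}$ this reduces to a finite enumeration plus a dual-LP argument of the type used in Appendix~\ref{app:L_optim}; extending the idea to a clean closed-form statement is the subtle part.
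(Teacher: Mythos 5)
Your approach is genuinely different from the paper's, and as stated it has a gap that is more serious than the ``permutation freedom'' issue you flag at the end.

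The paper's proof perturbs the \emph{outer} polytope: it adds one extra facet to $\mathcal{A}_{\mathcal{U}}$ at a small distance $\epsilon$ from an existing vertex, producing a $6\times 5$ slack matrix $\mathbf{C}^2$ whose ENNR is still $5$. The argument is then a pure counting contradiction. Since the state space is $2$-dimensional, Carath\'eodory guarantees that every row of (the row-stochastic rescaling of) $\mathbf{C}^2$ is a convex combination of at most $3$ rows of $\mathbf{E}'$; stability forces the support of each row of $\mathbf{R}'$ to be such a lex-first (hence size-$\leq 3$) set, so $\mathbf{R}'$ has at most $6\cdot 3 = 18$ positive entries. On the other hand, the geometry of the six-facet outer polygon forces every column of $\mathbf{R}'$ (each vertex of $\mathcal{G}$) to have at least $4$ positive entries, giving at least $4k\geq 20$ positive entries for $k\geq 5$. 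No singleton pinning is needed; the contradiction is a support-size mismatch.

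Your plan instead perturbs the \emph{inner} side: you append columns to $\mathbf{C}$ proportional to columns of your chosen witness factor $\mathbf{R}=\mathbf{A}^1$ and claim that stability then forces $\mathbf{E}_{:i}=c_i\bm{e}_j$ in \emph{any} stable factorization. This is where the argument breaks. Admissibility in the stability definition is relative to the columns of the $\mathbf{R}$ appearing in \emph{that} factorization, not to your original $\mathbf{A}^1$. A stable NMF of the augmented matrix is free to use a completely different $\mathbf{R}$ (geometrically, a different intermediate simplex $\mathcal{G}$), and the columns you appended need not be proportional to any column of that new $\mathbf{R}$. You would have to show that in \emph{every} nested $\mathcal{G}$ the appended points are vertices of $\mathcal{G}$ (e.g., by placing them at vertices of $\mathcal{A}$), and you do not do this. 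A second gap is the conclusion you draw from the pinning: you assert the pins ``confine the columns of $\mathbf{E}$\ldots to a subspace of dimension strictly less than $r$,'' but pinning several columns of $\mathbf{E}$ to distinct scaled standard basis vectors tends to drive $\operatorname{rank}\mathbf{E}$ \emph{up} toward $k$, not down; in fact the columns of $\mathbf{E}$ are convex-coefficient vectors, not the simplex vertices (those are the columns of $G$), so the equivalence you invoke in parentheses is itself off. Either direction of rank deviation would suffice in principle, but the mechanism needs to be identified, not asserted. As written, the proposal needs both a new geometric lemma to make the pinning unconditional and a correct rank argument; the paper sidesteps both by replacing singletons with a Carath\'eodory bound on support size.
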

\begin{proof}
Consider the COPE $\mathbf{C}^1$ as defined in Eq.~\eqref{C1}, where the factors $A_{\mathcal{U}}$ and $B_{\mathcal{I}}$ correspond to 2-dimensional polytopes $\mathcal{A_{\mathcal{U}}}$ and $\mathcal{B_{\mathcal{I}}}$ respectively, as defined in Section~\ref{Ontic gap}. Construct a new COPE $\mathbf{C}^2$ by keeping the inner polytope $\mathcal{B}_{\mathcal{I}}$ the same and constructing a new outer polytope $\mathcal{A}'_{\mathcal{U}}$. $\mathcal{A}'_{\mathcal{U}}$ is given by adding an extra vertex to $\mathcal{A}_{\mathcal{U}}$ a distance $\epsilon$ from one of the existing vertices, such that this new vertex is convexly independent of the others.
Then, $\mathbf{C}^2$ is the slack of the vertices of the inner polytope $\mathcal{B}_{\mathcal{I}}$ with respect to the facets of the outer polyope $\mathcal{A}'_{\mathcal{U}}$.
We can see that if $\epsilon$ is sufficiently small, an ENMF of inner dimension 4 is still not possible. This is because the amount of extra space that a nested polytope $\mathcal{G}$ of the form in Lemma~\ref{lemma:lemma4} can occupy would be arbitrarily small if $\epsilon$ is arbitrarily small. Therefore, for any such $\mathcal{G}$, the objective value of the linear program used in Lemma~\ref{lemma:algorithm} would change by an arbitrarily small amount. Now, let $\epsilon$ be sufficiently small to keep the ENNR at 5.

Using this construction, $\mathbf{C}^2$ has six rows and $\mathcal{A_I}'$ has six facets. Then, any vertex of a nested polytope $\mathcal{G}$ of the form in Lemma~\ref{lemma:lemma4} is a nonzero distance from at least 4 of the facets of $\mathcal{A_I}'$.
Now, rescale $\mathbf{C}^2$ to obtain a row-stochastic nonnegative matrix $\mathbf{C'}^2$. For an ENMF $\mathbf{C}'^2=\mathbf{R'E'}$ of inner dimension $k\geq 5$, where the factors are, without loss of generality, row-stochastic.
%, the factor $\mathbf{R'}$ has at least four positive entries per column.
In total, $\mathbf{R'}$ has $6k$ entries and at least $4k$ positive entries. Assuming each row of $\mathbf{R}$ has at most three positive entries, $\mathbf{R}$ would have 18 positive entries in total, but this is not possible for $k\geq 5$. Therefore, at least two columns of $\mathbf{E}$ have four supports. Since the rows of $\mathbf{C}'^2$ can be written as convex combinations of the rows of $\mathbf{E'}$, the rows of $\mathbf{E'}$ form a 2-dimensional polytope that contains the columns of $\mathbf{C}'^2$. Therefore, any row of $\mathbf{C}'^2$ can be written as the convex combination of only three rows of $\mathbf{E'}$. This means that if the ENMF is stable, each row of $\mathbf{R'}$ should have only three supports, which is not possible.
\end{proof}

\section{Size Comparison of Inner and Outer Pentagon}\label{size comparison}
The polytope $\mathcal{I}$ has an affine dimension of 2 and a its centre is the point $\mathbf{p}$.
We will show that when the simplex $\mathcal{U}$ is projected onto the affine span of $\mathcal{I}$, it forms a similar pentagon with the same centre and orientation, and only differs in size.

Let $M^1$ be a matrix whose columns are the displacements of the vertices $\{y^i\}$ of $\mathcal{I}$ from the centre $\mathbf{p}$ in a cyclic order of the vertices.
Now, the affine span of the columns of $M^1$ equals their span.
Using the same order of vertices, let $M^2$ be the matrix whose $i$th column is the difference of the two adjacent vertices of $y^i$, e.g., $M^2_{:2}=M^1_{:3}-M^1_{:1}$. Also, the matrix $\frac{\sqrt{5}-1}{2}\mathds{1}_{5\times5}$ has columns equal to the vertices of the simplex $\mathcal{U}$. Define the matrix $M^3$ as $M^3_{;,i}=(\mathds{1}_{5\times5})_{:i}-\mathbf{p}$. The first columns of these matrices are shown below using a particular choice of the order of vertices:
\begin{equation}
\begin{split}
M^1_{:1} &= \frac{\sqrt{5}-1}{10} \begin{pmatrix}
\sin\left(\frac{4\pi}{5}\right) \tan\left(\frac{\pi}{5}\right) - \cos\left(\frac{4\pi}{5}\right) \\
\sin\left(\frac{6\pi}{5}\right) \tan\left(\frac{\pi}{5}\right) - \cos\left(\frac{6\pi}{5}\right) \\
\sin\left(\frac{8\pi}{5}\right) \tan\left(\frac{\pi}{5}\right) - \cos\left(\frac{8\pi}{5}\right) \\
-1 \\
\sin\left(\frac{2\pi}{5}\right) \tan\left(\frac{\pi}{5}\right) - \cos\left(\frac{2\pi}{5}\right)
\end{pmatrix} \\
M^2_{:1} &= \frac{\sqrt{5}-1}{10} \begin{pmatrix}
(\sin\left(\frac{6\pi}{5}\right)-\sin\left(\frac{2\pi}{5}\right)) \tan\left(\frac{\pi}{5}\right) - \cos\left(\frac{6\pi}{5}\right)+\cos\left(\frac{2\pi}{5}\right) \\
(\sin\left(\frac{8\pi}{5}\right)-\sin\left(\frac{4\pi}{5}\right)) \tan\left(\frac{\pi}{5}\right) - \cos\left(\frac{8\pi}{5}\right)+\cos\left(\frac{4\pi}{5}\right) \\
-1-\sin\left(\frac{6\pi}{5}\right) \tan\left(\frac{\pi}{5}\right) + \cos\left(\frac{6\pi}{5}\right) \\
(\sin\left(\frac{2\pi}{5}\right)-\sin\left(\frac{8\pi}{5}\right)) \tan\left(\frac{\pi}{5}\right) - \cos\left(\frac{2\pi}{5}\right)+\cos\left(\frac{8\pi}{5}\right) \\
\sin\left(\frac{4\pi}{5}\right) \tan\left(\frac{\pi}{5}\right) - \cos\left(\frac{4\pi}{5}\right)+1
\end{pmatrix} \\
M^3_{:1} &= \frac{\sqrt{5}-1}{10} \begin{pmatrix}
4 \\ -1 \\ -1 \\ -1 \\ -1 \\
\end{pmatrix}.
\end{split}
\end{equation}
Due to rotational symmetry, each column of $M^1$ is a cyclic permutation of the first column, specifically, each entry is shifted up by one position to get the next column.
Due to the definition of $M^2$, it has the same permutation symmetry and it is clear that $M^3$ also does.

Given that $(M^2_{:1})^\top M^3_{:1}=0$, then by the permutation symmetry, $(M^2_{:i})^\top M^3_{:i}=0$ for all $i$. Given that the columns of $M^1$ are the vertices of a regular pentagon, this implies that when the columns of $M^3$ are projected onto the span of the columns of $M^1$, $M^3_{:i}$ has no component orthogonal to $M^1_{:i}$. Also, we find that $(M^1_{:i})^\top M^3_{:i}$ is positive, so $M^1_{:i}$ is the closest column of $M^1$ to $M^3_{:i}$. Therefore, we can conclude that $\mathcal{B}_\mathcal{I}$ and $\mathcal{P}_{\mathcal{U}}$ are similar pentagons with the same orientation.
The projection of $M^3_{:1}$ onto the span of the columns of $M^1$ has no component in the direction orthogonal to $M^1_{:1}$, so the scale factor can be calculated as $\frac{(M^1_{:1})^\top M^1_{:1}}{(M^1_{:1})^\top M^3_{:1}}$, which simplifies to $\frac{\sqrt{5}-1}{2}\approx0.618$.
\section{Explicit Calculation}
The matrix $G^{2}$ is given by 
\begin{equation}
    \left(
\begin{array}{cccccc}
 \sqrt{5-2 \sqrt{5}} & \frac{1}{2} \sqrt{\frac{1}{2} \left(\sqrt{5}+5\right)} & \frac{1}{4} \sqrt{10-2 \sqrt{5}} & 0 & -\sqrt{\frac{1}{2} \left(5-\sqrt{5}\right)} & -\sqrt{5-2 \sqrt{5}} \\
 -1 & \frac{1}{4} \left(1-\sqrt{5}\right) & \frac{1}{4} \left(\sqrt{5}+1\right) & \sqrt{5}-1 & \frac{1}{2} \left(3-\sqrt{5}\right) & -1 \\
 \frac{5-\sqrt{5}}{10} & \frac{5-\sqrt{5}}{10} & \frac{5-\sqrt{5}}{10} & \frac{5-\sqrt{5}}{10} & \frac{5-\sqrt{5}}{10} & \frac{5-\sqrt{5}}{10} \\
\end{array}
\right),
\end{equation}
and the matrix $B_{\mathcal
{I}}$ is given by 
\begin{equation}
\left(
\begin{array}{ccccc}
 \sqrt{\frac{1}{2} \left(25-11 \sqrt{5}\right)} & \sqrt{5-2 \sqrt{5}} & 0 & -\sqrt{5-2 \sqrt{5}} & -\sqrt{\frac{1}{2} \left(25-11 \sqrt{5}\right)} \\
 \frac{1}{2} \left(1-\sqrt{5}\right) & \sqrt{5}-2 & 3-\sqrt{5} & \sqrt{5}-2 & \frac{1}{2} \left(1-\sqrt{5}\right) \\
 \frac{5-\sqrt{5}}{10} & \frac{5-\sqrt{5}}{10} & \frac{5-\sqrt{5}}{10} & \frac{5-\sqrt{5}}{10} & \frac{5-\sqrt{5}}{10}
\end{array}
\right).   
\end{equation} The matrix $\bar{E}^2$ is given by
\begin{equation}
    \left(
\begin{array}{cccccc}
 \frac{3 \left(632 \sqrt{5}-463\right)}{7676} & \frac{11425-7 \sqrt{5}}{38380} & \frac{11805-3427 \sqrt{5}}{38380} & \frac{528 \sqrt{5}-1237}{7676} & \frac{5400-2443 \sqrt{5}}{38380} & \frac{4576-\frac{6243}{\sqrt{5}}}{7676} \\
 \frac{1}{76} \left(42-11 \sqrt{5}\right) & \frac{3}{380} \left(11 \sqrt{5}+15\right) & \frac{3}{380} \left(11 \sqrt{5}+15\right) & \frac{1}{76} \left(42-11 \sqrt{5}\right) & \frac{1}{380} \left(22 \sqrt{5}-65\right) & \frac{1}{380} \left(22 \sqrt{5}-65\right) \\
 \frac{528 \sqrt{5}-1237}{7676} & \frac{11805-3427 \sqrt{5}}{38380} & \frac{11425-7 \sqrt{5}}{38380} & \frac{3 \left(632 \sqrt{5}-463\right)}{7676} & \frac{4576-\frac{6243}{\sqrt{5}}}{7676} & \frac{5400-2443 \sqrt{5}}{38380} \\
 \frac{-24 \sqrt{5}-31}{7676} & \frac{9 \left(27 \sqrt{5}-205\right)}{38380} & \frac{6895-1657 \sqrt{5}}{38380} & \frac{7 \left(495-112 \sqrt{5}\right)}{7676} & \frac{13 \left(619 \sqrt{5}-80\right)}{38380} & \frac{3440-\frac{2593}{\sqrt{5}}}{7676} \\
 \frac{7 \left(495-112 \sqrt{5}\right)}{7676} & \frac{6895-1657 \sqrt{5}}{38380} & \frac{9 \left(27 \sqrt{5}-205\right)}{38380} & \frac{-24 \sqrt{5}-31}{7676} & \frac{3440-\frac{2593}{\sqrt{5}}}{7676} & \frac{13 \left(619 \sqrt{5}-80\right)}{38380} \\
\end{array}
\right)^\top.
\end{equation}
An orthonormal basis $\{a^{1},a^{2},a^{3}\}$ for the rows of $G^{2}$ is given by the columns of the following matrix:
\begin{equation}
    \left(
\begin{array}{ccc}
 2 \sqrt{\frac{1}{281} \left(19-4 \sqrt{5}\right)} & \frac{4 \left(2 \sqrt{5}-7\right)}{\sqrt{6727-2784 \sqrt{5}}} & \frac{5}{2} \sqrt{\frac{15 \sqrt{5}+74}{4351}} \\
 \sqrt{\frac{1}{562} \left(29 \sqrt{5}+73\right)} & -3 \sqrt{\frac{5 \left(5329-693 \sqrt{5}\right)}{12998498}} & \sqrt{\frac{1473-113 \sqrt{5}}{8702}} \\
 \sqrt{\frac{2}{37-7 \sqrt{5}}} & \sqrt{\frac{477653 \sqrt{5}}{12998498}+\frac{1072449}{12998498}} & \sqrt{\frac{1473-113 \sqrt{5}}{8702}} \\
 0 & 2 \sqrt{\frac{2623-276 \sqrt{5}}{23129}} & \frac{5}{2} \sqrt{\frac{15 \sqrt{5}+74}{4351}} \\
 -\sqrt{\frac{2}{281} \left(7 \sqrt{5}+37\right)} & 6 \sqrt{\frac{5 \left(4863-1852 \sqrt{5}\right)}{6499249}} & \frac{1}{2} \sqrt{\frac{3906-149 \sqrt{5}}{4351}} \\
 -2 \sqrt{\frac{1}{281} \left(19-4 \sqrt{5}\right)} & -\sqrt{\frac{360510 \sqrt{5}}{6499249}+\frac{829082}{6499249}} & \frac{1}{2} \sqrt{\frac{3906-149 \sqrt{5}}{4351}} \\
\end{array}
\right),
\end{equation}
and an orthonormal basis $\{b^{1},b^{2},b^{3}\}$ for the left kernel of $G^{2}$ is given by the columns of
\begin{equation}
    \left(
\begin{array}{ccc}
 \frac{1}{2} \sqrt{\frac{3 \left(67 \sqrt{5}+1126\right)}{1919}} & 0 & 0 \\
 -\sqrt{\frac{435559 \sqrt{5}}{7472586}+\frac{1261193}{7472586}} & 2 \sqrt{\frac{250-46 \sqrt{5}}{1947}} & 0 \\
 -\sqrt{\frac{21055 \sqrt{5}}{7472586}+\frac{48385}{7472586}} & -\sqrt{\frac{685}{3894}+\frac{1187}{3894 \sqrt{5}}} & \sqrt{\frac{1}{10} \left(5-\sqrt{5}\right)} \\
 \frac{1}{2} \sqrt{\frac{2723614-1051741 \sqrt{5}}{3736293}} & \frac{1}{2} \sqrt{\frac{305-134 \sqrt{5}}{9735}} & -\frac{1}{2} \sqrt{1+\frac{2}{\sqrt{5}}} \\
 \frac{1}{2} \sqrt{\frac{729766-141769 \sqrt{5}}{3736293}} & \frac{1}{2} \sqrt{\frac{15865-427 \sqrt{5}}{19470}} & \frac{1}{2} \sqrt{\frac{1}{10} \left(\sqrt{5}+5\right)} \\
 -\frac{1}{2} \sqrt{\frac{5 \left(459134-22213 \sqrt{5}\right)}{3736293}} & -\sqrt{\frac{1541}{15576}+\frac{3307}{15576 \sqrt{5}}} & -\frac{1}{2} \sqrt{\frac{1}{10} \left(5-\sqrt{5}\right)} \\
\end{array}
\right).
\end{equation}
We have found an instance of $\mathbf{Y}$ given by,
\begin{equation}
    \mathbf{Y^*}=\left(
\begin{array}{cccccc}
0 & 0 & 0 & 5-2\sqrt{5} & \frac{1}{2}(\sqrt{5}-1) & 0 \\
0 & 0 & 0 & 0 & 1 & 1 \\
5-2\sqrt{5} & 0 & 0 & 0 & 0 & \frac{1}{2}(\sqrt{5}-1) \\
3-\sqrt{5} & 1 & \dfrac{1}{4}(3-\sqrt{5}) & 0 & 0 & 0 \\
0 & \frac{1}{4}(3-\sqrt{5}) & 1 & 3-\sqrt{5} & 0 & 0 \\
\end{array}
\right)^\top.
\end{equation}
By plugging in $\{a^{1},a^{2},a^{3}\}$, $\{b^{1},b^{2},b^{3}\}$, $\bar{E}^2$ and $\mathbf{Y^*}$ into the constrains of the dual program, it can be shown that $\mathbf{Y^*}$ is in the feasible region. The dual objective value of $\mathbf{Y^*}$ can be calculated as $\frac{1}{2} \left(7-3 \sqrt{5}\right)\approx0.145898$.
\section{Explicit Models} \label{explicit models}
A noncontextual model $\mathbf{C}^1=\mathbf{A}^1\mathbf{B}^1$ of ontic dimension 5 is given by, 
\begin{equation}
\begin{aligned}
\mathbf{C}^1&=\frac{1}{10}\left(
\begin{array}{ccccc}
 2 \left(\sqrt{5}-1\right) & 8-2 \sqrt{5} & 2 \left(\sqrt{5}-1\right) & 3-\sqrt{5} & 3-\sqrt{5} \\
 8-2 \sqrt{5} & 2 \left(\sqrt{5}-1\right) & 3-\sqrt{5} & 3-\sqrt{5} & 2 \left(\sqrt{5}-1\right) \\
 2 \left(\sqrt{5}-1\right) & 3-\sqrt{5} & 3-\sqrt{5} & 2 \left(\sqrt{5}-1\right) & 8-2 \sqrt{5} \\
 3-\sqrt{5} & 3-\sqrt{5} & 2 \left(\sqrt{5}-1\right) & 8-2 \sqrt{5} & 2 \left(\sqrt{5}-1\right) \\
 3-\sqrt{5} & 2 \left(\sqrt{5}-1\right) & 8-2 \sqrt{5} & 2 \left(\sqrt{5}-1\right) & 3-\sqrt{5} \\
\end{array}
\right), \\
\mathbf{A}^1& = \frac{1}{10}
\left(
\begin{array}{ccccc}
 5-\sqrt{5} & 2 \sqrt{5} & 5-\sqrt{5} & 0 & 0 \\
 0 & 5-\sqrt{5} & 2 \sqrt{5} & 5-\sqrt{5} & 0 \\
 0 & 0 & 5-\sqrt{5} & 2 \sqrt{5} & 5-\sqrt{5} \\
 5-\sqrt{5} & 0 & 0 & 5-\sqrt{5} & 2 \sqrt{5} \\
  2 \sqrt{5} & 5-\sqrt{5} & 0 & 0 & 5-\sqrt{5} \\
\end{array}
\right),
\\
\mathbf{B}^1&=
\frac{1}{10}\left(
\begin{array}{ccccc}
 0 & 5-\sqrt{5} & 2 \sqrt{5} & 5-\sqrt{5} & 0 \\
 5-\sqrt{5} & 2 \sqrt{5} & 5-\sqrt{5} & 0 & 0 \\
 2 \sqrt{5} & 5-\sqrt{5} & 0 & 0 & 5-\sqrt{5} \\
 5-\sqrt{5} & 0 & 0 & 5-\sqrt{5} & 2 \sqrt{5} \\
 0 & 0 & 5-\sqrt{5} & 2 \sqrt{5} & 5-\sqrt{5} \\
\end{array}
\right).
\end{aligned}
\end{equation}
A quantum model $\mathbf{C}^1=A^1B^1$ of dimension 3 (subset of rebit theory) is given by,
\begin{equation}
    \begin{aligned}
        A^1&=\left(
\begin{array}{ccc}
 \frac{1}{5} \sqrt{5-2 \sqrt{5}} & \frac{1}{\sqrt{5}}-\frac{2}{5} & \frac{1}{5} \\
 0 & \frac{3}{5}-\frac{1}{\sqrt{5}} & \frac{1}{5} \\
 -\frac{1}{5} \sqrt{5-2 \sqrt{5}} & \frac{1}{\sqrt{5}}-\frac{2}{5} & \frac{1}{5} \\
 -\frac{1}{10} \sqrt{50-22 \sqrt{5}} & \frac{1}{10} \left(1-\sqrt{5}\right) & \frac{1}{5} \\
 \frac{1}{10} \sqrt{50-22 \sqrt{5}} & \frac{1}{10} \left(1-\sqrt{5}\right) & \frac{1}{5} \\
\end{array}
\right),\\
B^1&=\left(
\begin{array}{ccccc}
 0 & \sqrt{\frac{\sqrt{5}}{8}+\frac{5}{8}} & \sqrt{\frac{5}{8}-\frac{\sqrt{5}}{8}} & -\sqrt{\frac{5}{8}-\frac{\sqrt{5}}{8}} & -\sqrt{\frac{\sqrt{5}}{8}+\frac{5}{8}} \\
 1 & \frac{1}{4} \left(\sqrt{5}-1\right) & \frac{1}{4} \left(-\sqrt{5}-1\right) & \frac{1}{4} \left(-\sqrt{5}-1\right) & \frac{1}{4} \left(\sqrt{5}-1\right) \\
 1 & 1 & 1 & 1 & 1 \\
\end{array}
\right).
    \end{aligned}
\end{equation}
This represents using a single measurement given by the POVM whose operators are of the form $A^1_{i1}\mu_x+A^1_{i2}\mu_z+A^1_{i3}\mathds{1}_{2 \times 2}$ for $i=1,...,5$,
and states given by $\frac{1}{2}\left(B^1_{1j}\mu_x+B^1_{2j}\mu_z+B^1_{3j}\mathds{1}_{2 \times 2}\right)$ for $j=1,...,5$.

An ontological model $\mathbf{C}^1=\mathbf{A}^2\mathbf{B}^2$ of ontic dimension 4 is given by,
\begin{equation}
\begin{aligned}
\mathbf{A}^2&=\left(
\begin{array}{cccc}
 0 & 1-\frac{1}{\sqrt{5}} & 0 & \frac{1}{10} \left(5-\sqrt{5}\right) \\
 \frac{1}{\sqrt{5}} & \frac{1}{\sqrt{5}} & 0 & 0 \\
 1-\frac{1}{\sqrt{5}} & 0 & \frac{1}{10} \left(5-\sqrt{5}\right) & 0 \\
 0 & 0 & 1-\frac{1}{\sqrt{5}} & \frac{1}{10} \left(3 \sqrt{5}-5\right) \\
 0 & 0 & \frac{1}{10} \left(3 \sqrt{5}-5\right) & 1-\frac{1}{\sqrt{5}} \\
\end{array}
\right),\\
\mathbf{B}^2&=\frac{1}{20}\left(
\begin{array}{ccccc}
 20 \left(\frac{2}{\sqrt{5}}-\frac{1}{2}\right) & 5-\sqrt{5} & 0 & 2 \left(3 \sqrt{5}-5\right) & 3 \left(5-\sqrt{5}\right) \\
 20 \left(\frac{2}{\sqrt{5}}-\frac{1}{2}\right) & 3 \left(5-\sqrt{5}\right) & 2 \left(3 \sqrt{5}-5\right) & 0 & 5-\sqrt{5} \\
 20 \left(1-\frac{2}{\sqrt{5}}\right) & 0 & 2 \left(5-\sqrt{5}\right) & 20 \left(1-\frac{1}{\sqrt{5}}\right) & 4 \sqrt{5} \\
 20 \left(1-\frac{2}{\sqrt{5}}\right) & 4 \sqrt{5} & 20 \left(1-\frac{1}{\sqrt{5}}\right) & 2 \left(5-\sqrt{5}\right) & 0 \\
\end{array}
\right).
\end{aligned}
\end{equation}

\end{widetext}
\end{document}